\documentclass[conference]{IEEEtran}
\usepackage{etoolbox}
\newtoggle{longversion}
\toggletrue{longversion}
\newtoggle{isitlongversion}

\usepackage{url}

\usepackage{hyperref}
\hypersetup{
  colorlinks   = true, 
  urlcolor     = black!50!red, 
  linkcolor    = black!50!brown, 
  citecolor   = black!50!brown 
}
\usepackage{pgfplots}
\pgfplotsset{compat=1.15}
\usepackage{amsmath}
\usepackage{amssymb}
\usepackage{amsthm}
\usepackage{verbatim}
\usepackage{bm}
\usepackage{color,graphicx,xcolor}
\usepackage{mdwtab}
\usepackage{subfigure}
\usepackage{mathtools,tikz}
\mathtoolsset{showonlyrefs} 
\usepackage{hhline}
\usepackage{multirow}
\usepackage{pdfpages}
\usepackage{enumitem}

\iftoggle{longversion}{
\onecolumn
\newcommand{\longonly}[1]{#1} 
\newcommand{\isitlongonly}[1]{} 
\newcommand{\isitshortonly}[1]{} 
\newcommand{\shortonly}[1]{} 
}
{
\newcommand{\longonly}[1]{} 
\newcommand{\shortonly}[1]{#1} 

\iftoggle{isitlongversion}
{
\onecolumn
\newcommand{\isitlongonly}[1]{#1} 
\newcommand{\isitshortonly}[1]{} 

}
{
\twocolumn
\setlength{\abovedisplayskip}{0.3em}
\setlength{\belowdisplayskip}{0.3pt}
\setlength{\abovedisplayshortskip}{0pt}
\setlength{\belowdisplayshortskip}{0pt}
\newcommand{\isitlongonly}[1]{} 
\newcommand{\isitshortonly}[1]{#1} 

\usepackage[skip=0.4em]{caption}
\DeclareCaptionFormat{myformat}{\fontsize{8}{9}\selectfont#1#2#3}
\captionsetup{format=myformat}
\setlength{\belowcaptionskip}{-1.6em}
}
}

\IEEEoverridecommandlockouts
\usepackage{cite}
\usepackage{amsmath,amssymb,amsfonts,amsthm,xspace}
\usepackage{algorithmic}
\usepackage{graphicx}
\usepackage{textcomp}

\usetikzlibrary{arrows,shapes.geometric}
\allowdisplaybreaks
\bibliographystyle{ieeetr}

\newtheorem{remark}{Remark}

\newtheorem{thm}{Theorem}
\newtheorem{defn}{Definition}
\newtheorem{example}{Example}
\newtheorem{lemma}[thm]{Lemma}
\newtheorem{corollary}[thm]{Corollary}
\newtheorem{prop}[thm]{Proposition}
\newcommand{\subparagraph}[1]{\par {\em\underline{#1:}}}

\usepackage{pgffor}
\foreach \x in {a,...,z}{%
\expandafter\xdef\csname vec\x \endcsname{\noexpand\ensuremath{\noexpand\bm{\x}}}
}

\foreach \x in {A,...,Z}{%
\expandafter\xdef\csname vec\x \endcsname{\noexpand\ensuremath{\noexpand\bm{\x}}}
}

\foreach \x in {A,...,Z}{%
\expandafter\xdef\csname c\x \endcsname{\noexpand\ensuremath{\noexpand\mathcal{\x}}}
}

\foreach \x in {A,...,Z}{%
\expandafter\xdef\csname bb\x \endcsname{\noexpand\ensuremath{\noexpand\mathbb{\x}}}
}

\newcommand{\defineqq}{\ensuremath{\stackrel{\text{\tiny def}}{=}}}

\usetikzlibrary{shapes,arrows,positioning,decorations.pathreplacing,calc}

\def\msg{\ensuremath{m}} 
\def\msgh{\ensuremath{\hat{\msg}}} 
\def\nummsg{\mbox{$N$}} 












\def\auth{\ensuremath{\mathrm{auth}}}



\newcommand{\mch}{\ensuremath{W_{Z|X,Y}}\xspace}

\newcommand{\indep}{\raisebox{0.05em}{\rotatebox[origin=c]{90}{$\models$}}\xspace}

\newcommand{\na}{\ensuremath{\text{hon}}\xspace}
\newcommand{\malone}{\ensuremath{\text{mal \one}}}
\newcommand{\maltwo}{\ensuremath{\text{mal \two}}}

\newcommand{\spoofable}{\text{spoofable}\xspace}

\newcommand{\spoofability}{\text{spoofability}\xspace}

\newcommand{\inb}[1]{\left\{#1\right\}}
\newcommand{\inp}[1]{\left(#1\right)}
\newcommand{\insq}[1]{\left[#1\right]}

\def\reliable{\ensuremath{\mathrm{reliable}}}
\def\MAC{\ensuremath{\mathrm{MAC}}}
\def\AVMAC{\ensuremath{\mathrm{AV-MAC}}}

\newcommand{\mo}{\ensuremath{m_{\one}}\xspace}
\newcommand{\mt}{\ensuremath{m_{\two}}\xspace}
\newcommand\independent{\protect\mathpalette{\protect\independenT}{\perp}}
\def\independenT#1#2{\mathrel{\rlap{$#1#2$}\mkern2mu{#1#2}}}

\def\one{\ensuremath{\mathsf{A}}\xspace} 
\def\two{\ensuremath{\mathsf{B}}\xspace} 

\def\oneb{\ensuremath{\mathbf{a}}\xspace} 
\def\twob{\ensuremath{\mathbf{b}}\xspace} 

\newcommand{\red}[1]{{\textcolor{red}{#1}}}

\def\BibTeX{{\rm B\kern-.05em{\sc i\kern-.025em b}\kern-.08em
	T\kern-.1667em\lower.7ex\hbox{E}\kern-.125emX}}
\begin{document}

\title{Communication With Adversary Identification in Byzantine Multiple Access Channels
\thanks{N. Sangwan and V. Prabhakaran acknowledge support of
the Department of Atomic Energy, Government of India,
under project no. RTI4001. N. Sangwan’s work was
additionally supported by the Tata Consultancy Services (TCS)
foundation through the TCS Research Scholar Program. Work
of B. K. Dey was supported in part by Bharti Centre for
Communication in IIT Bombay. V. Prabhakaran's work was also
supported by the Science \& Engineering Research Board, India
through project MTR/2020/000308.}}

\author{\IEEEauthorblockN{Neha Sangwan}
\IEEEauthorblockA{TIFR, India}
\and
\IEEEauthorblockN{Mayank Bakshi}
\IEEEauthorblockA{Huawei, Hong Kong}
\and
\IEEEauthorblockN{Bikash Kumar Dey}
\IEEEauthorblockA{IIT Bombay, India}
\and
\IEEEauthorblockN{Vinod M. Prabhakaran}
\IEEEauthorblockA{TIFR, India}
}


\maketitle

\isitshortonly{\vspace{-1in}}
\begin{abstract}
We introduce the problem of determining the identity of a byzantine user (internal adversary) in a communication system. We consider a two-user discrete memoryless multiple access channel where either user may deviate from the prescribed behaviour. Owing to the noisy nature of the channel, it may be overly restrictive to attempt to detect all deviations. In our formulation, we only require detecting deviations which impede the decoding of the non-deviating user's message. When neither user deviates, correct decoding is required. When one user deviates, the decoder must  either output a pair of messages of which the message of the {non-deviating} user is correct or identify the deviating user. The users and the receiver do not share any randomness. The results include a characterization of the set of channels where communication is feasible, and an inner and outer bound on the capacity region. 

\end{abstract}


\section{Introduction}\label{sec:intro}
In many modern wireless communication applications (e.g., the Internet of Things), devices with varying levels of security are connected over a shared communication medium.  Compromised devices may allow an adversary to disrupt the communication of other devices. This motivates the question we study in this paper -- is it possible to design a communication system in which malicious actions by compromised devices can be detected so that such devices can be isolated or taken offline?

We consider a two-user Multiple Access Channel (MAC) where either user may deviate from the prescribed behaviour. Owing to the noisy nature of the channel, it may be overly restrictive to attempt to detect all deviations. Indeed, it suffices to detect only such deviations which impede the correct decoding of the other user’s message. We formulate a communication problem for the MAC with the following decoding guarantee (Fig.~\ref{fig:authcomMAC}): the decoder outputs either a pair of messages {\em or} declares one of the users to be deviating. When both users are honest, the decoder must output the correct message pair with high probability (w.h.p.); when exactly one user deviates, w.h.p., the decoder must either correctly detect the deviating user or output a message pair of which the message of the other (honest) user is correct (see Section~\ref{sec:model}). No guarantees are made if both users deviate. Thus, we require that a deviating user cannot cause a decoding error for the other user without getting caught.
Throughout this paper, we assume that encoders and decoder do not share any randomness.

For comparison, consider the stronger guarantee  of {\em reliable communication} where the decoder outputs a pair of messages such that the message(s) of non-deviating user(s) is correct w.h.p. \cite{NehaBDPITW19}. While achieving this clearly satisfies the requirements of the present model, it might be too demanding. For example, in a binary erasure MAC\footnote{The binary erasure MAC has binary inputs $X, Y$ and outputs $Z=X+Y$ where $+$ is real addition.}~\cite[pg.~83]{YHKEG}, a deviating user can run an independent copy of the honest user's encoder and inject a spurious message which will appear equally plausible to the decoder as the honest user's actual message (also see section~\ref{sec:comparison}). Thus, reliable communication is impossible over the binary erasure MAC. However, our results, when specialized to this channel, will show that communication with adversary identification is possible. 
That is, under our coding scheme it is impossible for a byzantine user to mount a successful attack without getting caught. In fact, for the binary erasure MAC, we show that the capacity region of communication with adversary identification is the same as the (non-adversarial) capacity region of the binary erasure MAC (see Section~\ref{sec:example_tightness}).

{Another decoding guarantee that is weaker than} the present model allows the decoder  to declare adversarial interference (in the presence of malicious user(s)) without identifying the adversary. We called this {\em authenticated communication} and characterized its  feasibility condition and capacity region in\cite{NehaBDPISIT19}. The feasibility condition is called {\em overwritability}, a notion which was introduced by Kosut and Kliewer for network coding \cite{KK2} and AVCs \cite{KosutKITW18}. 

{The present model lies between the models for reliable communication and authenticated communication in a byzantine MAC. However, obtaining results here appears to be significantly more challenging.}
On the one hand, for reliable communication over the two-user MAC, we may treat the channel from each user to the decoder as an arbitrarily varying channel (AVC) \cite{BBT60} with the other user's input as state. Hence, the users may send their messages using the corresponding AVC codes~\cite{CsiszarN88}. Thus, the rectangular region defined by the capacities of the two AVCs is achievable\footnote{In fact, this rectangular region defined by the capacities of the two AVCs is the reliable communication capacity region since a deviating user can act exactly like the adversary in the AVC of the other user. Note that the AVCs for binary erasure MAC have zero AVC capacity.\label{ftn:reliable}}.  
On the other hand, for authenticated communication over the two-user MAC, our achievable strategy in \cite{NehaBDPISIT19} involved an unauthenticated communication phase using a non-adversarial MAC code followed by separate (short) authentication phases for each user’s decoded message. Failure to authenticate a user’s message implies the presence of an adversary (though not its identity since the user whose message is being authenticated might have deviated to cause the authentication failure).
In both 
the cases above, the decoder, when it accounts for the byzantine nature of the users, deals with the users one at a time. 
However, similar decoding strategies seem to be insufficient for adversary identification.
Determining the identity of a deviating user requires dealing with the byzantine nature of both users simultaneously, thereby complicating the decoder design (see Section~\ref{sec:feasibility}).

We characterize the infeasibility of communication with adversary identification using a channel condition we call {\em \spoofability} (see Fig.~\ref{fig:spoof1}). 
It allows a deviating user to mount an attack which can be confused with an attack of the other user and which  introduces a spurious message that can be confused with the actual message of the (other) honest user.
When the channel is not \spoofable, a deterministic code in the style of \cite{CsiszarN88} can provide positive rates to both the users (Theorem~\ref{thm:main_result}). Our outer bound is in terms of the capacity of an Arbitrarily Varying-MAC{\cite{Jahn81}}~(Theorem~\ref{eq:outer_bound}). Further, a comparison is drawn between \spoofability and the feasibility conditions for the reliable communication and authenticated communication models.

\paragraph*{Related works} There is a long line of works in the information theory literature on communication in the  presence of external adversaries (see \cite{survey} for a survey).
Communication in systems with byzantine users has also received some attention \cite{Jaggi7,KTong,KK2,Yener,NehaBDPITW19,NehaBDPISIT19}. 
Message authentication codes where the users have pre-shared keys and
communicate over noiseless channels have been extensively
studied~\cite{SimmonsCRYPTO84,MaurerIT00,Gungor16}. Message authentication over noisy
channels has also been considered~\cite{LaiEPIT09,Jiang14,Gungor16,TuLIT18}.
There has also been some recent work on authenticated communication over channels in which an external adversary may be present; in the presence of the adversary, the decoder may declare adversarial interference instead of decoding  \cite{KosutKITW18,BKKGYu,Graves16,BeemerCNS20} (In a 2-user MAC model in \cite{BeemerCNS20} when declaring the presence of an adversary, the decoder is required to decode at least one user's message.). 
These models are different from the present model, where, when declaring the presence of an (internal) adversary, we also require the decoder to output its identity.

\section{System Model}\label{sec:model}
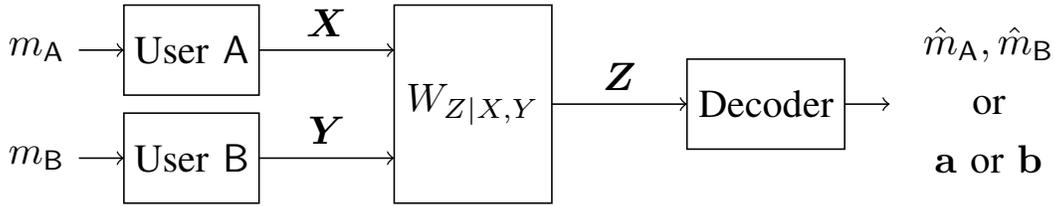
\begin{figure}[h]\centering
\resizebox{0.8\columnwidth}{!}{\begin{tikzpicture}[scale=0.4]
	\draw (2,0) rectangle ++(3,2) node[pos=.5]{ User \two};
	\draw (2,2.4) rectangle ++(3,2) node[pos=.5]{ User \one};
	\draw (10-2,0) rectangle ++(3.5,4.4) node[pos=.5]{ $W_{Z|X,Y}$};
	\draw (16.5-2,1.2) rectangle ++(3.5,2) node[pos=.5]{ Decoder};
	\draw[->] (1,1) node[anchor=east]{ $\msg_{\two}$} -- ++ (1,0) ;
	\draw[->] (5,1) -- node[above] { $\vecY$} ++ (3,0);
	
	\draw[->] (1,3.4) node[anchor=east]{ $\msg_{\one}$} -- ++ (1,0) ;
	\draw[->] (5,3.4) -- node[above] { $\vecX$} ++ (3,0);

	\draw[->] (13.5-2,2.2) -- node[above] { $\vecZ$} ++ (3,0);
	\draw[->] (19-1,2.2) -- ++ (1,0) node[anchor=west]{\begin{array}{c} \msgh_{\one},\msgh_{\two}\\ \text{or}\\ \oneb\text{ or }\twob\end{array}};
\end{tikzpicture}}
\caption{MAC with byzantine users: Reliable decoding of both the messages is required when neither user deviates. When a user (say, user \two) deviates, the decoded message should either be correct for the honest user or the decoder should identify the deviating user (by outputting $\twob$) with high probability.}\label{fig:authcomMAC}
\end{figure}
\vspace{0.2 cm}
\paragraph*{Notation}
For a set $\cS\in \bbR^{k}$, let $\mathsf{conv}(\cS)$ and $\mathsf{int}(\cS)$ denote its convex closure and interior respectively. {Let $\vecx\in \cX^n$ (resp. $\vecX$ distributed over $\cX^n$) denote the $n$-length vectors (resp. $n$-length random vectors).}  For a distribution $P_X$ on $\cX$, let $T^n_{X}$ denote the set of all $n$-length sequences $\vecx\in \cX^n$ with empirical distribution $P_X$. {$\textsf{Unif}(\cA)$ denotes the uniform distribution over the set $\cA$.} For a two-user MAC $W(.|.,.)$, we will use $\cC_{\MAC}(W)$ (or simply $\cC_{\MAC}$) to denote its (non-adversarial) capacity region. We will use $W^n$ to denote the $n$-fold product of the channel $W$.

Consider a two-user discrete memoryless Multiple Access Channel (MAC)  as shown in Fig.~\ref{fig:authcomMAC}. User $\one$ has input alphabet $\cX$ and user $\two$ has input alphabet $\cY$. The output alphabet of the channel is $\cZ$. The sets $\cX$, $\cY$ and $\cZ$ are finite. We study communication in a MAC where either user may deviate from the communication protocol by sending any sequence of its choice from its input alphabet. {While doing so, the deviating user is unaware of other user's input}. We will refer to this channel model  as a {\em MAC with byzantine users}. 
\begin{defn}[Adversary identifying code]\label{defn:code}
An $(\nummsg_{\one},\nummsg_{\two},n)$  {\em deterministic adversary identifying code} for a MAC with byzantine users consists of the following: 
\begin{enumerate}[label=(\roman*)]
\item Two message sets, $\mathcal{M}_i = \{1,\ldots,\nummsg_i\}$, $i=\one,\two$,
\item Two deterministic encoders, $f_{\one}^{(n)}:\mathcal{M}_{\one}\rightarrow \mathcal{X}^n$ and $f_{\two}^{(n)}:\mathcal{M}_{\two}\rightarrow\mathcal{Y}^n$, and
\item A deterministic decoder, $\phi^{(n)}:\mathcal{Z}^n\rightarrow(\mathcal{M}_{\one}\times\mathcal{M}_{\two})\cup\{\oneb,\, \twob\}.$ 
\end{enumerate}
\end{defn}
The output symbol \oneb indicates that user \one is adversarial. Similarly, \twob indicates that user \two is adversarial. The {\em average probability of error} $P_{e}(f^{(n)}_{\one},f^{(n)}_{\two},\phi^{(n)})$ is the maximum of the average probabilities of error in the following three cases: (1) both users are honest, (2) user \one is adversarial, and (3) user \two is adversarial. When both users are honest, the decoded messages should be correct with high probability (w.h.p.). Let $\cE_{\mo,\mt} = \inb{\vecz:\phi^{(n)}(\vecz)\neq(\mo, \mt)}$ denote the corresponding error event.  The average error {probability} when both users are honest is
\shortonly{
\begin{align*}
&P_{e,\na}\hspace{-0.25em} \defineqq\frac{1}{N_{\one}\cdot N_{\two}} \sum_{\substack{(\mo, \mt)\in\\ \mathcal{M}_{\one}\times\mathcal{M}_{\two}}}W^n\inp{\cE_{\mo,\mt}|f_{\one}^{(n)}(\mo), f_{\two}^{(n)}(\mt)}.
\end{align*}
When user \one is adversarial, the decoder's output, w.h.p., should either be the symbol $\oneb$ or a pair of messages of which the message of user \two is correct. The error event $\cE_{\mt} \defineqq \inb{\vecz:\phi^{(n)}(\vecz)\notin\inp{\cM_{\one}\times\{\mt\}}\cup{\{\oneb\}}}$. The average probability of error when user \one is adversarial is 
\begin{align*}
&P_{e,\malone} \defineqq \max_{\vecx\in\cX^n} \left(\frac{1}{N_{\two}}\sum_{m_{\two}\in \mathcal{M}_{\two}}W^n\inp{\cE_{\mt}|\vecx, f_{\two}^{(n)}(\mt)}\right).
\end{align*} Similarly, for $\cE_{\mo} \defineqq \inb{\vecz:\phi^{(n)}(\vecz)\notin\inp{\{\mo\}\times\cM_{\two}}\cup{\{\twob\}}}$, the average probability of error when user \two is adversarial is 
\begin{align*}
&P_{e,\maltwo} \defineqq \max_{\vecy\in\cY^n}\left(\frac{1}{N_{\one}}\sum_{m_{\one}\in \mathcal{M}_{\one}}W^n\inp{\cE_{\mo}|f_{\one}^{(n)}(\mo), \vecy}\right).
\end{align*}}
\longonly{
\begin{align}
&P_{e,\na}\hspace{-0.25em} \defineqq\frac{1}{N_{\one}\cdot N_{\two}} \sum_{\substack{(\mo, \mt)\in\\ \mathcal{M}_{\one}\times\mathcal{M}_{\two}}}W^n\inp{\cE_{\mo,\mt}|f_{\one}^{(n)}(\mo), f_{\two}^{(n)}(\mt)}.\label{eq:na}
\end{align}
When user \one is adversarial, the decoder's output, w.h.p., should either be the symbol $\oneb$ or a pair of messages of which the message of user \two is correct. The error event $\cE_{\mt} \defineqq \inb{\vecz:\phi^{(n)}(\vecz)\notin\inp{\cM_{\one}\times\{\mt\}}\cup{\{\oneb\}}}$. The average probability of error when user \one is adversarial is 
\begin{align}
&P_{e,\malone} \defineqq \max_{\vecx\in\cX^n} \left(\frac{1}{N_{\two}}\sum_{m_{\two}\in \mathcal{M}_{\two}}W^n\inp{\cE_{\mt}|\vecx, f_{\two}^{(n)}(\mt)}\right).\label{eq:mal1}
\end{align} Similarly, for $\cE_{\mo} \defineqq \inb{\vecz:\phi^{(n)}(\vecz)\notin\inp{\{\mo\}\times\cM_{\two}}\cup{\{\twob\}}}$, the average probability of error when user \two is adversarial is 
\begin{align}
&P_{e,\maltwo} \defineqq \max_{\vecy\in\cY^n}\left(\frac{1}{N_{\one}}\sum_{m_{\one}\in \mathcal{M}_{\one}}W^n\inp{\cE_{\mo}|f_{\one}^{(n)}(\mo), \vecy}\right).\label{eq:mal2}
\end{align}}
We define the \emph{average probability of error} as
\begin{align*}
P_{e}(f_{\one}^{(n)},f_{\two}^{(n)},\phi^{(n)})\defineqq \max{\inb{P_{e,\na},P_{e,\malone},P_{e,\maltwo}}}.
\end{align*}
\longonly{{Note that the probability of error under a randomized attack is the weighted average of the probabilities of errors under the different deterministic attacks and hence maximized by a deterministic attack. Thus, $P_{e,\maltwo}$ is an upper bound on the probability of error for any attack by user \two, deterministic or random}} \shortonly{Note that $P_{e,\maltwo}$ is clearly an upper bound on the probability of error for any attack by user \two, deterministic or random. } Similarly, $P_{e,\malone}$ is an upper bound for any attack by user \one.  Thus, the probability of error under deterministic attacks is same as that under randomized attacks. 

\begin{figure*}
\centering
\subfigure[]{
\begin{tikzpicture}[scale=0.5]
	\draw (1.7-0.8-0.3,2.9-0.1) rectangle ++(2.3,1.2) node[pos=.5]{\footnotesize ${Q^n_{Y|\tilde{X},\tilde{Y}}}$};
	\draw (4.1,4) rectangle ++(1.5,1.5) node[pos=.5]{ $W^n$};	
	\draw[ ->] (0.4-0.3,3.0) node[anchor=east]{ $\tilde{\vecy}\phantom{'}$} -- ++ (0.5,0) ;
	\draw[ ->] (0.4-0.3,3.7+0.1) node[anchor=east]{ $\tilde{\vecx}\phantom{'}$} -- ++ (0.5,0) ;
	\draw[->] (3.4,5.1) -- ++(0.7,0);
	\draw[->] (3.4,4.4) -- ++(0.7,0);
	\draw[-] (3.4,4.4) -- ++ (0,-1);
	\draw[-] (3.4,5.1) -- ++ (0,1);
	\draw[-] (2.9, 3.4)  --  ++(0.5,0);
	\draw[-] (0.4-0.3,6.1) node[anchor=east]{ $\vecx'$} -- ++ (3.3,0);
	\draw[->] (5.4+1-0.3-0.5,4.75) --  ++ (0.5,0)node[anchor= west] { $\vecz$};
	\end{tikzpicture}\label{fig:spoof1a} 
}
\subfigure[]{
\begin{tikzpicture}[scale=0.5]
	\draw (1.7-0.8-0.3,2.9-0.1) rectangle ++(2.3,1.2) node[pos=.5]{\footnotesize ${Q^n_{Y|\tilde{X},\tilde{Y}}}$};
	\draw (4.1,4) rectangle ++(1.5,1.5) node[pos=.5]{ $W^n$};	
	\draw[ ->] (0.4-0.3,3.0) node[anchor=east]{ $\tilde{\vecy}\phantom{'}$} -- ++ (0.5,0) ;
	\draw[ ->] (0.4-0.3,3.7+0.1) node[anchor=east]{ $\vecx'$} -- ++ (0.5,0) ;
	\draw[->] (3.4,5.1) -- ++(0.7,0);
	\draw[->] (3.4,4.4) -- ++(0.7,0);
	\draw[-] (3.4,4.4) -- ++ (0,-1);
	\draw[-] (3.4,5.1) -- ++ (0,1);
	\draw[-] (2.9, 3.4)  --  ++(0.5,0);
	\draw[-] (0.4-0.3,6.1) node[anchor=east]{ $\tilde{\vecx}\phantom{'}$} -- ++ (3.3,0);
	\draw[->] (5.4+1-0.3-0.5,4.75) --  ++ (0.5,0)node[anchor= west] { $\vecz$}; 
\end{tikzpicture}\label{fig:spoof1b} 
}
\subfigure[]{
\begin{tikzpicture}[scale=0.5]
	\draw (1.7-1.1,5.6-0.1) rectangle ++(2.5,1.2) node[pos=.5]{\footnotesize ${Q^n_{X|\tilde{X},X'}}$};
	\draw (4.1,4) rectangle ++(1.5,1.5) node[pos=.5]{ $W^n$};	
	\draw[ ->] (0.1,5.8-0.1) node[anchor=east]{ ${\vecx'}$} -- ++ (0.5,0) ;
	\draw[ ->] (0.1,6.4+0.1) node[anchor=east]{ $\tilde{\vecx}\phantom{'}$} -- ++ (0.5,0) ;
	\draw[->] (3.4,5.1) -- ++(0.7,0);
	\draw[->] (3.4,4.4) -- ++(0.7,0);
	\draw[-] (3.4,4.4) -- ++ (0,-1);
	\draw[-] (3.4,5.1) -- ++ (0,1);
	\draw[-] (3.1, 6.1)  --  ++(0.3,0);
	\draw[-] (0.1,3.4) node[anchor=east]{ $\tilde{\vecy}\phantom{'}$} -- ++ (3.3,0);
	\draw[->] (5.4+1-0.3-0.5,4.75) --  ++ (0.5,0)node[anchor= west] { $\vecz$};
\end{tikzpicture}\label{fig:spoof1c} 
}\caption{When \eqref{eq:spoof1} holds, for $(\vecx', \tilde{\vecx}, \tilde{\vecy})\in \cX^n\times\cX^n\times\cY^n$, the output distributions in the three cases above will be the same.} \label{fig:spoof1}
\end{figure*}

\longonly{
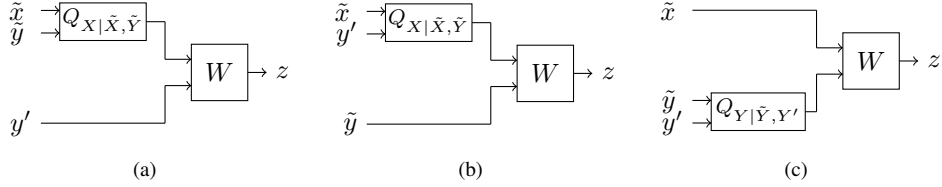
\begin{figure*}
\centering
\subfigure[]{
\begin{tikzpicture}[scale=0.5]
	\draw (1.7-0.8-0.3,5.6) rectangle ++(2.3,1) node[pos=.5]{\footnotesize ${Q_{X|\tilde{X},\tilde{Y}}}$};
	\draw (4.1,4) rectangle ++(1.5,1.5) node[pos=.5]{ $W$};	
	\draw[ ->] (0.1,5.8) node[anchor=east]{ $\tilde{y}\phantom{'}$} -- ++ (0.5,0) ;
	\draw[ ->] (0.1,6.4) node[anchor=east]{ $\tilde{x}\phantom{'}$} -- ++ (0.5,0) ;
	\draw[->] (3.4,5.1) -- ++(0.7,0);
	\draw[->] (3.4,4.4) -- ++(0.7,0);
	\draw[-] (3.4,4.4) -- ++ (0,-1);
	\draw[-] (3.4,5.1) -- ++ (0,1);
	\draw[-] (2.9, 6.1)  --  ++(0.5,0);
	\draw[-] (0.1,3.4) node[anchor=east]{ $y'$} -- ++ (3.3,0);
	\draw[->] (5.4+1-0.3-0.5,4.75) --  ++ (0.5,0)node[anchor= west] { $z$};
\end{tikzpicture}\label{fig:spoof2a} 
}
\subfigure[]{
\begin{tikzpicture}[scale=0.5]
	\draw (1.7-0.8-0.3,5.6) rectangle ++(2.3,1) node[pos=.5]{\footnotesize ${Q_{X|\tilde{X},\tilde{Y}}}$};
	\draw (4.1,4) rectangle ++(1.5,1.5) node[pos=.5]{ $W$};	
	\draw[ ->] (0.1,5.8) node[anchor=east]{ ${y'}$} -- ++ (0.5,0) ;
	\draw[ ->] (0.1,6.4) node[anchor=east]{ $\tilde{x}\phantom{'}$} -- ++ (0.5,0) ;
	\draw[->] (3.4,5.1) -- ++(0.7,0);
	\draw[->] (3.4,4.4) -- ++(0.7,0);
	\draw[-] (3.4,4.4) -- ++ (0,-1);
	\draw[-] (3.4,5.1) -- ++ (0,1);
	\draw[-] (2.9, 6.1)  --  ++(0.5,0);
	\draw[-] (0.1,3.4) node[anchor=east]{ $\tilde{y}$} -- ++ (3.3,0);
	\draw[->] (5.4+1-0.3-0.5,4.75) --  ++ (0.5,0)node[anchor= west] { $z$};
\end{tikzpicture}\label{fig:spoof2b} 
}
\subfigure[]{
\begin{tikzpicture}[scale=0.5]
	\draw (1.7-1.1,2.9) rectangle ++(2.5,1) node[pos=.5]{\footnotesize ${Q_{Y|\tilde{Y},Y'}}$};
	\draw (4.1,4) rectangle ++(1.5,1.5) node[pos=.5]{ $W$};	
	\draw[ ->] (0.1,3.1) node[anchor=east]{ ${y'}$} -- ++ (0.5,0) ;
	\draw[ ->] (0.1,3.7) node[anchor=east]{ $\tilde{y}\phantom{'}$} -- ++ (0.5,0) ;
	\draw[->] (3.4,5.1) -- ++(0.7,0);
	\draw[->] (3.4,4.4) -- ++(0.7,0);
	\draw[-] (3.4,4.4) -- ++ (0,-1);
	\draw[-] (3.4,5.1) -- ++ (0,1);
	\draw[-] (3.1, 3.4)  --  ++(0.3,0);
	\draw[-] (0.1,6.1) node[anchor=east]{ $\tilde{x}\phantom{'}$} -- ++ (3.3,0);
	\draw[->] (5.4+1-0.3-0.5,4.75) --  ++ (0.5,0)node[anchor= west] { $z$}; 
\end{tikzpicture}\label{fig:spoof2c} 
}
\caption{A MAC is \two-\spoofable if for each $\tilde{x},\, \tilde{y},\, y',  \,z$ the conditional output distributions $P(z|\tilde{x},\tilde{y},y')$ in \ref{fig:spoof2a}, \ref{fig:spoof2b} and \ref{fig:spoof2c} are the same.} \label{fig:spoof2}
\end{figure*}
}

\begin{defn}[Achievable rate pair for and capacity region of communication with adversary identification]
 $(R_{\one},R_{\two})$ is an {\em achievable rate pair for communication with adversary identification} if there exists a sequence of $(\lfloor2^{nR_{\one}}\rfloor,\lfloor2^{nR_{\two}}\rfloor,n)$ adversary identifying codes $\{f_{\one}^{(n)},f_{\two}^{(n)},\phi^{(n)}\}_{n=1}^\infty$ such that $\lim_{n\rightarrow\infty}P_{e}(f_{\one}^{(n)},f_{\two}^{(n)},\phi^{(n)})\rightarrow0.$ The {\em  capacity region of communication with adversary identification} $\cC$ is the closure of the set of all such achievable rate pairs. Let $C^{}_{\one}$ (resp. $C^{}_{ \two}$) be defined as the supremum of the set $\{R_{\one}: (R_{\one}, R_{\two})\in \cC \text{ for some }R_{\two}\}$ (resp. $\{R_{\two}: (R_{\one}, R_{\two})\in \cC \text{ for some }R_{\one}\}$).
\end{defn}

\section{Feasibility of communication with adversary identification}\label{sec:feasibility}
\begin{defn}\label{defn:spoof}
A MAC \mch is \one-{\em \spoofable} if there exist distributions ${Q_{Y|\tilde{X},\tilde{Y}}}$ and ${Q_{X|\tilde{X},X'}}$ such that $\forall\,x', \,\tilde{x},\, \tilde{y},\, z,$
\begin{align}\label{eq:spoof1}
&\sum_{y}Q_{Y|\tilde{X},\tilde{Y}}(y|\tilde{x},\tilde{y})\mch(z|x',y) \nonumber\\
&= \sum_{y}Q_{Y|\tilde{X},\tilde{Y}}(y|x',\tilde{y})\mch(z|\tilde{x},y) \nonumber\\
& = \sum_{x}Q_{X|\tilde{X},X'}(x|\tilde{x},x')\mch(z|x,{\tilde{y}}).
\end{align}

A MAC \mch is \two-{\em \spoofable} \longonly{({see Fig.~\ref{fig:spoof2}.})} if there exist distributions ${Q_{X|\tilde{X},\tilde{Y}}}$ and ${Q_{Y|\tilde{Y},Y'}}$ such that $\forall\,\tilde{x},\, \tilde{y}, \,y', \, z,$
\begin{align}\label{eq:spoof2}
&\sum_{x}Q_{X|\tilde{X},\tilde{Y}}(x|\tilde{x},\tilde{y})\mch(z|x,y') \nonumber\\
&= \sum_{x}Q_{X|\tilde{X},\tilde{Y}}(x|\tilde{x},y')\mch(z|x,\tilde{y}) \nonumber\\
& = \sum_{y}Q_{Y|\tilde{Y},Y'}(y|\tilde{y},y')\mch(z|\tilde{x},y).
\end{align}
A MAC is {\em \spoofable} if it is either \one- or \two-\spoofable. 

\end{defn}

{
When \eqref{eq:spoof1} holds, for a triple $(\vecx', \tilde{\vecx}, \tilde{\vecy})\in \cX^n\times\cX^n\times\cY^n$, the output distributions in the following three cases are the same (see Fig.~\ref{fig:spoof1}.): (a) User \one sends $\vecx'$ and user \two sends $\vecY\sim Q^n_{Y|\tilde{X},\tilde{Y}}(.|\tilde{\vecx},\tilde{\vecy})$, i.e., $\vecY$ is distributed as the output of the memoryless channel $Q_{Y|\tilde{X},\tilde{Y}}$ on inputs $\tilde{\vecx}$ and $\tilde{\vecy}$; (b) User \one sends $\tilde{\vecx}$ and user \two sends $\vecY\sim Q^n_{Y|\tilde{X},\tilde{Y}}(.|\vecx',\tilde{\vecy})$; (c) User \two sends $\tilde{\vecy}$ and user \one sends $\vecX\sim Q^n_{X|\tilde{X},X'}(.|\tilde{\vecx},\vecx')$.
 Hence, for a given code $(f_{\one}, f_{\two}, \phi)$ and independent $M_{\one}\sim\textsf{Unif}(\cM_{\one})$, $M'_{\one}\sim\textsf{Unif}(\cM_{\one})$ and $M_{\two}\sim\textsf{Unif}(\cM_{\two})$, 
 the output distributions in the following three cases are the same: (a) User \one is honest and sends $f_{\one}(M_{\one})$ and user \two is adversarial and attacks with $\vecY\sim Q^n_{Y|\tilde{X},\tilde{Y}}(.|f_{\one}(M_{\one}'),f_{\two}(M_{\two}))$; (b) User \one is honest and sends $f_{\one}(M'_{\one})$ and user \two is adversarial and attacks with $\vecY\sim Q^n_{Y|\tilde{X},\tilde{Y}}(.|f_{\one}(M_{\one}),f_{\two}(M_{\two})$; (c) User \two is honest and sends $f_{\two}(M_{\two})$ and user \one is adversarial and attacks with $\vecX\sim Q^n_{X|\tilde{X},X'}(.|f_{\one}(M_{\one}),f_{\one}(M_{\one}'))$. 
Thus, the decoder cannot determine the adversarial user reliably, nor can it differentiate between $M_{\one}$ and $M'_{\one}$ as the input of user \one. }
In\shortonly{~\cite{lver}}\longonly{ Lemma~\ref{thm:converse}}, we formally argue that for an \one-\spoofable MAC, no non-zero rate can be achieved for user-\one.

Our first result states that, in fact, non-\spoofability characterizes the MACs in which users can work at positive rates of communication with adversary identification.  
\begin{thm}\label{thm:main_result}
If a MAC is \one-\spoofable (resp. \two-\spoofable), communication with adversary identification from user-\one (resp. user-\two) is impossible. Specifically, for any $(N_{\one}, N_{\two}, n)$ adversary identifying code with $N_{\one}\geq 2$ (resp. $N_{\two}\geq 2$), the probability of error is at least $1/12$.  If a MAC is neither \one-\spoofable nor \two-\spoofable, then its capacity region has a non-empty interior ($\mathsf{int}(\cC)\neq \emptyset)$, that is, both users can communicate reliably with adversary identification at positive rates. 
\end{thm}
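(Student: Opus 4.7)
Suppose $\mch$ is \one-\spoofable with distributions $Q_{Y|\tilde X, \tilde Y}, Q_{X|\tilde X, X'}$ satisfying \eqref{eq:spoof1}, and fix any $(N_\one, N_\two, n)$ adversary-identifying code with $N_\one \ge 2$. My plan is to place three attack hypotheses on a common probability space carrying mutually independent variables $M_\one, M_\one' \sim \textsf{Unif}(\cM_\one)$ and $M_\two \sim \textsf{Unif}(\cM_\two)$: in (a) user \one is honest with $M_\one$ and user \two attacks with $\vecY \sim Q^n_{Y|\tilde X,\tilde Y}(\,\cdot\mid f_\one^{(n)}(M_\one'), f_\two^{(n)}(M_\two))$; (b) is the same as (a) but with the roles of $M_\one$ and $M_\one'$ exchanged in the honest input and the attack; in (c) user \two is honest with $M_\two$ and user \one attacks with $\vecX \sim Q^n_{X|\tilde X,X'}(\,\cdot\mid f_\one^{(n)}(M_\one), f_\one^{(n)}(M_\one'))$. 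Taking the $n$-fold product of \eqref{eq:spoof1} shows that the joint law of $(M_\one, M_\one', M_\two, \vecZ)$ is identical under (a)--(c). Writing $\mathsf{Out} = \phi^{(n)}(\vecZ)$, and using that the error of a randomized attack is at most $P_e$ (the max over deterministic attacks in $P_{e,\malone}, P_{e,\maltwo}$ dominates any convex combination), I get
\begin{align*}
\Pr\!\left[\mathsf{Out} \in (\{M_\one\}\times\cM_\two)\cup\{\twob\}\right] &\ge 1-P_e,\\
\Pr\!\left[\mathsf{Out} \in (\{M_\one'\}\times\cM_\two)\cup\{\twob\}\right] &\ge 1-P_e,\\
\Pr\!\left[\mathsf{Out} \in (\cM_\one\times\{M_\two\})\cup\{\oneb\}\right] &\ge 1-P_e.
\end{align*}
Intersecting the first two and using $\Pr[M_\one \ne M_\one'] \ge 1/2$ (as $N_\one \ge 2$) forces $\Pr[\mathsf{Out} = \twob] \ge 1/2 - 2P_e$; the third event precludes $\mathsf{Out} = \twob$, so $\Pr[\mathsf{Out} = \twob] \le P_e$. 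These are incompatible unless $P_e$ is bounded below by a positive absolute constant.

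\textbf{Achievability.} When $\mch$ is neither \one- nor \two-\spoofable, my plan is to follow the deterministic-code AVC construction of \cite{CsiszarN88}: draw a random constant-composition code of small positive rates $R_\one, R_\two > 0$ and prune a vanishing fraction of codeword pairs to make the resulting code robust against every deterministic single-user attack. The decoder proceeds in two stages. It first searches for the unique pair $(\hat m_\one, \hat m_\two)$ such that $(f_\one^{(n)}(\hat m_\one), f_\two^{(n)}(\hat m_\two), \vecZ)$ is jointly typical for $W^n$; on success it outputs that pair. Otherwise it examines empirical joint types of $\vecZ$ with each candidate codeword: non-\one-\spoofability (resp.\ non-\two-\spoofability) produces, at the single-letter level, specific $(x',\tilde x, \tilde y, z)$ witnessing the failure of \eqref{eq:spoof1} (resp.\ \eqref{eq:spoof2}), and a standard typical-set concentration at blocklength $n$ converts this single-letter gap into a reliable hypothesis test that outputs $\oneb$ or $\twob$.

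The converse is routine once the three couplings are in place; the heart of the proof is the achievability. The main obstacle is that non-\spoofability is a purely single-letter combinatorial condition on $W$, yet it must be leveraged to design a decoder that makes a four-way decision---a message pair, $\oneb$, or $\twob$---uniformly against every randomized attack by either user. Standard AVC machinery handles only one adversary and a binary decode-or-declare rule, so new ingredients are required: a coupling of the two codebooks that admits concurrent hypothesis tests for $\oneb$ and $\twob$, and a simultaneous bounding of both adversary-identification error events.
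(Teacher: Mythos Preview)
Your converse is correct and is essentially the paper's own argument: the paper sets up the same three conditional output distributions (your cases (a)--(c)), shows they coincide under \eqref{eq:spoof1}, and combines the three error bounds. Your coupling phrasing in fact yields $P_e\ge 1/6$, slightly sharper than the paper's $1/12$.

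Your achievability has a genuine gap. The second stage---``non-\spoofability produces specific $(x',\tilde x,\tilde y,z)$ witnessing the failure of \eqref{eq:spoof1}, and a standard typical-set concentration converts this into a reliable hypothesis test that outputs $\oneb$ or $\twob$''---is not a decoder: you have not said which statistic of $(\vecz,\text{codebooks})$ the receiver computes, nor why a single-letter witness to non-\spoofability yields a test with small error uniformly over \emph{every} attack sequence by \emph{either} user. Your own final paragraph concedes that ``new ingredients are required'' and names them without supplying them; that is precisely the content of the theorem. The paper's construction is substantially more involved. For each candidate $m_\one$ it asks whether some attack vector $\vecy$ makes $(f_\one(m_\one),\vecy,\vecz)$ channel-typical \emph{and} every competing candidate $(\tilde m_\one,\tilde m_\two)$ or $(\tilde m_{\two 1},\tilde m_{\two 2})$ satisfies a small-mutual-information condition such as $I(\tilde X\tilde Y;XZ\mid Y)<\eta$ (so the competitor is attributable to $\vecy$ rather than to honest transmission), and symmetrically for user~\two. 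The heart of the proof is a disambiguity lemma: via Pinsker's inequality and a continuity argument, one shows that for a non-\spoofable channel no joint type $P_{XY\tilde X Y' X'\tilde Y Z}$ can simultaneously satisfy all six resulting constraints (three channel-typicality memberships and three mutual-information bounds), and this is exactly where \eqref{eq:spoof1}--\eqref{eq:spoof2} enter. That lemma, together with a codebook lemma bounding the number of codewords in the relevant conditional type classes in the style of \cite{CsiszarN88}, is what drives the error analysis; your sketch contains neither ingredient.
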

\longonly{The proof of the theorem is given in Appendix~\ref{sec:proof_thm1}.}
\begin{corollary}
$\mathsf{int}(\cC) = \emptyset$ if and only if a MAC is \spoofable.
\end{corollary}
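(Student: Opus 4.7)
The plan is to read the corollary off directly from Theorem~\ref{thm:main_result}, treating the two directions separately.

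For the ``if'' direction, I would begin by assuming the MAC is \spoofable, and without loss of generality take it to be \one-\spoofable (the other case is symmetric by swapping the roles of the two users). The converse part of Theorem~\ref{thm:main_result} then asserts that every $(N_{\one}, N_{\two}, n)$ adversary identifying code with $N_{\one} \geq 2$ has probability of error at least $1/12$. Consequently, any rate pair $(R_{\one}, R_{\two})$ with $R_{\one} > 0$ cannot be achieved, since for sufficiently large $n$ we would have $\lfloor 2^{nR_{\one}} \rfloor \geq 2$ and the error would remain bounded below by a constant, contradicting the definition of achievability. Hence every achievable pair lies on the segment $\{0\} \times [0, \log|\cY|]$, which has empty interior in $\bbR^{2}$; taking the closure preserves this, so $\mathsf{int}(\cC) = \emptyset$.

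For the ``only if'' direction, I would argue by contrapositive. Assume the MAC is not \spoofable, so it is neither \one- nor \two-\spoofable. Then the achievability part of Theorem~\ref{thm:main_result} directly asserts $\mathsf{int}(\cC) \neq \emptyset$, which is exactly what we need. The main obstacle in the overall argument does not lie in the corollary itself --- which is essentially a repackaging of Theorem~\ref{thm:main_result} --- but in Theorem~\ref{thm:main_result}, whose converse uses the spoofability construction to exhibit three attack scenarios that induce identical output distributions at the decoder (forcing it to err in at least one), and whose achievability requires designing a typicality-based decoder capable of simultaneously handling both possible adversarial users when neither spoofability condition holds.
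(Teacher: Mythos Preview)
Your proposal is correct and matches the paper's approach: the corollary is presented in the paper without its own proof, as an immediate repackaging of Theorem~\ref{thm:main_result}, which is exactly how you have argued it.
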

\begin{remark}
Theorem~\ref{thm:main_result} does not cover the case when exactly one user is \spoofable. In particular, if the MAC is \one-\spoofable (and thus, $C^{}_{\one} = 0$), but not \two-\spoofable, can $C^{}_{\two}>0$? A similar case is also open for Arbitrarily Varying Multiple Access MAC (AV-MAC) (see \cite{AhlswedeC99}). When encoders have private randomness, this can be resolved as was recently shown by Pereg and Steinberg \cite{PeregS19}. A similar resolution is possible for the present problem. 
We can use encoders with private randomness to show that  $C^{}_{\one} >0$ (resp. $C^{}_{ \two} >0$) if and only if the MAC is not \one-\spoofable\ (resp. not \two-\spoofable). 
\end{remark}
In the interest of space, we limit the discussion of achievability to an informal description of the decoder. See \shortonly{\cite{lver}}\longonly{Lemma~\ref{thm:detCodesPositivity}} for a complete proof.  
For input distributions $P_{\one}$ and $P_{\two}$ on \cX\ and \cY\ respectively, the decoder works by collecting potential candidates for the messages sent by each user. 
A message $\mo$ is deemed a {\em candidate} for user \one if it is typical with some (attack) vector $\vecy$ and the output vector $\vecz$ according to the channel law (i.e., for some $\eta>0$, $\inp{f_{\one}(m_{\one}), \vecy, \vecz} \in T^{n}_{XYZ}$ such that $D\inp{P_{XYZ}||P_{\one}P_YW}\leq\eta$). 
We further prune the list of candidates by only keeping the ones which can account for all other candidates that can lead to ambiguity at the decoder. 
For example, for a candidate $\mo$, suppose there are two other candidates $\tilde{m}_{\one}$ and $\tilde{m}_{\two}$ of user \one and user \two respectively. The decoder is confused between \mo and $\tilde{m}_{\one}$, so it cannot reliably choose an output message for user \one. Neither can it adjudge one of the users to be adversarial as both users have valid message candidates. 
In order to get around this, we require that for  every pair of candidates ($\tilde{m}_{\one},\tilde{m}_{\two}$) such that $\inp{f_{\one}(\mo), \vecy,  f_{\one}(\tilde{m}_{\one}), f_{\two}(\tilde{m}_{\two}), \vecz}$$\in $$T^n_{XY\tilde{X}\tilde{Y}Z}$, the condition  $I(\tilde{X}\tilde{Y};XZ|Y)<\eta$ holds.   
{Under this condition, we may infer that the channel output \vecz\ was likely not caused by the pair $(\tilde{m}_{\one}$, $\tilde{m}_{\two}$),
rather, $(\tilde{m}_{\one}$, $\tilde{m}_{\two})$ is more likely to be part of the attack strategy employed by user \two to produce its input  vector \vecy.}
Similarly, if there  is a pair of candidates $(\tilde{m}_{\two 1},\tilde{m}_{\two 2})$ of user \two,  the decoder can neither reliably decode user \two's message, nor can it implicate either user. 
Thus, we require that for every pair of candidates ($\tilde{m}_{\two 1}$,$\tilde{m}_{\two 2}$) of user \two such that $\inp{f_{\one}(\mo), \vecy, f_{\two}(\tilde{m}_{\two 1}), f_{\two}(\tilde{m}_{\two 2}),\vecz}$$\in $$T^n_{XY\tilde{Y}_1\tilde{Y}_2Z}$, the condition $I(\tilde{Y}_1\tilde{Y}_2;XZ|Y)< \eta$ holds. Let $D_{\one}(\eta, \vecz)$ be the set of all candidates of user \one which pass these checks. 
{We define $D_{\two}(\eta, \vecz)$ analogously by interchanging the roles of users \one and \two.}
The decoder is as follows:
\begin{align*}
\phi(\vecz) \defineqq\begin{cases}(\mo,\mt) &\text{if }D_{\one}(\eta, \vecz)\times D_{\two}(\eta, \vecz) = \{(\mo,\mt)\},\\ \oneb \text{ (\small blame \one)} &\text{if }|D_{\one}(\eta, \vecz)| = 0, \, |D_{\two}(\eta, \vecz)| \neq 0,\\ \twob\text{ (\small blame \two)} &\text{if }|D_{\two}(\eta, \vecz)| = 0, \, |D_{\one}(\eta, \vecz)| \neq 0,\\(1,1) &\text{if }|D_{\one}(\eta, \vecz)| = |D_{\two}(\eta, \vecz)| = 0.
\end{cases}
\end{align*}In the spirit of \cite[Lemma 4]{CsiszarN88}, we show in \shortonly{\cite{lver}}\longonly{Appendix~\ref{sec:proof_thm1}} that for a non-\spoofable MAC, there exists a  small enough $\eta>0$ such that if $|D_{\one}(\eta, \vecz)|, |D_{\two}(\eta, \vecz)| >0$ then $|D_{\one}(\eta, \vecz)|$ = $|D_{\two}(\eta, \vecz)|$ = 1. Thus, the decoder definition covers all the cases. We also show that $|D_{\one}(\eta, \vecz)|$ = $|D_{\two}(\eta, \vecz)| = 0$ is a low probability event.  In\shortonly{~\cite{lver}}\longonly{ Appendix~\ref{sec:proof_thm1}}, we analyze the error probability of the decoder and show that for non-\spoofable channels it can support positive rates for both users.

\section{Capacity region}
\subsection{Inner bound}
For distributions $P_{\one}$ and $P_{\two}$  over $\cX$ and $\cY$ respectively, we define  $\cP(P_{\one}, P_{\two}) \defineqq \{P_{XY\tilde{X}\tilde{Y}Z}: P_{X\tilde{Y}Z}=P_{\one}\times P_{\tilde{Y}}\times W \text{ for some }P_{\tilde{Y}} \text{ and }P_{\tilde{X}YZ} = P_{\tilde{X}}\times P_{\two}\times W \text{ for some }P_{\tilde{X}}\}$.
Let $\cR_{1}(P_{\one}, P_{\two})$ be the set of rate pairs $(R_{\one}, R_{\two})$ such that
\begin{align}\label{eq:inner_bd_1}
R_{\one}&\leq \min_{P_{XY\tilde{X}\tilde{Y}Z} \in \cP(P_{\one}, P_{\two})} I(X;Z)\nonumber\\
R_{\two}&\leq \min_{P_{XY\tilde{X}\tilde{Y}Z} \in \cP(P_{\one}, P_{\two}):X\independent Y} I(Y;Z|X).
\end{align} 
Similarly, let $\cR_{2}(P_{\one}, P_{\two})$ be the set of rate pairs given by
\begin{align}\label{eq:inner_bd_2}
R_{\one}&\leq \min_{P_{XY\tilde{X}\tilde{Y}Z} \in \cP(P_{\one}, P_{\two}):X\independent Y} I(X;Z|Y)\nonumber\\
R_{\two}&\leq \min_{P_{XY\tilde{X}\tilde{Y}Z} \in \cP(P_{\one}, P_{\two})} I(Y;Z).
\end{align}

\begin{thm}[Achievable rate region]\label{thm:inner_bd}
When $\mathsf{int}(\cC)\neq \emptyset$, 
\begin{align*}
\mathsf{conv}(\cup_{P_{\one}, P_{\two}}\inp{\cR_1(P_{\one}, P_{\two})\cup\cR_{2}(P_{\one}, P_{\two})})\subseteq \cC.
\end{align*}

\end{thm}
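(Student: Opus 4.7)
The plan is to fix input distributions $P_{\one}$ and $P_{\two}$ and show that the region $\cR_{1}(P_{\one}, P_{\two})$ is achievable by a single code; the region $\cR_{2}(P_{\one}, P_{\two})$ then follows by the identical argument with the roles of the two users swapped, and the convex closure of the union is obtained via a standard time-sharing construction between pairs of achievable operating points. The codebook construction mirrors the one sketched after Theorem~\ref{thm:main_result}: generate $f_{\one}(m_{\one}) \sim P_{\one}^{n}$ and $f_{\two}(m_{\two}) \sim P_{\two}^{n}$ independently across messages (with possible expurgation to enforce joint-type regularity of the two codebooks in the spirit of \cite{CsiszarN88}), fix a small $\eta > 0$, form the candidate sets $D_{\one}(\eta, \vecz)$ and $D_{\two}(\eta, \vecz)$, and apply the decoding rule already displayed. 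Since $\mathsf{int}(\cC) \neq \emptyset$ forces the channel to be non-\spoofable, Theorem~\ref{thm:main_result} guarantees that whenever both candidate sets are nonempty each is a singleton, so the decoder output is unambiguous.

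The error analysis proceeds scenario by scenario. When both users are honest, the true pair belongs to $D_{\one}(\eta, \vecZ) \times D_{\two}(\eta, \vecZ)$ w.h.p.\ by joint typicality, and a standard method-of-types union bound rules out spurious candidates. When user \two attacks with an arbitrary $\vecy$, I would argue that (i)~the true $m_{\one}$ lies in $D_{\one}(\eta, \vecZ)$ w.h.p., because the joint type of $(f_{\one}(m_{\one}), \vecy, \vecZ)$ is compatible with some distribution in $\cP(P_{\one}, P_{\two})$; and (ii)~any $m_{\two}'$ that survives in $D_{\two}(\eta, \vecZ)$ is paired with the true $m_{\one}$, so the decoded output has the correct honest-user component, which is exactly what the model requires. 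A union-bound count of spurious $m_{\one}' \neq m_{\one}$ that pass both the typicality and the pruning tests in $D_{\one}$ yields the constraint $R_{\one} \leq \min_{P \in \cP(P_{\one}, P_{\two})} I(X; Z)$, while an analogous count of spurious $m_{\two}'$ surviving in $D_{\two}$ given the correctly-decoded $m_{\one}$ yields $R_{\two} \leq \min_{P \in \cP(P_{\one}, P_{\two}):\, X \independent Y} I(Y; Z | X)$, the independence constraint reflecting that honest user \two's codeword is drawn independently of honest user \one's codeword. The symmetric scenario with user \one attacking is handled analogously and leads to the same pair of bounds.

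The main technical obstacle I anticipate is translating the decoder's information-theoretic pruning tests, namely $I(\tilde X \tilde Y; XZ | Y) < \eta$ and its companion $I(\tilde Y_{1} \tilde Y_{2}; XZ | Y) < \eta$ for pairs of user-\two candidates, into precisely the rate expressions appearing in \eqref{eq:inner_bd_1}. Specifically, I must show that the exponential decay rate of the probability that a spurious candidate simultaneously passes typicality and pruning is at most $I(X; Z) - o(1)$ (respectively $I(Y; Z | X) - o(1)$) over the worst-case joint distribution in $\cP(P_{\one}, P_{\two})$; this will require continuity of mutual information in the joint type, taking $\eta \downarrow 0$ after $n \to \infty$, and careful bookkeeping of which joint distributions remain admissible under the pruning constraints. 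Once $\cR_{1}(P_{\one}, P_{\two})$ and $\cR_{2}(P_{\one}, P_{\two})$ are established as achievable for every choice of inputs, the convex closure $\mathsf{conv}(\cup_{P_{\one}, P_{\two}}(\cR_{1}(P_{\one},P_{\two}) \cup \cR_{2}(P_{\one},P_{\two})))$ follows by a standard time-sharing argument that alternates block lengths between two achievable operating points.
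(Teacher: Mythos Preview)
Your proposal has a genuine gap centred on the decoder and the origin of the $X\independent Y$ constraint. The paper does \emph{not} reuse the simultaneous $D_{\one}(\eta,\vecz),D_{\two}(\eta,\vecz)$ decoder of Theorem~\ref{thm:main_result}; it replaces it by a five-step \emph{sequential} decoder in which user-\two candidates are first pruned against pairs of surviving user-\one candidates (Step~3, producing $B_2$), and only afterwards are user-\one candidates pruned against pairs $(\tilde m_{\one},\tilde m_{\two})\in A_2\times B_2$ (Step~4). The payoff is Lemma~\ref{lemma:indep}: any $\tilde m_{\two}\in B_2$ that survived Step~3 necessarily has joint type satisfying $I(\tilde Y;X)\le\eta$ and $I(\tilde Y;\tilde X)\le\eta$. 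These extra constraints are precisely what narrows the minimisation class from $\cP(P_{\one},P_{\two})$ to $\{P\in\cP(P_{\one},P_{\two}):X\independent Y\}$ and delivers the bound $R_{\two}\le\min_{X\independent Y}I(Y;Z|X)$. Swapping the order of Steps~2 and~3 gives the other corner point $\cR_1$ versus $\cR_2$.

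Your justification, that the independence constraint ``reflects that honest user~\two's codeword is drawn independently of honest user~\one's codeword,'' is not enough. Independence of codebook \emph{generation} does not by itself force the joint empirical type of $(\vecx_{m_{\one}},\vecy_{\tilde m_{\two}})$ to be near-product on the error event: you are conditioning on $(\vecx_{m_{\one}},\vecx_{\tilde m_{\one}},\vecy_{\tilde m_{\two}},\vecy,\vecz)$ lying in a specific type class where the adversary-chosen $\vecy$ and the channel output $\vecz$ tie the variables together. The paper's Lemma~\ref{lemma:indep} extracts the independence from the \emph{decoder-imposed} condition $I(X\tilde Y;\tilde XZ|Y')<\eta$ that $\tilde m_{\two}$ passed in Step~3, not from codebook randomness. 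With the unmodified Theorem~\ref{thm:main_result} decoder you would only be able to minimise over all of $\cP(P_{\one},P_{\two})$, yielding a strictly smaller region.

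A secondary issue: your error analysis is inverted. Once non-\spoofability forces $|D_{\one}|=|D_{\two}|=1$ when both are nonempty, the error when user~\two attacks is not ``a spurious $m_{\one}'$ passes the tests'' but rather ``the true $m_{\one}$ is \emph{ejected} from $D_{\one}$ because some witnesses $(\tilde m_{\one},\tilde m_{\two})$ or $(\tilde m_{\two 1},\tilde m_{\two 2})$ cause check~(ii) or~(iii) to fail.'' The rate constraints arise from bounding the number of such bad witnesses via the codebook lemma, not from counting competing decodings.
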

The proof uses a slighly modified version of the decoder used in Theorem~\ref{thm:main_result}. This modification imposes the additional condition $X\indep Y$ on the distribution. \shortonly{Please refer to~\cite{lver} for details.}\longonly{Please see Appendix~\ref{sec:inner_bd_proof}}.

\subsection{Outer bound}\label{sec:outer_bd}
The outer bound is provided in terms of the capacity of an Arbitrarily Varying Multiple Access Channel (AV-MAC).
An AV-MAC $\cW = \{W(z|x,y,s), (x, y, z)\in \cX\times\cY\times\cZ: s\in \cS\}$$\subseteq \bbR^{|\cX|\times|\cY|\times|\cZ|}$ is a family of MACs parameterized by the set of state symbols $\cS$ (see \cite{Jahn81}). The state of an AV-MAC can vary arbitrarily during the transmission. We use $\cC^{}_{\AVMAC}(\cW)$ (or simply $\cC^{}_{\AVMAC}$) to denote the deterministic capacity region of an AV-MAC $\cW$. 
\begin{defn} \label{defn:outerboundAVMAC}
For a MAC $W$, let  $\tilde{\cW}_W$ be the set of MACs $\tilde{W}$ such that for some distributions $Q_{X'|X}$ and $Q_{Y'|Y}$ on $\cX\times\cX$ and $\cY\times\cY$ respectively and 
for all $x, y, z\in \cX\times\cY\times \cZ$,
\begin{align}\label{eq:outer_bound}
\tilde{W}(z|x,y) &= \sum_{x'}Q_{X'|X}(x'|x)W(z|x',y) \nonumber\\
&= \sum_{y'}Q_{Y'|Y}(y'|y)W(z|x,y')
\end{align}
\end{defn}

Notice that $W\in \tilde{\cW}_W$ by choosing trivial distributions $Q_{X'|X}(x|x) = 1$ for all $x$ and $Q_{Y'|Y}(y|y) = 1$ for all $y$.
The set $\tilde{\cW}_W$ is convex because for every $(Q_{X'|X}, Q_{Y'|Y})$ and $(Q'_{X'|X}, Q'_{Y'|Y})$  satisfying \eqref{eq:outer_bound}, the pair $(\alpha Q_{X'|X}+ (1-\alpha) Q'_{X'|X}, \alpha Q_{Y'|Y} + (1-\alpha) Q'_{Y'|Y})$, $\alpha \in [0,1]$ also satisfies \eqref{eq:outer_bound}.
To get an outer bound, let us consider a situation where user \one is malicious and attacks in the following manner: it runs its encoder on a uniformly distributed message from its message set, then passes the output of the encoder through $\prod_{i=1}^{n}Q^i_{X'|X}$ where for all $i\in [1:n]$, $(Q^i_{X'|X}, Q^i_{Y'|Y})$ satisfy \eqref{eq:outer_bound} for some $Q^i_{Y'|Y}$. The output of $\prod_{i=1}^{n}Q^i_{X'|X}$ is finally sent to the MAC ${W}$ as input by user \one. 
At the receiver, it is not clear if user \one attacked using $\prod_{i=1}^{n}Q^i_{X'|X}$ or user \two attacked using $\prod_{i=1}^{n}Q^i_{Y'|Y}$. Hence, the malicious user cannot be identified reliably. So, the decoder must output a pair of messages. This implies that the capacity region $\cC$ must be a subset of the capacity region of the AV-MAC $\tilde{\cW}_W$ (Definition~\ref{defn:outerboundAVMAC}) parametrized by a pair of distributions $(Q_{X'|X}, Q_{Y'|Y})$ satisfying \eqref{eq:outer_bound}. 
This argument is formalized in \shortonly{the longer version~\cite{lver}.}\longonly{Appendix~\ref{sec:outer_bd_proof}}
The outer bound obtained in this manner is valid for any protocol: deterministic, stochastic (private randomness at the encoders) or randomized (independent randomness shared by each encoder with the decoder). 
\begin{thm}[Outer bound]\label{thm:outer_bd}
$\cC \subseteq \cC_{\AVMAC}(\tilde{\cW}_W)$. Moreover, there exists 
an AV-MAC ${\cW}_W$ such that $\cC_{\AVMAC}({\cW}_W)=\cC_{\AVMAC}(\tilde{\cW}_W)$ 
and $|{\cW}_W| \leq 2^{|\cX|^2 + |\cY|^2}$.
\end{thm}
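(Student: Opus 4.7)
The plan is to address the two claims separately. For the inclusion $\cC \subseteq \cC_{\AVMAC}(\tilde{\cW}_W)$, I would reuse any adversary identifying code $(f_{\one}^{(n)}, f_{\two}^{(n)}, \phi^{(n)})$ as an AV-MAC code, treating outputs in $\{\oneb, \twob\}$ as errors. Fix any state sequence $\tilde{W}^n = (\tilde{W}_1,\ldots,\tilde{W}_n) \in \tilde{\cW}_W^n$, with each $\tilde{W}_i$ realised by some $(Q^i_{X'|X}, Q^i_{Y'|Y})$ satisfying \eqref{eq:outer_bound}. For every message pair $(\mo,\mt)$, \eqref{eq:outer_bound} makes the AV-MAC output $\vecZ \sim \prod_i \tilde{W}_i(\cdot\,|\,f_{\one}^{(n)}(\mo)_i, f_{\two}^{(n)}(\mt)_i)$ identically distributed with (a) the output when user \two attacks using $\vecY \sim \prod_i Q^i_{Y'|Y}(\cdot\,|\,f_{\two}^{(n)}(\mt)_i)$ against honest user \one sending $f_{\one}^{(n)}(\mo)$, and also with (b) the output when user \one attacks using $\vecX \sim \prod_i Q^i_{X'|X}(\cdot\,|\,f_{\one}^{(n)}(\mo)_i)$ against honest user \two sending $f_{\two}^{(n)}(\mt)$.

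The crucial observation is that $\{\phi^{(n)}(\vecZ) = (\mo,\mt)\}$ equals $A_1 \cap A_2$, where $A_1 \defineqq \{\phi^{(n)}(\vecZ) \in \cM_{\one}\times\{\mt\}\cup\{\oneb\}\}$ and $A_2 \defineqq \{\phi^{(n)}(\vecZ) \in \{\mo\}\times\cM_{\two}\cup\{\twob\}\}$: the only outcome lying in both sets is $(\mo,\mt)$, because each of $\oneb$, $\twob$, and any pair differing from $(\mo,\mt)$ in at least one coordinate fails to satisfy at least one of the two set memberships. Hence the AV-MAC error averaged over $(\mo,\mt)$ is at most $\Pr[A_1^c] + \Pr[A_2^c]$. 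Under view (b), the induced $\vecX$ depends only on $\mo$, so it constitutes a valid randomised attack by user \one and the average of $\Pr[A_1^c]$ over both messages is bounded by $P_{e,\malone}$ (the worst-case deterministic attack dominates any randomised one, as noted in the excerpt). Symmetrically, view (a) bounds the average of $\Pr[A_2^c]$ by $P_{e,\maltwo}$. Both vanish as $n \to \infty$ uniformly over $\tilde{W}^n$, establishing the inclusion.

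For the finite state space claim, I would exploit the polyhedral structure of $\tilde{\cW}_W$. The set $\mathcal{Q}$ of pairs $(Q_{X'|X}, Q_{Y'|Y})$ satisfying \eqref{eq:outer_bound} is a polytope in $\bbR^{|\cX|^2 + |\cY|^2}$ cut out by linear equalities (row-sum constraints on $Q_{X'|X}, Q_{Y'|Y}$ together with \eqref{eq:outer_bound}) and the $|\cX|^2 + |\cY|^2$ entrywise non-negativity constraints. Each extreme point of $\mathcal{Q}$ is determined by which subset of the non-negativity constraints is active, hence $|\mathrm{ext}(\mathcal{Q})| \leq 2^{|\cX|^2 + |\cY|^2}$. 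Since either side of \eqref{eq:outer_bound} exhibits $\tilde{W}$ as a linear image of $(Q_{X'|X}, Q_{Y'|Y})$, $\tilde{\cW}_W$ is itself a polytope whose extreme points lie in the image of $\mathrm{ext}(\mathcal{Q})$. Taking $\cW_W$ to be the set of extreme points of $\tilde{\cW}_W$ gives $|\cW_W| \leq 2^{|\cX|^2 + |\cY|^2}$ and $\tilde{\cW}_W = \mathsf{conv}(\cW_W)$.

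It remains to show $\cC_{\AVMAC}(\cW_W) = \cC_{\AVMAC}(\tilde{\cW}_W)$. The inclusion $\supseteq$ is immediate, since shrinking the adversary's state set can only help the legitimate users. For the reverse, given any AV-MAC code achieving error at most $\epsilon$ on every state sequence in $\cW_W^n$ and any $\tilde{W}^n \in \tilde{\cW}_W^n$, expressing each $\tilde{W}_i$ as a convex combination of elements of $\cW_W$ and expanding the product shows that $\prod_i \tilde{W}_i$ decomposes as a convex combination over $\cW_W^n$ of product channels, so the error under $\tilde{W}^n$ is a corresponding convex combination of numbers bounded by $\epsilon$. The main obstacle I anticipate is in the first part, namely carefully aligning the randomised attack induced by the state sequence with the definitions of $P_{e,\malone}$ and $P_{e,\maltwo}$, which involve a worst-case deterministic input; this is resolved by the standard observation that deterministic attacks dominate randomised ones.
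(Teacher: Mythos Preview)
Your proof is correct and follows essentially the same approach as the paper: realize each state $\tilde{W}_i$ via a pair $(Q^i_{X'|X},Q^i_{Y'|Y})$, interpret the resulting output distribution simultaneously as a randomized attack by user~\one and by user~\two, and bound the AV-MAC error by $P_{e,\malone}+P_{e,\maltwo}$; then count extreme points of the parameter polytope to get the $2^{|\cX|^2+|\cY|^2}$ bound. Your event decomposition $\{\phi^{(n)}(\vecZ)=(\mo,\mt)\}=A_1\cap A_2$ is in fact slightly cleaner than the paper's split into ``wrong pair'' versus ``$\{\oneb,\twob\}$'' and yields a constant $2\epsilon$ instead of $3\epsilon$, and you supply a self-contained convex-combination argument for $\cC_{\AVMAC}(\cW_W)=\cC_{\AVMAC}(\tilde{\cW}_W)$ where the paper simply cites Jahn~\cite{Jahn81}.
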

\noindent The existence of an AV-MAC ${\cW}_W$ with a finite state-space can be shown using the fact that the $\cC_{\AVMAC}(\cW)$ only depends on $\mathsf{conv}(\cW)$ and by simple geometric arguments\shortonly{~\cite{lver}.}\longonly{ (see Appendix~\ref{sec:outer_bd_proof}).}
\begin{remark}
Theorem~\ref{thm:outer_bd} also gives an outer bound for the capacity region under randomized codes (with independent randomness shared between each encoder and the decoder). \shortonly{See \cite{lver} for details.}
\end{remark}
\section{Examples and comparison with other models}
\subsection{Tightness of the inner bound for the Binary Erasure MAC}\label{sec:example_tightness}
We will show that for the binary erasure MAC~\cite[pg.~83]{YHKEG}, the inner bound on \cC\ given by Theorem~\ref{thm:inner_bd} is the same as its (non-adversarial) capacity region $\cC_{\MAC}$. Hence, it is tight.  We choose $P_{\one}$ and $ P_{\two}$ arbitrarily close to the uniform distribution $U$ on $\{0, 1\}$ while ensuring that  $P_{\one}\neq P_{\two}$. We show that $\cP(P_{\one}, P_{\two}) = \{P_{XY\tilde{X}\tilde{Y}Z}: P_{X\tilde{Y}Z}=P_{\one}\times P_{\two}\times W  \text{ and }P_{\tilde{X}YZ} = P_{\one}\times P_{\two}\times W \}$ and for $P_{XY\tilde{X}\tilde{Y}Z}\in \cP(P_{\one}, P_{\two})$ satisfying $X\indep Y$, $\tilde{X} = X$ and $\tilde{Y} = Y$.  Thus, \eqref{eq:inner_bd_1} evaluates to $R_{\one}\leq 0.5$ and $R_{\two} \leq 1$, and \eqref{eq:inner_bd_2} evaluates to $R_{\one}\leq 1$ and $R_{\two} \leq 0.5$. Using time sharing between these two rate pairs, we obtain the entire MAC region (This is the rate region $\cC$ in Fig.~\ref{ex:3}).  Please refer to \shortonly{the longer version~\cite{lver}}\longonly{Appendix~\ref{sec:examples}} for a complete argument.
\subsection{Comparison with related models}\label{sec:comparison}
In this section we contrast the present model with reliable communication and authenticated communication models.

\paragraph{Reliable communication in a MAC with byzantine users}\label{para:reliable} 

\longonly{We consider a MAC with a stronger decoding guarantee: the decoder, w.h.p,  outputs a message pair of which the message(s) of honest user(s) is correct. 
In the presence of a malicious user, the channel from the honest user to the receiver can be treated as an Arbitrarily Varying MAC (AVC) \cite{BBT60} with the input of other user as state. Thus, the capacity region is outer bounded by the rectangular region defined by the AVC capacities of the two users’ channels. Further, it is easy to see that this outer bound is achievable when both users use the corresponding AVC codes.
Csisz\'ar and Narayan show in \cite{CsiszarN88} that the capacity of an AVC is zero iff it is {\em symmetrizable}.}
\shortonly{We continue the discussion from the introduction (picking up from footnote~\ref{ftn:reliable} in page~\pageref{ftn:reliable}).} Communication is infeasible in an AVC if and only if it is {\em symmetrizable}\cite{CsiszarN88}. Translating this to the two-user MAC, we define a MAC to be {\em \two-symmetrizable} if there exists a distribution $P_{X|Y}$ such that
\begin{align}\label{eq:symmetrizable}
\sum_{x'\in \cX}P_{X|Y}(x|y')W(z|x,y) = \sum_{x'\in \cX}P_{X|Y}(x|y)W(z|x,y')
\end{align}
for all $(x, y, z)\in \cX\times\cY\times\cZ$. We define an \one-symmetrizable MAC analogously. A \textbf{symmetrizable} MAC is one which is either \one- or \two-symmetrizable. 
Thus, reliable communication by both users is feasible in a MAC if and only if it is not symmetrizable. We denote the reliable communication capacity of a MAC by $\cC_{\reliable}$.
\paragraph{Authenticated communication in a MAC with byzantine users \cite{NehaBDPISIT19}}\label{para:weak} 
\longonly{This model considers a MAC with a weaker decoding guarantee: the decoder should reliably decode the messages when both users are honest. When one user is adversarial, the decoder either outputs a pair of messages of which the message of honest user is correct or it declares the presence of an adversary (without identifying it). 
In this case, the notion of an overwritable MAC characterizes the MACs with non-empty capacity region $\cC_{\auth}$ of authenticated communication.} {We} say that a MAC is {\em \two-overwritable} \cite[(1)]{NehaBDPISIT19} if there exists a distribution $P_{X'|X,Y}$ such that
\begin{align}\label{eq:overwritable}
\sum_{x'\in\cX}P_{X'|X,Y}(x'|x,y)W(z|x',y') = W(z|x,y) 
\end{align}
for all $y,y'\in\cY, \, x\in \cX$ and $z\in \cZ$. Similarly, we can define an \one-overwritable MAC.  If a MAC is either \one-~or \two-overwritable, we say that the MAC is \textbf{overwritable}. Authenticated communication by both users is not feasible in an overwritable MAC. Theorem~1 in \cite{NehaBDPISIT19} states that if the MAC is not overwritable, then authenticated communication capacity, $\cC_{\auth} = \cC_{\MAC}$.

\begin{prop}\label{prop:inclusions}
All overwritable MACs are \spoofable and all \spoofable MACs are symmetrizable. Furthermore, both these inclusions are strict.
\end{prop}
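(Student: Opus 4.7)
The plan is to verify the four claims in turn: (i) overwritable $\Rightarrow$ \spoofable; (ii) \spoofable $\Rightarrow$ symmetrizable; (iii) a \spoofable MAC that is not overwritable; and (iv) a symmetrizable MAC that is not \spoofable. All three properties are symmetric in the roles of user \one and user \two, so it suffices to prove \two-overwritable $\Rightarrow$ \two-\spoofable and \one-\spoofable $\Rightarrow$ \one-symmetrizable.

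For (i), I would take a witness $P_{X'|X,Y}$ of \two-overwritability \eqref{eq:overwritable}, fix an arbitrary distribution $\pi$ on $\cY$, and define
\begin{align*}
Q_{X|\tilde{X},\tilde{Y}}(x|\tilde{x},\tilde{y}) &:= \sum_{\bar{y}}\pi(\bar{y})P_{X'|X,Y}(x|\tilde{x},\bar{y}),\\
Q_{Y|\tilde{Y},Y'}(y|\tilde{y},y') &:= \pi(y);
\end{align*}
note that the first kernel does not actually depend on $\tilde{y}$, and the second depends on neither $\tilde{y}$ nor $y'$. Substituting into \eqref{eq:spoof2} and applying \eqref{eq:overwritable} termwise inside the $\bar{y}$-sum (with $(x,y,y')=(\tilde{x},\bar{y},y')$ for the first expression and $(x,y,y')=(\tilde{x},\bar{y},\tilde{y})$ for the second) collapses both to $\sum_{\bar{y}}\pi(\bar{y})W(z|\tilde{x},\bar{y})$. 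The third expression is $\sum_y\pi(y)W(z|\tilde{x},y)$ by definition, so all three agree, establishing \two-\spoofability.

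For (ii), I would fix a witness $Q_{Y|\tilde{X},\tilde{Y}}$ of \one-\spoofability and any $\tilde{y}_0\in\cY$, and set $P_{Y|X}(y|x) := Q_{Y|\tilde{X},\tilde{Y}}(y|x,\tilde{y}_0)$. Specializing the equality of the first two expressions in \eqref{eq:spoof1} at $\tilde{y}=\tilde{y}_0$ reads $\sum_y P_{Y|X}(y|\tilde{x})W(z|x',y) = \sum_y P_{Y|X}(y|x')W(z|\tilde{x},y)$ for all $\tilde{x},x',z$, which (after renaming $\tilde{x}\to x$) is exactly \one-symmetrizability, the analogue of \eqref{eq:symmetrizable}.

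For strictness, I would use two concrete channels. The binary XOR MAC ($\cX=\cY=\cZ=\{0,1\}$, $W(z|x,y)=\delta_{z,x\oplus y}$) is \spoofable because taking every spoofing kernel uniform makes each expression in \eqref{eq:spoof1} equal $1/2$, yet it is not overwritable: \eqref{eq:overwritable} at $y'=y$ forces $P_{X'|X,Y}(x|x,y)=1$, whereas at any $y'\neq y$ it forces $P_{X'|X,Y}(\bar{x}|x,y)=1$, a contradiction (and \one-overwritability fails by the symmetry of the channel in its two inputs). Conversely, the binary erasure MAC of Section~\ref{sec:example_tightness} is \two-symmetrizable via $P_{X|Y}(x|y)=\delta_{x,y}$ (both sides of \eqref{eq:symmetrizable} reduce to $\delta_{z,y+y'}$), yet by Theorem~\ref{thm:main_result} together with the positive-capacity result of Section~\ref{sec:example_tightness} it cannot be \spoofable. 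The main obstacle is part~(i): discovering the ``mix-in-$\pi$'' construction so that overwritability absorbs the dependence on $y'$ and $\tilde{y}$ in the first two expressions of \eqref{eq:spoof2} while simultaneously producing a legitimate user-\two attack that matches in the third.
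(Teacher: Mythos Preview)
Your proposal is correct and matches the paper's approach essentially line-for-line. For (i) your ``mix-in-$\pi$'' construction is exactly the paper's $Q_Y$-construction; for (ii) your freezing of $\tilde y_0$ is the same trick as the paper's freezing of $\tilde x$ (you do \one-\spoofable $\Rightarrow$ \one-symmetrizable while the paper does the \two-version, but by the symmetry you already noted this is immaterial). For strictness you use the same two examples; the only deviation is that for ``binary erasure MAC is not \spoofable'' you invoke Theorem~\ref{thm:main_result} together with the positive-rate result, whereas the paper gives a direct four-line computation in the appendix---but the paper itself remarks that the inclusions are ``obvious from the problem definitions and the feasibility results,'' so your route is explicitly sanctioned.
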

\noindent While the inclusions in Proposition~\ref{prop:inclusions} are obvious from the problem definitions and the feasibility results, we nonetheless provide a direct argument. Suppose a MAC is \two-overwritable with $P_{X'|X,Y}$ as the overwriting attack in \eqref{eq:overwritable}. For any distribution $Q_{Y}$ on \cY, let $Q_{X|\tilde{X}, \tilde{Y}}(x|\tilde{x}, \tilde{y})\defineqq \sum_{y}Q_{Y}(y)P_{X'|X,Y}(x|\tilde{x},y)$ for all $x, \tilde{x}, \tilde{y}$ and $Q_{Y|\tilde{Y},Y'}(y|\tilde{y},y')\defineqq Q_{Y}(y)$ for all $y,\tilde{y},y'$. Distributions $Q_{X|\tilde{X}, \tilde{Y}}$ and $Q_{Y|\tilde{Y},Y'}$ as defined satisfy \eqref{eq:spoof2}. Now, suppose a MAC \mch is \two-\spoofable with attacks $Q_{X|\tilde{X}, \tilde{Y}}$ and $Q_{Y|\tilde{Y},Y'}$ satisfying \eqref{eq:spoof2}. For all, $x, y$, let $P_{X|Y}(x|y)\defineqq Q_{X|\tilde{X}, \tilde{Y}}(x|\tilde{x},y)$ for any $\tilde{x}\in \cX$. It can be easily seen that the attack $P_{X|Y}$ as defined satisfies \eqref{eq:symmetrizable}.
Examples~\ref{ex:1} and~\ref{ex:2} below show strict inclusion (see Fig. \ref{fig:conditions}).
 \begin{example}[symmetrizable, but not \spoofable]\label{ex:1} Binary erasure MAC: {\em It has binary inputs $X, Y$ and outputs $Z=X+Y$ where $+$ is real addition. We show in \shortonly{the longer version~\cite{lver}}\longonly{Appendix~\ref{sec:BEC_not_spoof}} that this channel {is} not \spoofable. To show symmetrizability, we note that the distribution $P_{X|Y}(x|y) = 1$ for all $x=y$ is a symmetrizing attack in \eqref{eq:symmetrizable}.}
 \end{example}
\begin{example}[\spoofable, but not overwritable] \label{ex:2} Binary additive MAC: {\em It has binary inputs $X, Y$ and outputs $Z = X\oplus Y$ where $\oplus$ is the XOR operation. {To show \spoofability, note that the attacks ${Q_{X|\tilde{X},X'}}(x|\tilde{x}, x') = 1/2$ for all $x, \tilde{x}$ and $x'$, and ${Q_{Y|\tilde{X},\tilde{Y}}}(y|\tilde{x}, \tilde{y}) = 1/2$ for all $y, \tilde{x}$ and $\tilde{y}$, satisfy \eqref{eq:spoof1} because they result in the same uniform output distribution over $\cZ$ in all the three cases in \eqref{eq:spoof1}.}
We show in \shortonly{the longer version~\cite{lver}}\longonly{Appendix~\ref{sec:BAC_not_over}} that Example~\ref{ex:2} is not overwritable.}\end{example} 
\vspace{-0.2 cm}
\begin{figure}[h]
\centering
\begin{tikzpicture}[scale=0.45]
\draw (0,0) ellipse (6cm and 4cm) node[yshift = -1.2 cm, xshift = -0.7 cm]{\footnotesize symmetrizable} node[yshift = -1.5 cm, xshift = -0.7 cm]{\footnotesize MACs} ;
\draw (0.2,0.6) ellipse (4cm and 2.5cm) node[yshift = -0.6 cm, xshift = 0 cm]{\footnotesize spoofable };
\draw (0.2,0.6) ellipse (4cm and 2.5cm) node[yshift = -0.9 cm, xshift = 0 cm]{\footnotesize  MACs};
\draw (0.3, 1) ellipse (2.5 cm and 1 cm) node[xshift = -0.1 cm, yshift = 0.1 cm]{\footnotesize overwritable} node[yshift = -0.2 cm]{\footnotesize MACs};

\node[label={[label distance=0.01mm]80:\footnotesize \ref{ex:1}}] (A) at (1,-3.1){};
\node[label={[label distance=0.01mm]80:\footnotesize \ref{ex:2}}] (B) at (2.9,-0.4){};
\node[label=below:] (B2) at (6.8,0)[right]{\footnotesize Binary additive MAC};
\node[label=below:] (A1) at (3.5,-2.8){};
\node[label=below:] (A2) at (5,-2.1){};
\node[label=below:] (A3) at (7,-2)[right]{\footnotesize Binary erasure MAC};
\node[label=below:] (C3) at (7.5,2.5){};
\node[label=below:] (C1) at (3,2){};
\node[label=below:] (C2) at (5,2.3){};

\draw[->] (B)--(B2);
\draw[->] (A) .. controls (A1) and (A2) .. (A3);
\draw [fill=black] (A) circle (1.5pt);
\draw [fill=black] (B) circle (1.5pt);
\end{tikzpicture}\caption{The set of overwritable MACs is a strict subset of the set of \spoofable MACs which, in turn, is a strict subset of the set of symmetrizable MACs.}\label{fig:conditions} \end{figure}
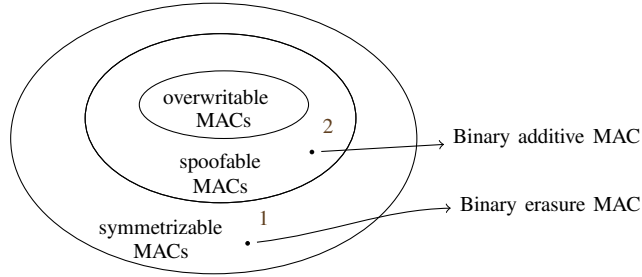
\vspace{0.3 cm}
We also note from the problem definitions that $\cC_{\reliable}\subseteq\cC\subseteq\cC_{\auth}\subseteq\cC_{\MAC}$.
Next we give an example of a channel for which $\cC_{\reliable}$, $\cC$ and $\cC_{\auth}$ are distinct.
The example is constructed by using the MACs in Examples~\ref{ex:1} and \ref{ex:2} in parallel. 
\begin{example}[$(Z_1,Z_2)=(X_1+Y_1, X_2\oplus Y_2)$]\label{ex:3}
For binary inputs $X_1, X_2, Y_1, Y_2$, the output $(Z_1,Z_2) = (X_1+Y_1,X_2\oplus Y_2)$.
 \end{example}
\noindent The channels  $Z_1=X_1+Y_1$ and $Z_2=X_2\oplus Y_2$ are both non-overwritable and symmetrizable. Since the MACs do not interact when used in parallel, we can show that the resultant MAC $(Z_1,Z_2)=(X_1+Y_1, X_2\oplus Y_2)$ is also non-overwritable and symmetrizable (see \shortonly{\cite{lver}}\longonly{Appendix~\ref{sec:ex3}}). Thus, $\cC_{\reliable} = \{0,0\}$ and  $\cC_{\auth} = \cC_{\MAC}$. 
To compute $\cC$, we note that the pair {$(Q_{X'|X}, Q_{Y'|Y})$} defined by $Q_{X'|X}((x_1, u)|(x_1,x_2)) = 0.5$ for all $u,x_1, x_2\in \{0,1\}$ and $Q_{Y'|Y}((y_1, v)|(y_1,y_2)) = 0.5$ for all $v,y_1, y_2\in \{0,1\}$ {satisfies} the conditions in~\eqref{eq:outer_bound}. The resulting channel $\tilde{W}$ has the same first component as $W$ (i.e., a binary erasure MAC) and a second component whose output $Z_2$ is independent of the inputs. By Theorem~\ref{thm:outer_bd}, $\cC$ is outer bounded by the (non-adversarial) capacity region of $\tilde{W}$ which is the capacity region of the binary erasure MAC. We can show that this outer bound is tight by using an adversary identifying code for the binary erasure MAC component $Z_1=X_1+Y_1$ (see Section~\ref{sec:example_tightness}) and any arbitrary inputs for the other component. Please see \shortonly{\cite{lver}}\longonly{Appendix~\ref{sec:ex3}} for details. The capacity regions under these three models are plotted in Fig.~\ref{fig:ex3}.
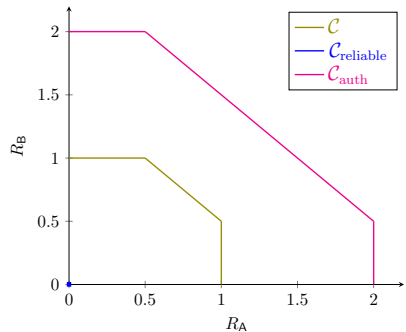
\begin{figure}[h]
\centering
\begin{tikzpicture}[scale=0.65]
\begin{axis}[
    legend style={font=\large},
    axis lines = left,
    xlabel = {$R_{\one}$},
    ylabel = {$R_{\two}$},
]

\addplot [thick,
    domain=0:0.5, 
    samples=50, smooth, 
    color=olive,
    ]
    {1};
\addlegendentry{\textcolor{olive}{\large $\cC$\phantom{abcde}}}
\addplot[thick,
    domain=0:0.5, 
    samples=50, smooth,
    color=blue,
    ]
    coordinates {
    (0,0)};
\addlegendentry{\textcolor{blue}{\large $\cC_{\reliable}$}}

\addplot [thick,
    domain=0:0.5, 
    samples=50, smooth, 
    color=magenta,
]
{2};
\addlegendentry{\textcolor{magenta}{\large $\cC_{\auth}\phantom{ab}$}}
\addplot [thick,
    domain=0.5:1, 
    samples=50, smooth, 
    color=olive,
    ]
    {1.5-x};
\addplot [thick,
    domain=0.5:2, 
    samples=50, smooth, 
    color=magenta,
]
{2.5-x};
\addplot[thick, 
	samples=50, smooth,
	domain=0:6,
	olive] 
	coordinates {(1,0)(1,0.5)};
\addplot[thick, 
	samples=50, smooth,
	domain=0:6,
	magenta] 
	coordinates {(2,0)(2,0.5)};
\addplot[thick, 
	samples=50, smooth,
	domain=0:6,
	white] 
	coordinates {(2.1,0.1)(2.2,0.1)};
\addplot[thick, 
	samples=50, smooth,
	domain=0:6,
	white] 
	coordinates {(0.1,2.1)(0.2,2.2)};
\addplot[
    color=blue,
    mark=square*,
    mark size=1pt
    ]
    coordinates {
    (0,0)};
\end{axis}
\end{tikzpicture}\caption{Capacity regions for the MAC in Example~\ref{ex:3}: $\cC_{\reliable} = \{0,0\}$; $\cC = \cC_{\MAC}$ of $Z_1 = X_1+Y_1$; and $\cC_{\auth} = \cC_{\MAC}$ of $(Z_1,Z_2)=(X_1+Y_1, X_2\oplus Y_2)$. }\label{fig:ex3}\end{figure}

\longonly{
\linespread{1.5}
\appendices
\section{Proof of Theorem~\ref{thm:main_result}}\label{sec:proof_thm1}
We first prove the converse.
\begin{lemma}\label{thm:converse}
 If a channel is \one-spoofable (resp. \two-\spoofable), then  for any $(N_{\one}, N_{\two}, n)$ strongly authenticating code with $N_{\one}\geq 2$ (resp. $N_{\two}\geq 2$), the probability of error is at least $1/12$.  
 \end{lemma}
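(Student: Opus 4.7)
My plan is to prove the $\one$-spoofable case (the $\two$-spoofable case is symmetric) by combining \eqref{eq:spoof1} with a pigeonhole argument. Using the distributions $Q_{Y|\tilde X,\tilde Y}$ and $Q_{X|\tilde X, X'}$ guaranteed by Definition~\ref{defn:spoof}, I would exhibit three different attack scenarios whose $n$-length output distributions are statistically indistinguishable at the decoder, and then argue that no decoder can be simultaneously correct in all three.

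Concretely, given a code $(f_{\one}, f_{\two}, \phi)$ with $N_{\one}\geq 2$, let $M_{\one}, M_{\one}' \sim \textsf{Unif}(\cM_{\one})$ and $L \sim \textsf{Unif}(\cM_{\two})$ be mutually independent, and consider: (a) user $\one$ honest with message $M_{\one}$ while user $\two$ attacks with $\vecY \sim Q^n_{Y|\tilde X,\tilde Y}(\cdot\,|\,f_{\one}(M_{\one}'), f_{\two}(L))$; (b) user $\one$ honest with message $M_{\one}'$ while user $\two$ attacks with $\vecY \sim Q^n_{Y|\tilde X,\tilde Y}(\cdot\,|\,f_{\one}(M_{\one}), f_{\two}(L))$; (c) user $\two$ honest with message $L$ while user $\one$ attacks with $\vecX \sim Q^n_{X|\tilde X, X'}(\cdot\,|\,f_{\one}(M_{\one}'), f_{\one}(M_{\one}))$. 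Each attack draws only on the adversary's private randomness (independent of the honest user's message), so all three are admissible. Applying \eqref{eq:spoof1} symbol by symbol to the product kernels, the conditional law of $\vecZ$ given $(M_{\one}, M_{\one}', L)$ coincides across all three scenarios, so $(M_{\one}, M_{\one}', L, \vecZ)$ admits a common joint law that I will use throughout the analysis.

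Against this common law I would define the correctness events $A^c \defineqq \{\phi(\vecZ) \in \{M_{\one}\}\times \cM_{\two} \cup \{\twob\}\}$, $B^c \defineqq \{\phi(\vecZ) \in \{M_{\one}'\}\times \cM_{\two} \cup \{\twob\}\}$, and $C^c \defineqq \{\phi(\vecZ) \in \cM_{\one} \times \{L\} \cup \{\oneb\}\}$. Since randomized attacks by user $\two$ are no more powerful than deterministic ones (and similarly for user $\one$), $\Prob(A), \Prob(B) \leq P_{e,\maltwo}$ and $\Prob(C) \leq P_{e,\malone}$, each upper bounded by the overall error probability $P_e$. For the matching lower bound, a case split on the decoder output gives: if $\phi(\vecZ)=\oneb$ then $A^c$ fails; if $\phi(\vecZ)=\twob$ then $C^c$ fails; and if $\phi(\vecZ)=(a,b)\in\cM_{\one}\times\cM_{\two}$ then $A^c$ and $B^c$ cannot both hold whenever $M_{\one} \neq M_{\one}'$. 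Consequently, whenever $M_{\one} \neq M_{\one}'$ at least one of $A, B, C$ occurs, so
\[
\Prob(A)+\Prob(B)+\Prob(C) \;\geq\; \Prob(M_{\one} \neq M_{\one}') \;=\; 1 - 1/N_{\one} \;\geq\; 1/2,
\]
which combined with $\Prob(A)+\Prob(B)+\Prob(C) \leq 3 P_e$ yields $P_e \geq 1/6 \geq 1/12$.

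The main conceptual step, and the likely obstacle, is the construction of the three linked scenarios and the verification that \eqref{eq:spoof1} lifts to the memoryless $n$-fold setting so that the output distributions genuinely coincide; everything after that is a short pigeonhole plus the standard reduction from random to deterministic attacks. The hypothesis $N_{\one}\geq 2$ enters precisely to ensure $\Prob(M_{\one} \neq M_{\one}') \geq 1/2$.
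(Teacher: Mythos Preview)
Your proposal is correct and follows essentially the same three-scenario indistinguishability argument as the paper: the paper also fixes $(i,j,k)=(M_{\one},M_{\one}',L)$, uses \eqref{eq:spoof1} coordinatewise to equate the three output laws, and then sums the three error probabilities over all $(i,j,k)$ to force at least one error whenever $i\neq j$. Your presentation via a single common joint law is slightly cleaner, and your arithmetic yields the sharper bound $P_e\geq 1/6$; the paper's stated $1/12$ comes from an extra factor of~$2$ in its final summation, so your constant is in fact correct and stronger.
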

\begin{proof}
The proof uses ideas from proof of \cite[Lemma 1, page 187]{CsiszarN88}. Suppose the channel satisfies \eqref{eq:spoof1}. A similar analysis can be done when channel satisfies \eqref{eq:spoof2}. Let $Q_{Y|\tilde{X}, \tilde{Y}}$ and $Q_{X|\tilde{X},X'}$ be attacks satisfying \eqref{eq:spoof1}. For any given $(N_{\one}, N_{\two},n) $ code $(f_{\one},f_{\two}, \phi)$, let $i, j\in \cM_{\one}$ be distinct. For $i \in \cM_{\one}$, let $\vecx_{i} = f_{\one}(i)$. Similarly, for $k\in \cM_{\two}$, let $\vecy_{k}$ denote $f_{\two}(k)$. Consider the following situations.
\begin{itemize}
	\item User \one sends $\vecx_{i}$. User \two is adversarial and its input to the channel is produced by passing $(\vecx_{j}, \vecy_k)$ through the n-fold product channel $Q_{Y|\tilde{X}, \tilde{Y}}$. For $\vecz\in \cZ^n$, the output distribution $\bbP(\vecz)$ (denoted by $P_{i,j,k}(\vecz)$) is given by 
	\begin{align}\label{eq:dist1}
	P_{i,j,k}(\vecz) = \prod_{t=1}^n\sum_{y\in\cY}Q_{Y|\tilde{X}, \tilde{Y}}(y|\vecx_{j}(t),\vecy_{k}(t))W_{Z|X,Y}(\vecz(t)|\vecx_{i}(t),y).
	\end{align}

	\item User \two sends  $\vecy_k$. User \one is adversarial and its input $\vecX_{i,j}$ to the channel is produced by passing $(\vecx_{i}, \vecx_{j})$ through the n-fold product channel $Q_{X|\tilde{X}, X'}$.  For $\vecz\in \cZ^n$, the output distribution $\bbP(\vecz)$ (denoted by $Q_{i,j,k}(\vecz)$) is given by 
\begin{align}\label{eq:dist2}
	Q_{i,j,k}(\vecz) = \prod_{t=1}^n\sum_{x\in\cX}Q_{X|\tilde{X}, X'}(x|\vecx_{i}(t),\vecx_{j}(t))W_{Z|X,Y}(\vecz(t)|x,\vecy_{k}(t)).
	\end{align}
\end{itemize}
By \eqref{eq:spoof1}, we see that for all $i,j\in \cM_{\one}, k\in \cM_{\two}$ and $\vecz\in \cZ^n$, $P_{i,j,k}(\vecz)=P_{j,i,k}(\vecz)=Q_{i,j,k}(\vecz)$. From \eqref{eq:mal2} and \eqref{eq:dist1}, we see that
\begin{align*}
P_{e,\maltwo}&\geq \frac{1}{N^2_{\one}\times N_{\two}}\sum_{i,j\in\cM_{\one}}\sum_{k\in \cM_{\two}}\sum_{\vecz:\phi_{\one}(\vecz)\notin\{i,\two\}}P_{i,j,k}(\vecz)
\end{align*} and 
\begin{align*}
P_{e,\maltwo}&\geq \frac{1}{N^2_{\one}\times N_{\two}}\sum_{i,j\in\cM_{\one}}\sum_{k\in \cM_{\two}}\sum_{\vecz:\phi_{\one}(\vecz)\notin\{j,\two\}}P_{j,i,k}(\vecz).
\end{align*} Using \eqref{eq:mal1} and \eqref{eq:dist2}, we obtain
\begin{align*}
P_{e,\malone}&\geq \frac{1}{N^2_{\one}\times N_{\two}}\sum_{i,j\in\cM_{\one}}\sum_{k\in \cM_{\two}}\sum_{\vecz:\phi_{\two}(\vecz)\notin\{k,\one\}}Q_{i,j,k}(\vecz).
\end{align*}
\noindent Thus,
\begin{align*}
3&P_{e}(f_{\one},f_{\two},\phi)\geq P_{e,\maltwo}+P_{e,\maltwo}+P_{e,\malone}\\
&\geq \frac{1}{N^2_{\one}\times N_{\two}}\sum_{i,j\in\cM_{\one}}\sum_{k\in \cM_{\two}}\inp{\sum_{\vecz:\phi_{\one}(\vecz)\notin\{i,\two\}}P_{i,j,k}(\vecz) + \sum_{\vecz:\phi_{\one}(\vecz)\notin\{j,\two\}}P_{j,i,k}(\vecz) +\sum_{\vecz:\phi_{\two}(\vecz)\notin\{k,\one\}}Q_{i,j,k}(\vecz) }\\
&\stackrel{\text{(a)}}{=}\frac{1}{N^2_{\one}\times N_{\two}}\sum_{i,j\in\cM_{\one}}\sum_{k\in \cM_{\two}}\inp{\sum_{\vecz:\phi_{\one}(\vecz)\notin\{i,\two\}}P_{i,j,k}(\vecz) + \sum_{\vecz:\phi_{\one}(\vecz)\notin\{j,\two\}}P_{i,j,k}(\vecz) +\sum_{\vecz:\phi_{\two}(\vecz)\notin\{k,\one\}}P_{i,j,k}(\vecz) }\\
&\geq\frac{1}{N^2_{\one}\times N_{\two}}\sum_{i,j\in\cM_{\one}, i\neq j}\sum_{k\in \cM_{\two}}\inp{\sum_{\vecz\in\cZ^n}P_{i,j,k}(\vecz)}\\
&=\frac{N_{\one}(N_{\one}-1)N_{\two}}{2N^2_{\one}\times N_{\two}}\\
& = \frac{N_{\one}-1}{2N_{\one}}\\
&\geq \frac{1}{4}
\end{align*}
where (a) follows by noting that $P_{i,j,k}(\vecz)=P_{j,i,k}(\vecz)=Q_{i,j,k}(\vecz)$. Thus, for any given code $(f_{\one}, f_{\two}, \phi)$, for a spoofable channel $P_e(f_{\one},f_{\two}, \phi)\geq \frac{1}{12}$.
A similar analysis follows when the channel is \two-\spoofable.
\end{proof}
Next, we show our positive result.

\begin{lemma}\label{thm:detCodesPositivity}
The rate region for deterministic codes is non-empty if the channel is {\em non-\spoofable}.
\end{lemma}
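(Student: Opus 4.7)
The plan is to construct an adversary-identifying code via random coding and analyze the decoder described informally in the main text. Fix distributions $P_{\one}$ on $\cX$ and $P_{\two}$ on $\cY$, and draw codewords $f_{\one}(m)\sim P_{\one}^n$ and $f_{\two}(m)\sim P_{\two}^n$ independently, with $N_{\one} = \lfloor 2^{nR_{\one}}\rfloor$ and $N_{\two} = \lfloor 2^{nR_{\two}}\rfloor$ for some small $R_{\one},R_{\two}>0$. Following the Csisz\'ar--Narayan recipe, I will first show that with positive probability the resulting codebook has the following ``good'' property: for every constant-size tuple of codewords (drawn from either or both users), the joint empirical distribution is close to the corresponding product distribution. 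This is standard and underlies all subsequent typicality estimates.

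Next, I would verify each of the three error criteria separately. For $P_{e,\na}$, the channel output $\vecZ$ is strongly typical with $(f_{\one}(m_{\one}), f_{\two}(m_{\two}))$ w.h.p., so $m_{\one}\in D_{\one}(\eta,\vecZ)$ and $m_{\two}\in D_{\two}(\eta,\vecZ)$; by the good-codebook property, no other candidate survives the membership and pruning conditions, provided $R_{\one}$ and $R_{\two}$ are small enough relative to $\eta$. For $P_{e,\malone}$, fix an arbitrary attack $\vecx\in\cX^n$; one shows that with high probability either $D_{\one}(\eta,\vecZ)=\emptyset$ (in which case the decoder outputs $\oneb$, which is correct), or $D_{\two}(\eta,\vecZ)=\{m_{\two}\}$ and $D_{\one}(\eta,\vecZ)$ is a singleton, so the output $(\hat m_{\one},m_{\two})$ is acceptable. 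The first two assertions follow from strong typicality and the good-codebook property; the singleton assertion reduces to the structural lemma below. $P_{e,\maltwo}$ is handled symmetrically.

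The structural lemma, in the spirit of \cite[Lemma~4]{CsiszarN88}, asserts that for a non-\spoofable MAC there exists $\eta>0$ such that whenever both $D_{\one}(\eta,\vecz)$ and $D_{\two}(\eta,\vecz)$ are non-empty, they are singletons. I would prove this by contradiction via compactness: if the lemma failed, one could extract a sequence with $\eta_k\to 0$ of candidate configurations passing all membership and pruning constraints, and from their joint types build a limiting distribution $P_{XY\tilde X\tilde YZ}$ satisfying $P_{X\tilde YZ}=P_{\one}P_{\tilde Y}W$ and $P_{\tilde XYZ}=P_{\tilde X}P_{\two}W$ together with the near-independencies $I(\tilde X\tilde Y;XZ|Y)\to 0$ and $I(\tilde Y_1\tilde Y_2;XZ|Y)\to 0$ forced by the pruning. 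Reading off the appropriate conditional marginals would then yield attacks $Q_{Y|\tilde X,\tilde Y}$ and $Q_{X|\tilde X,X'}$ whose induced output laws coincide as in Fig.~\ref{fig:spoof1}, contradicting non-\spoofability via Definition~\ref{defn:spoof}.

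The main obstacle is carrying out this structural lemma uniformly and cleanly: matching empirical joint types to the abstract distributions in Definition~\ref{defn:spoof}, handling that the witness attack vectors $\vecy$ qualifying candidates are \emph{arbitrary} sequences (not codewords) so their joint distribution with the codewords must be controlled by a union bound over joint types, and selecting a single $\eta>0$ that works simultaneously for all attack inputs, channel outputs, and message indices. A related subtlety, which the structural lemma is designed to rule out, is eliminating the ``mixed'' failure mode in the $P_{e,\malone}$ analysis in which $D_{\one}(\eta,\vecZ)$ contains a spurious $\tilde m_{\one}\neq m_{\one}$ while $m_{\two}$ still lies in $D_{\two}(\eta,\vecZ)$; it is precisely here that non-\spoofability is indispensable, whereas mere non-symmetrizability of the induced AVCs would not suffice.
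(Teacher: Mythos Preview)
Your overall framework is right and matches the paper: constant-composition (or near-constant) codebooks, the Csisz\'ar--Narayan style decoder with the sets $D_{\one}(\eta,\vecz)$, $D_{\two}(\eta,\vecz)$, a structural lemma ruling out multiple survivors, and then type-based error bounds. The paper also short-circuits the honest case via $P_{e,\na}\le P_{e,\malone}+P_{e,\maltwo}$, which is cleaner than a separate analysis, but your way would work.

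The real gap is in your structural lemma. You list only the conditions ``$I(\tilde X\tilde Y;XZ|Y)\to 0$ and $I(\tilde Y_1\tilde Y_2;XZ|Y)\to 0$,'' which come from the pruning checks applied to the \emph{first} candidate $m_{\one}$. But to contradict the three-way equality in Definition~\ref{defn:spoof}, you need three near-independence constraints, not one. If $m_{\one},\tilde m_{\one}\in D_{\one}(\eta,\vecz)$ and $\tilde m_{\two}\in D_{\two}(\eta,\vecz)$, then \emph{each} of them passes its own pruning check against the other two: $m_{\one}$ gives $I(\tilde X\tilde Y;XZ|Y)<\eta$, $\tilde m_{\one}$ gives $I(X\tilde Y;\tilde XZ|Y')<\eta$ (with its own witness $\vecy'$), and $\tilde m_{\two}$ gives $I(X\tilde X;\tilde YZ|X')<\eta$ (with its own witness $\vecx'$). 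Together with the three membership conditions $P_{XYZ},P_{\tilde XY'Z},P_{X'\tilde YZ}\in\cD_\eta$, these six constraints are what the paper's Lemma~\ref{lemma:disambiguity} shows are jointly infeasible for small $\eta$; from them one extracts $P_{Y|\tilde X\tilde Y}$, $P_{Y'|X\tilde Y}$, $P_{X'|X\tilde X}$ and derives all three lines of \eqref{eq:spoof1}. With only the single constraint you wrote, you would at best recover one equality, which is not enough to force \spoofability. The compactness route is fine, but you must collect the constraints from all three surviving candidates, and you also need the full-support assumption $\min_x P_{\one}(x),\min_y P_{\two}(y)>0$ so that the resulting total-variation bounds translate to pointwise bounds on the induced channels.
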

\begin{proof}

\noindent {\em Encoding.} 
For some $P_{\one}$ and $P_{\two}$ satisfying $\min_{x\in \cX}P_{\one}(x)>0$ and $\min_{y\in \cY}P_{\two}(y)>0$ respectively, and $\epsilon>0$ (TBD), the codebook is given by Lemma~\ref{lemma:codebook}.  For $\mo \in \cM_{\one}$, $f_{\one}(\mo) = \vecx_{\mo}$ and for $\mt \in \cM_{\two}$, $f_{\two}(\mt) = \vecy_{\mt}$.\\
{\em Decoding.} 
For a parameter $\eta>0$, let $\cD_{\eta}$ be the set of joint distributions defined as
$D_\eta \defineqq  \inb{P_{XYZ}\in \cP^n_{\cX\times \cY \times \cZ}:\, D\inp{P_{XYZ}||P_XP_YW}\leq\eta }$.
For the given codebook, the parameter $\eta$ and the received output sequence $\vecz$, let $D_{\one}(\eta, \vecz)$ be defined as the set of messages $m_{\one}\in\mathcal{M}_{\one}$ such that there exists $\vecy\in \cY^n$ satisfying the following conditions:
\begin{enumerate}[label=(\roman*)]
\item $\inp{f_{\one}(m_{\one}), \vecy, \vecz} \in T^{n}_{XYZ}$ for some $P_{XYZ}\in \cD_{\eta}.$ \label{check:1}
\item {For every} $(\tilde{m}_{\one},\, \tilde{m_{\two}})\in \cM_{\one}\times \cM_{\two}, \, \tilde{m}_{\one}\neq m_{\one}$ and  $(\vecy', \, \vecx')\in \cY^n\times \cX^n$ such that $\inp{f_{\one}(\mo), \vecy, f_{\one}(\tilde{m}_{\one}), \vecy',\vecx', f_{\two}(\tilde{m}_{\two}), \vecz}\in T^n_{XY\tilde{X}Y'X'\tilde{Y}Z}$, $P_{\tilde{X}Y'Z}\in \cD_{\eta}$ and  $P_{X'\tilde{Y}Z}\in \cD_{\eta}$, {we require that} $I(\tilde{X}\tilde{Y};XZ|Y)<\eta$.\label{check:2}
\item {For every} $\tilde{m}_{\two 1},\, \tilde{m}_{\two 2}\in \cM_{\two}$, and $\vecx'_1, \, \vecx'_2\in \cX^n$ such that $\inp{f_{\one}(\mo), \vecy,\vecx'_1, f_{\two}(\tilde{m}_{\two 1}),\vecx'_2, f_{\two}(\tilde{m}_{\two 2}),\vecz}\in T^n_{XYX'_1\tilde{Y}_1X'_2\tilde{Y}_2Z}$, $P_{X'_1\tilde{Y}_1Z}\in \cD_{\eta}$ and  $P_{X'_2\tilde{Y}_2Z}\in \cD_{\eta}$, {we require that} $I(\tilde{Y}_1\tilde{Y}_2;XZ|Y)< \eta$.\label{check:3}
\end{enumerate}
We define $D_{\two}(\eta, \vecz)$ analogously (by interchanging the roles of user \one and \two).
\begin{align*}
\phi(\vecz) \defineqq\begin{cases}(\mo,\mt), &\text{ if }D_{\one}(\eta, \vecz)\times D_{\two}(\eta, \vecz) = \{(\mo,\mt)\}\\ \oneb, &\text{ if }|D_{\one}(\eta, \vecz)| = 0, \, |D_{\two}(\eta, \vecz)| \neq 0\\\twob, &\text{ if }|D_{\two}(\eta, \vecz)| = 0, \, |D_{\one}(\eta, \vecz)| \neq 0\\(1,1)&\text{ otherwise}
																\end{cases}
\end{align*}

For small enough choice of $\eta>0$, Lemma~\ref{lemma:disambiguity} implies that if $|D_{\one}(\eta, \vecz)|, \, |D_{\two}(\eta, \vecz)| \geq 1$, then $|D_{\two}(\eta, \vecz)|=|D_{\one}(\eta, \vecz)| = 1$. {To see this, suppose $|D_{\one}(\eta, \vecz)|\geq 2$ and $|D_{\two}(\eta, \vecz)| \geq 1$. Let $m_{\one}, \tilde{m}_{\one}\in D_{\one}(\eta, \vecz)$ and $\tilde{m}_{\two}\in D_{\two}(\eta, \vecz)$. This implies that there exist $\vecx$, $\vecy$ and $\vecy'$ such that for $\inp{f_{\one}(\mo), \vecy, f_{\one}(\tilde{m}_{\one}), \vecy',\vecx', f_{\two}(\tilde{m}_{\two}), \vecz}\in T^n_{XY\tilde{X}Y'X'\tilde{Y}Z}$, $P_{XYZ}\in \cD_{\eta}$, $P_{\tilde{X}Y'Z}\in \cD_{\eta}$, $P_{X'\tilde{Y}Z}\in \cD_{\eta}$, $I(\tilde{X}\tilde{Y};XZ|Y)<\eta$, $I(X\tilde{Y};\tilde{X}Z|Y')<\eta$ and $I(X\tilde{X};\tilde{Y}Z|X')<\eta$. This is not possible because of Lemma \ref{lemma:disambiguity}.}

\begin{lemma}\label{lemma:disambiguity}
{For a channel which is not \one-spoofable,} there does not exist a distribution $P_{XY\tilde{X}Y'X'\tilde{Y}Z}\in \cP^n_{XY\tilde{X}Y'X'\tilde{Y}Z}$ with $\min_{x}P_X(x), \min_{\tilde{x}}P_{\tilde{X}}(\tilde{x}), \min_{\tilde{y}}P_{\tilde{Y}}(\tilde{y})\geq \alpha>0$ which{, for a small enough $\eta>0$,} satisfies the following:
\begin{enumerate}[label=(\Alph*)]
	\item $P_{XYZ}\in D_{\eta}$,\label{disamb:1}
	\item $P_{\tilde{X}Y'Z}\in D_{\eta}$,\label{disamb:2}
	\item $P_{X'\tilde{Y}Z}\in D_{\eta}$,\label{disamb:3}
	\item $I(\tilde{X}\tilde{Y};XZ|Y)<\eta$,\label{disamb:4}
	\item $I(X\tilde{Y};\tilde{X}Z|Y')<\eta$ and\label{disamb:5}
	\item $I(X\tilde{X};\tilde{Y}Z|X')<\eta$.\label{disamb:6}
\end{enumerate}
{Similarly, for a channel which is not \two-\spoofable, there does not exist a distribution $P_{X'_1\tilde{Y}_1X'_2\tilde{Y}_2XYZ}\in \cP^n_{X'_1\tilde{Y}_1X'_2\tilde{Y}_2XYZ}$ with $\min_{x}P_X(x), \min_{\tilde{y}_1}P_{\tilde{Y}_1}(\tilde{y}_1), \min_{\tilde{y}_2}P_{\tilde{Y}_2}(\tilde{y}_2)\geq \alpha>0$ which, for a small enough $\eta>0$, satisfies the following:
\begin{enumerate}[label=(\Alph*)]
	\item $P_{XYZ}\in D_{\eta}$,
	\item $P_{X'_1\tilde{Y}_1Z}\in D_{\eta}$,
	\item $P_{X'_2\tilde{Y}_2Z}\in D_{\eta}$,
	\item $I(\tilde{Y}_1\tilde{Y}_2;XZ|Y)<\eta$,
	\item $I(X\tilde{Y}_2;\tilde{Y}_1Z|X'_1)<\eta$ and
	\item $I(X\tilde{Y}_1;\tilde{Y}_2Z|X'_2)<\eta$.
\end{enumerate}}
\end{lemma}

\begin{proof}
Suppose for a channel which is not \one-\spoofable, there exists $P_{XY\tilde{X}Y'X'\tilde{Y}Z}\in \cP^n_{XY\tilde{X}Y'X'\tilde{Y}Z}$ which satisfies \ref{disamb:1}-\ref{disamb:6}.
Using~\ref{disamb:1} and \ref{disamb:4}, we obtain that
{
\begin{align*}
2\eta&\geq D(P_{XYZ}||P_XP_YW)+I(\tilde{X}\tilde{Y};XZ|Y)\\
&=D(P_{XYZ}||P_XP_YW_{Z|X,Y})+D(P_{XY\tilde{X}\tilde{Y}Z}||P_YP_{\tilde{X}\tilde{Y}|Y}P_{XZ|Y})\\
&=\sum_{x,y, \tilde{x}, \tilde{y}, z}P_{XY\tilde{X}\tilde{Y}Z}(x,y, \tilde{x}, \tilde{y}, z)\left(\log{\left\{\frac{P_{XYZ}(x, y, z)}{P_X(x)P_Y(y)W_{Z|X,Y}(z|x,y)}\right\}} + \log{\left\{\frac{P_{XY\tilde{X}\tilde{Y}Z}(x, y, \tilde{x}, \tilde{y}, z)}{P_Y(y)P_{\tilde{X}\tilde{Y}|Y}(\tilde{x}, \tilde{y}|y)P_{XZ|Y}(x,z|y)}\right\}}\right)\\
&=\sum_{x,y, \tilde{x}, \tilde{y}, z}P_{XY\tilde{X}\tilde{Y}Z}(x,y, \tilde{x}, \tilde{y}, z)\left(\log{\left\{\frac{P_{XYZ}(x, y, z)\times P_{XY\tilde{X}\tilde{Y}Z}(x, y, \tilde{x}, \tilde{y}, z)}{P_X(x)P_Y(y)W_{Z|X,Y}(z|x,y)\times P_Y(y)P_{\tilde{X}\tilde{Y}|Y}(\tilde{x}, \tilde{y}|y)P_{XZ|Y}(x,z|y)}\right\}}\right)\\
&=\sum_{x,y, \tilde{x}, \tilde{y}, z}P_{XY\tilde{X}\tilde{Y}Z}(x,y, \tilde{x}, \tilde{y}, z)\left(\log{\left\{\frac{ P_{XY\tilde{X}\tilde{Y}Z}(x, y, \tilde{x}, \tilde{y}, z)}{P_X(x)P_Y(y)W_{Z|X,Y}(z|x,y) P_{\tilde{X}\tilde{Y}|Y}(\tilde{x}, \tilde{y}|y)}\right\}}\right)\\
&=\sum_{x,y, \tilde{x}, \tilde{y}, z}P_{XY\tilde{X}\tilde{Y}Z}(x,y, \tilde{x}, \tilde{y}, z)\left(\log{\left\{\frac{ P_{XY\tilde{X}\tilde{Y}Z}(x, y, \tilde{x}, \tilde{y}, z)}{P_X(x) W_{Z|X,Y}(z|x,y) P_{Y\tilde{X}\tilde{Y}}(y, \tilde{x}, \tilde{y})}\right\}}\right)\\
& = D(P_{XY\tilde{X}\tilde{Y}Z}||P_XP_{\tilde{X}\tilde{Y}}P_{Y|\tilde{X}\tilde{Y}}W)\\
&\stackrel{(a)}{\geq} D(P_{X\tilde{X}\tilde{Y}Z}||P_XP_{\tilde{X}\tilde{Y}}V^1_{Z|X\tilde{X}\tilde{Y}})
\end{align*}
}
where $V^{(1)}_{Z|X\tilde{X}\tilde{Y}}(z|x,\tilde{x},\tilde{y}) \defineqq \sum_{y}P_{Y|\tilde{X}\tilde{Y}}(y|\tilde{x},\tilde{y}) W(z|x,y)$ and $(a)$ follows from the log sum inequality. Using Pinsker's inequality,
\begin{align}
d_{TV}\inp{P_{X\tilde{X}\tilde{Y}Z},P_XP_{\tilde{X}\tilde{Y}}V^{(1)}_{Z|X\tilde{X}\tilde{Y}}}< \sqrt{\eta}.\label{disambeq:1}
\end{align}
Similarly, using~\ref{disamb:2} and \ref{disamb:5}, we obtain 

\begin{align}
d_{TV}\inp{P_{\tilde{X}X\tilde{Y}Z},P_{\tilde{X}}P_{{X}\tilde{Y}}V^{(2)}_{Z|X\tilde{X}\tilde{Y}}}< \sqrt{\eta}\label{disambeq:2}
\end{align}
where $V^{(2)}_{Z|X\tilde{X}\tilde{Y}}(z|x,\tilde{x},\tilde{y}) \defineqq \sum_{y'}P_{Y'|{X}\tilde{Y}}(y'|{x},\tilde{y}) W(z|\tilde{x},y')$. Finally, using~\ref{disamb:3} and \ref{disamb:6}, we get
\begin{align}
d_{TV}\inp{P_{X\tilde{X}\tilde{Y}Z},P_{X\tilde{X}}P_{\tilde{Y}}V^{(3)}_{Z|X\tilde{X}\tilde{Y}}}< \sqrt{\eta}\label{disambeq:3}
\end{align}
where $V^{(3)}_{Z|X\tilde{X}\tilde{Y}}(z|x,\tilde{x},\tilde{y})\defineqq \sum_{x'}P_{X'|X\tilde{X}}(x'|x,\tilde{x})W(z|x',\tilde{y})$.

\begin{align*}
&2d_{TV}\inp{P_XP_{\tilde{X}\tilde{Y}}V^{(1)}_{Z|X\tilde{X}\tilde{Y}},P_XP_{\tilde{X}}P_{\tilde{Y}}V^{(1)}_{Z|X\tilde{X}\tilde{Y}}}\\
&=\sum_{x,\tilde{x},\tilde{y},z}\left|P_X(x)P_{\tilde{X}\tilde{Y}}(\tilde{x},\tilde{y})V^{(1)}_{Z|X\tilde{X}\tilde{Y}}(z|x,\tilde{x},\tilde{y})-P_X(x)P_{\tilde{X}}(\tilde{x})P_{\tilde{Y}}(\tilde{y})V^{(1)}_{Z|X\tilde{X}\tilde{Y}}(z|x,\tilde{x},\tilde{y})\right|\\
& = \sum_{x,\tilde{x},\tilde{y}}\inp{\sum_{z}V^{(1)}_{Z|X\tilde{X}\tilde{Y}}(z|x,\tilde{x},\tilde{y})}\left|P_X(x)P_{\tilde{X}\tilde{Y}}(\tilde{x},\tilde{y})-P_X(x)P_{\tilde{X}}(\tilde{x})P_{\tilde{Y}}(\tilde{y})\right|\\
& = \inp{\sum_xP_X(x)}\sum_{\tilde{x},\tilde{y}}\left|P_{\tilde{X}\tilde{Y}}(\tilde{x},\tilde{y})-P_{\tilde{X}}(\tilde{x})P_{\tilde{Y}}(\tilde{y})\right|\\
& = 2d_{TV}\inp{P_{\tilde{X}\tilde{Y}},P_{\tilde{X}}P_{\tilde{Y}}} <\sqrt{\eta} \text{ by using ~\eqref{disambeq:2}}.
\end{align*}
{
We use this and \eqref{disambeq:1}, to show
\begin{align}
3\sqrt{\eta}/2 &\geq d_{TV}\inp{P_{X\tilde{X}\tilde{Y}Z},P_XP_{\tilde{X}\tilde{Y}}V^{(1)}_{Z|X\tilde{X}\tilde{Y}}} + d_{TV}\inp{P_XP_{\tilde{X}\tilde{Y}}V^{(1)}_{Z|X\tilde{X}\tilde{Y}},P_XP_{\tilde{X}}P_{\tilde{Y}}V^{(1)}_{Z|X\tilde{X}\tilde{Y}}}\\
&\stackrel{\text{(a)}}{\geq} d_{TV}\inp{P_{X\tilde{X}\tilde{Y}Z},P_XP_{\tilde{X}}P_{\tilde{Y}}V^{(1)}_{Z|X\tilde{X}\tilde{Y}}}
\end{align} 
where (a) uses the triangle inequality.}
Thus, 
\begin{align}
d_{TV}\inp{P_{X\tilde{X}\tilde{Y}Z},P_XP_{\tilde{X}}P_{\tilde{Y}}V^{(1)}_{Z|X\tilde{X}\tilde{Y}}} \leq 3\sqrt{\eta}/2.\label{eq:sym1}
\end{align}
Similarly, using \eqref{disambeq:1} to show that $
d_{TV}\inp{P_{\tilde{X}}P_{X}P_{\tilde{Y}}V^{(2)}_{Z|X\tilde{X}\tilde{Y}},P_{\tilde{X}}P_{{X}\tilde{Y}}V^{(2)}_{Z|X\tilde{X}\tilde{Y}}}<\sqrt{\eta}/2$ and \eqref{disambeq:2}, we obtain
\begin{align}
d_{TV}\inp{P_{X\tilde{X}\tilde{Y}Z},P_XP_{\tilde{X}}P_{\tilde{Y}}V^{(2)}_{Z|X\tilde{X}\tilde{Y}}}\leq 3\sqrt{\eta}/2 \label{eq:sym2}
\end{align} and using \eqref{disambeq:1} to show that $d_{TV}\inp{P_{X}P_{\tilde{X}}P_{\tilde{Y}}V^{(3)}_{Z|X\tilde{X}\tilde{Y}},P_{X\tilde{X}}P_{\tilde{Y}}V^{(3)}_{Z|X\tilde{X}\tilde{Y}}}$ and \eqref{disambeq:3}, we obtain
\begin{align}
d_{TV}\inp{P_{X\tilde{X}\tilde{Y}Z},P_XP_{\tilde{X}}P_{\tilde{Y}}V^{(3)}_{Z|X\tilde{X}\tilde{Y}}}\leq 3\sqrt{\eta}/2.\label{eq:sym3}
\end{align}
{Suppose the channel is not \one-\spoofable (i.e. \eqref{eq:spoof1} does not hold), then there exists $\zeta>0$ such that for every $Q_{Y|\tilde{X} \tilde{Y}}$ and $Q_{X|\tilde{X}X'}$, at least one of the following two conditions hold:
\begin{align}
&\max_{x',\tilde{x}, \tilde{y}, z}\left|\sum_y Q_{Y|\tilde{X}\tilde{Y}}(y|\tilde{x}, \tilde{y})W_{Z|XY}(z|x',y)-\sum_y Q_{Y|\tilde{X}\tilde{Y}}(y|x', \tilde{y})W_{Z|XY}(z|\tilde{x}	,y)\right|>\zeta \label{eq:condition1}\\
&\max_{x',\tilde{x}, \tilde{y}, z}\left|\sum_y Q_{Y|\tilde{X}\tilde{Y}}(y|\tilde{x}, \tilde{y})W_{Z|XY}(z|x',y)-\sum_y Q_{X|\tilde{X}X'}(x|\tilde{x}, x')W_{Z|XY}(z|x,\tilde{y})\right|>\zeta \label{eq:condition2}\\
\end{align}
Suppose \eqref{eq:condition2} holds. We use \eqref{eq:sym1} and \eqref{eq:sym3} to write the following:
\begin{align*}
d_{TV}&\inp{P_XP_{\tilde{X}}P_{\tilde{Y}}V^{(1)}_{Z|X\tilde{X}\tilde{Y}}, P_XP_{\tilde{X}}P_{\tilde{Y}}V^{(3)}_{Z|X\tilde{X}\tilde{Y}}} \\
&\leq d_{TV}\inp{P_{X\tilde{X}\tilde{Y}Z},P_XP_{\tilde{X}}P_{\tilde{Y}}V^{(1)}_{Z|X\tilde{X}\tilde{Y}}} +d_{TV}\inp{P_{X\tilde{X}\tilde{Y}Z},P_XP_{\tilde{X}}P_{\tilde{Y}}V^{(3)}_{Z|X\tilde{X}\tilde{Y}}} \\
&\leq 3\sqrt{\eta}.
\end{align*}
Thus, 
\begin{align*}
&\max_{x,\tilde{x}, \tilde{y}, z}\alpha^3\left|\sum_{y}P_{Y|\tilde{X}\tilde{Y}}(y|\tilde{x},\tilde{y}) W(z|x,y) -  \sum_{x'}P_{X'|X\tilde{X}}(x'|x,\tilde{x})W(z|x',\tilde{y})\right|\\
&\leq\max_{x,\tilde{x}, \tilde{y}, z}P_X(x)P_{\tilde{X}}(\tilde{x})P_{\tilde{Y}}(\tilde{y})\left|\sum_{y}P_{Y|\tilde{X}\tilde{Y}}(y|\tilde{x},\tilde{y}) W(z|x,y) -  \sum_{x'}P_{X'|X\tilde{X}}(x'|x,\tilde{x})W(z|x',\tilde{y})\right|\\
&=\max_{x,\tilde{x}, \tilde{y}, z}\left|P_X(x)P_{\tilde{X}}(\tilde{x})P_{\tilde{Y}}(\tilde{y})\sum_{y}P_{Y|\tilde{X}\tilde{Y}}(y|\tilde{x},\tilde{y}) W(z|x,y) -  P_X(x)P_{\tilde{X}}(\tilde{x})P_{\tilde{Y}}(\tilde{y})\sum_{x'}P_{X'|X\tilde{X}}(x'|x,\tilde{x})W(z|x',\tilde{y})\right|\\
&\leq \sum_{{x,\tilde{x}, \tilde{y}, z}}\left|P_X(x)P_{\tilde{X}}(\tilde{x})P_{\tilde{Y}}(\tilde{y})\sum_{y}P_{Y|\tilde{X}\tilde{Y}}(y|\tilde{x},\tilde{y}) W(z|x,y) -  P_X(x)P_{\tilde{X}}(\tilde{x})P_{\tilde{Y}}(\tilde{y})\sum_{x'}P_{X'|X\tilde{X}}(x'|x,\tilde{x})W(z|x',\tilde{y})\right|\\
&= 2d_{TV}\inp{P_XP_{\tilde{X}}P_{\tilde{Y}}V^{(1)}_{Z|X\tilde{X}\tilde{Y}}, P_XP_{\tilde{X}}P_{\tilde{Y}}V^{(3)}_{Z|X\tilde{X}\tilde{Y}}}\\
&\leq 3\sqrt{\eta}.
\end{align*}
This contradicts \eqref{eq:condition2} for $\zeta>3\sqrt{\eta}/\alpha^3$.
Next, we consider the case when \eqref{eq:condition1} holds. In this case, for any $P_{Y|\tilde{X}\tilde{Y}}$ and $P_{Y'|{X}\tilde{Y}}$,
\begin{align*}
&2\max_{x,\tilde{x}, \tilde{y}, z}\left|\sum_{y}P_{Y|\tilde{X}\tilde{Y}}(y|\tilde{x},\tilde{y}) W(z|x,y)-\sum_{y'}P_{Y'|{X}\tilde{Y}}(y'|{x},\tilde{y}) W(z|\tilde{x},y')\right|\\
&= \max_{x,\tilde{x}, \tilde{y}, z}\left|\sum_{y}P_{Y|\tilde{X}\tilde{Y}}(y|\tilde{x},\tilde{y}) W(z|x,y)-\sum_{y'}P_{Y'|{X}\tilde{Y}}(y'|{x},\tilde{y}) W(z|\tilde{x},y')\right| \\
&\qquad + \max_{x,\tilde{x}, \tilde{y}, z}\left|\sum_{y}P_{Y'|{X}\tilde{Y}}(y|\tilde{x},\tilde{y}) W(z|x,y) - \sum_{y'}P_{Y|\tilde{X}\tilde{Y}}(y'|{x},\tilde{y}) W(z|\tilde{x},y')\right|\\
&\geq 2\max_{x,\tilde{x}, \tilde{y}, z}\left|\sum_{y}\inp{\frac{P_{Y|\tilde{X}\tilde{Y}}(y|\tilde{x},\tilde{y})+P_{Y'|{X}\tilde{Y}}(y|\tilde{x},\tilde{y})}{2}}W(z|x,y)-\sum_{y'}\inp{\frac{P_{Y'|{X}\tilde{Y}}(y'|{x},\tilde{y})+P_{Y|\tilde{X}\tilde{Y}}(y'|{x},\tilde{y}))}{2}} W(z|\tilde{x},y')\right| \\
&\stackrel{\text{(a)}}{=} 2\max_{x,\tilde{x}, \tilde{y}, z}\left|\sum_{y}Q_{Y|\tilde{X}\tilde{Y}}(y|\tilde{x},\tilde{y})W(z|x,y)-\sum_{y'}Q_{Y|\tilde{X}\tilde{Y}}(y'|{x},\tilde{y}) W(z|\tilde{x},y')\right|\\
\stackrel{\text{(b)}}{\geq} 2\zeta
\end{align*}
where (a) follows by defining $Q_{Y|\tilde{X}\tilde{Y}}\defineqq \frac{P_{Y'|{X}\tilde{Y}}(y'|{x},\tilde{y})+P_{Y|\tilde{X}\tilde{Y}}(y'|{x},\tilde{y}))}{2}$ and (b) follows from \eqref{eq:condition1}.
Thus, 
\begin{align}\max_{x,\tilde{x}, \tilde{y}, z}\left|\sum_{y}P_{Y|\tilde{X}\tilde{Y}}(y|\tilde{x},\tilde{y}) W(z|x,y)-\sum_{y'}P_{Y'|{X}\tilde{Y}}(y'|{x},\tilde{y}) W(z|\tilde{x},y')\right|\geq \zeta. \label{eq:condition3}
\end{align}
Using \eqref{eq:sym1} and \eqref{eq:sym2}, we can show that \eqref{eq:condition3} (and thus, \eqref{eq:condition1}) does not hold for $\zeta>3\sqrt{\eta}/\alpha^3$.\\
This completes the proof of the first statement. The proof of the second statement is along the same lines as the proof of the first statement. It can be obtained by interchanging the roles of users \one and \two and making the following replacements in the above proof: $X\rightarrow \tilde{Y}_1, \, Y\rightarrow X_1', \, \tilde{X}\rightarrow \tilde{Y}_2, \, {Y}'\rightarrow X'_2, \, X'\rightarrow Y, \text{ and }\tilde{Y}\rightarrow X$.}
\end{proof}

Fix $R_{\one} = R_{\two} = \delta$ for some positive $\delta$ (TBD).
We start by showing that $P_{e,\na}$ can be upper bounded by sum of $P_{e,\malone}$ and $P_{e,\maltwo}$. So, we only need to analyse the case when a user is malicious. To show this, we note that $\cE_{\mo, \mt}  = \cE_{\mo}\cup\cE_{\mt}$. Thus,
\begin{align*}
&P_{e,\na}\hspace{-0.25em} =
\frac{1}{N_{\one}\cdot N_{\two}} \sum_{(\mo, \mt)\in \mathcal{M}_{\one}\times\mathcal{M}_{\two}}W^n\inp{\cE_{\mo}\cup\cE_{\mt}|f_{\one}(\mo), f_{\two}(\mt)}\\
&\leq\frac{1}{N_{\one}\cdot N_{\two}} \sum_{(\mo, \mt)\in \mathcal{M}_{\one}\times\mathcal{M}_{\two}}\Big(W^n\inp{\cE_{\mo}|f_{\one}(\mo), f_{\two}(\mt)}+W^n\inp{\cE_{\mt}|f_{\one}(\mo), f_{\two}(\mt)}\Big)\\
&=\frac{1}{ N_{\two}}\sum_{\mt\in\cM_{\two}}\inp{\frac{1}{N_{\one}} \sum_{\mo\in \mathcal{M}_{\one}}W^n\inp{\cE_{\mo}|f_{\one}(\mo), f_{\two}(\mt)}}\\
&\qquad \qquad+\frac{1}{N_{\one}}\sum_{\mo\in\cM_{\one}}\inp{ \frac{1}{N_{\two}}\sum_{\mt\in\mathcal{M}_{\two}}W^n\inp{\cE_{\mt}|f_{\one}(\mo), f_{\two}(\mt)}}\\
&\leq P_{e,\malone} +P_{e,\maltwo}.
\end{align*}
So, if $P_{e,\malone}$ and $P_{e,\maltwo}$ are small, $P_{e,\na}$ is also small. We will first analyse $P_{e,\maltwo}$. Suppose user \two attacks with an attack vector $\vecy\in \cY^n$. For some $\eta/3>\epsilon>0$, we define the following sets.
\begin{align*}
\cH_1&= \inb{m_{\one}: (\vecx_{m_{\one}}, \vecy)\in \cup_{P_{XY}\in \cP^n_{\cX\times\cY}}T_{XY}^n, I(X;Y)> \epsilon}\\
\cH_2&= \inb{m_{\one}: (\vecx_{m_{\one}}, \vecy)\in \cup_{P_{XY}\in \cP^n_{\cX\times\cY}}T_{XY}^n, I(X;Y)\leq \epsilon}
\end{align*}
For notational convenience, let $\phi(\vecz) = (\phi_{\one}(\vecz), \phi_{\two}(\vecz))$ and the output symbols $\oneb = (\one,\one)$ and $\twob = (\two,\two)$. Thus, the decoder always outputs a pair. 
\begin{align}
P_{e,\maltwo} \leq \frac{1}{N_{\one}}|\cH_1| + \frac{1}{N_{\one}}\sum_{m_{\one}\in \cH_2}\inp{\sum_{P_{XYZ}\in \cD_{\eta}^c}\sum_{\vecz\in T^n_{Z|XY}(\vecx_{m_{\one}},\vecy)}W^n(\vecz|\vecx_{m_{\one}}, \vecy)} \nonumber\\ + \frac{1}{N_{\one}}\sum_{m_{\one}\in \cH_2}\inp{\sum_{P_{XYZ}\in \cD_{\eta}}\sum_{\vecz\in T^n_{Z|XY}(\vecx_{m_{\one}},\vecy), \phi_{\one}(\vecz)\notin\inb{m_{\one}, \two}}W^n(\vecz|\vecx_{m_{\one}}, \vecy)}\label{eq:error}
\end{align}
The first term on the RHS is upper bounded by
\begin{align*}
|\cP^n_{\cX\times \cY}|\times\frac{|\inb{m_{\one}: (\vecx_{m_{\one}}, \vecy)\in T_{XY}^n, \, I(X;Y)> \epsilon}|}{N_{\one}}
\end{align*}
which goes to zero as $n\rightarrow \infty$ by~\eqref{codebook:1} and noting that there are only polynomially many types. Analysing the second term, for $m_{\one}\in \cH_2$ and $P_{XYZ}\in \cD_{\eta}^c$,  
\begin{align*}
\sum_{P_{XYZ}\in \cD_{\eta}^c}\sum_{\vecz\in T^n_{Z|XY}(\vecx_{m_{\one}},\vecy)}W^n(\vecz|\vecx_{m_{\one}}, \vecy) &\leq |\cD_{\eta}^c|\exp{\inp{-nD(P_{XYZ}||P_{XY}W)}}\\
& = |\cD_{\eta}^c|\exp{\inp{-n\inp{D(P_{XYZ}||P_{X}P_{Y}W) -I(X;Y)}}}\\
& \leq |\cD_{\eta}^c|\exp{\inp{-n\inp{\eta-\epsilon}}} \rightarrow 0 \text{ when }\epsilon<\eta.
\end{align*}
We are left to analyse the last term. For $(\vecx_{\mo}, \vecy, \vecz)\in P_{XYZ}$ such that $P_{XYZ}\in \cD_{\eta}$ and $\mo\in \cH_2$, $\phi_{\one}(\vecz)\notin\inb{m_{\one}, \two}$ when one of the following happens (follows from Lemma~\ref{lemma:disambiguity}).
\begin{itemize}
	\item $|D_{\one}(\eta, \vecz)| = |D_{\two}(\eta, \vecz)| = 1$, but $\mo\notin D_{\one}(\eta, \vecz)$.
	\item $|D_{\one}(\eta, \vecz)| = 0$.
\end{itemize}
To formalize this, we define the following sets. For $\mo\in \cM_{\one}$,
\begin{align*}
\cG_{\mo} &= \inb{\vecz: (\vecx_{\mo}, \vecy, \vecz)\in P_{XYZ}, P_{XYZ}\in \cD_{\eta}, I(X, Y)\leq \epsilon}\\
\cG_{\mo,0} &= \cG_{\mo} \cap \inb{\vecz: \phi_{\one}(\vecz)\notin\inb{m_{\one}, \two} }\\
\cG_{\mo,1} &= \cG_{\mo} \cap\inb{\vecz:  |D_{\one}(\eta, \vecz)| = |D_{\two}(\eta, \vecz)| = 1 , \mo\notin D_{\one}(\eta, \vecz) }\\
\cG_{\mo,2} &= \cG_{\mo} \cap\inb{\vecz:  |D_{\one}(\eta, \vecz)| = 0}\\
\cG_{\mo,3} &= \cG_{\mo} \cap\inb{\vecz: \mo\notin D_{\one}(\eta, \vecz) }
\end{align*}
We are interested in $\cG_{\mo,0}$. Note that $\cG_{\mo,0} \subseteq \cG_{\mo,1}\cup \cG_{\mo,2}\subseteq \cG_{\mo,3}$. So, it suffices to upper bound the probability of $\cG_{\mo,3}$ when $\vecx_{\mo}$ is sent by user \one and $\vecy$ by user \two. From the definition of $D_{\one}(\eta, \vecz)$, we see that $\cG_{\mo,3}$ is the set of $\vecz \in \cZ^n$ which satisfy decoding condition~\ref{check:1} (this is because $\vecz\in\cG_{\mo,3}$ implies $\vecz\in \cG_{\mo}$) but do not satisfy either decoding condition~\ref{check:2} or decoding condition~\ref{check:3}. We capture this by defining the following sets of distributions:
\begin{align*}
\cP_1& =  \{P_{X\tilde{X}Y\tilde{Y}Z}\in \cP^n_{\cX\times\cX\times\cY\times\cY\times\cZ}: P_{XYZ}\in \cD_{\eta},I(X;Y)\leq \epsilon,\, P_{\tilde{X}Y'Z}\in \cD_{\eta} \text{ for some }Y',\\
&\qquad  \, P_{X'\tilde{Y}Z}\in \cD_{\eta} \text{ for some }X', P_{X}=P_{\tilde{X}}=P_{\one}, P_{\tilde{Y}} = P_{\two}\text{ and }I(\tilde{X}\tilde{Y};XZ|Y)\geq\eta\}\\
\cP_2& = \{P_{X\tilde{Y}_1\tilde{Y}_2YZ}\in \cP^n_{\cX\times\cY\times\cY\times\cY\times\cZ}: P_{XYZ}\in \cD_{\eta},I(X;Y)\leq \epsilon,\, P_{X'_1\tilde{Y}_1Z}\in \cD_{\eta} \text{ for some }X'_1,\\
&\qquad \, P_{X'_2\tilde{Y}_2Z}\in \cD_{\eta} \text{ for some }X'_2, P_{X}=P_{\one}, P_{\tilde{Y}_1}=P_{\tilde{Y}_2} = P_{\two}\text{ and }I(\tilde{Y}_1\tilde{Y}_2;XZ|Y)\geq\eta\}.
\end{align*}
For $P_{X\tilde{X}Y\tilde{Y}Z}\in \cP_1$ and $P_{X\tilde{Y}_1\tilde{Y}_2YZ}\in \cP_2 $, let
\begin{align*}
\cE_{\mo,1}(P_{X\tilde{X}Y\tilde{Y}Z}) & = \big\{\vecz: \exists(\tilde{m}_{\one},\, \tilde{m}_{\two})\in \cM_{\one}\times \cM_{\two}, \, \tilde{m}_{\one}\neq m_{\one}, \,  \inp{\vecx_{\mo},\vecx_{\tilde{m}_{\one}}, \vecy,   \vecy_{\tilde{m}_{\two}}, \vecz}\in T^n_{X\tilde{X}Y\tilde{Y}Z} \big\} \text{ and }\\
\cE_{\mo,2}(P_{X\tilde{Y}_1\tilde{Y}_2YZ}) & = \big\{\vecz: \exists \tilde{m}_{\two 1},\, \tilde{m}_{\two 2}\in \cM_{\two}, \tilde{m}_{\two 1}\neq \tilde{m}_{\two 2},\,\inp{\vecx_{\mo},  \vecy_{\tilde{m}_{\two 1}}, \vecy_{\tilde{m}_{\two 2}},\vecy,\vecz}\in T^n_{X\tilde{Y}_1\tilde{Y}_2YZ}\big\}.
\end{align*}
Note that $\cG_{\mo,3}= \inp{\cup_{P_{X\tilde{X}Y\tilde{Y}Z}\in \cP_1}\cE_{\mo,1}(P_{X\tilde{X}Y\tilde{Y}Z})}\cup \inp{\cup_{P_{X\tilde{Y}_1\tilde{Y}_2YZ}\in \cP_2}\cE_{\mo,2}(P_{X\tilde{Y}_1\tilde{Y}_2YZ})}$.

Thus, the last term in \eqref{eq:error}  can be analysed as below.
\begin{align}
&\frac{1}{N_{\one}}\sum_{m_{\one}\in \cH_2}\inp{\sum_{P_{XYZ}\in \cD_{\eta}}\sum_{\vecz\in T^n_{Z|XY}(\vecx_{m_{\one}},\vecy), \phi_{\one}(\vecz)\notin\inb{m_{\one}, \two}}W^n(\vecz|\vecx_{m_{\one}}, \vecy)}\nonumber\\
\leq&\frac{1}{N_{\one}}\sum_{\mo\in \cH_2}\sum_{P_{X\tilde{X}Y\tilde{Y}Z}\in \cP_1}  W^n\inp{\cE_{\mo,1}(P_{X\tilde{X}Y\tilde{Y}Z})|\vecx_{\mo}, \vecy} \nonumber\\
&\qquad \qquad+ \frac{1}{N_{\one}}\sum_{\mo\in \cH_2}\sum_{P_{X\tilde{Y}_1\tilde{Y}_2YZ}\in \cP_2}  W^n\inp{\cE_{\mo,2}(P_{X\tilde{Y}_1\tilde{Y}_2YZ})|\vecx_{\mo}, \vecy}. \label{err:upperbound_p}
\end{align}
We see that $|\cP_1|$ and $|\cP_2|$ are at most polynomial and clearly $|\cH_2|\leq N_{\one}$. So, it will  suffice to uniformly upper bound $W^n\inp{\cE_{\mo,1}(P_{X\tilde{X}Y\tilde{Y}Z})|\vecx_{\mo}, \vecy}$ and $W^n\inp{P_{X\tilde{Y}_1\tilde{Y}_2YZ})|\vecx_{\mo}, \vecy}$ by a term exponentially decreasing in $n$ for all $P_{X\tilde{X}Y\tilde{Y}Z}\in \cP_1$ and $P_{X\tilde{Y}_1\tilde{Y}_2YZ}\in \cP_2 $. 
We start with the first term in the RHS of~\eqref{err:upperbound_p}. By using~\eqref{codebook:2b}, we see that for $P_{X\tilde{X}Y\tilde{Y}Z}\in \cP_1$ such that
\begin{align*}
I\inp{X;\tilde{X}\tilde{Y}Y}> |R_{\one}- I(\tilde{X};\tilde{Y}Y)|^{+}+|R_{\two}-I(\tilde{Y};Y)|^{+}+\epsilon\,
\end{align*}
\begin{align*}
\frac{\left|\inb{\mo:(\vecx_{\mo}, \vecx_{\tilde{m}_{\one}}, \vecy_{\mt}, \vecy)\in T^n_{X\tilde{X}\tilde{Y}Y} \text{ for some }\tilde{m}_{\one}\neq \mo\text{ and some }\mt} \right|}{N_{\one}} \leq \exp\inb{-n\epsilon/2}.
\end{align*}
So, 
\begin{align*}
&\frac{1}{N_{\one}}\sum_{\mo\in \cH_2} W^n\inp{\cE_{\mo,1}(P_{X\tilde{X}Y\tilde{Y}Z})|\vecx_{\mo}, \vecy} \\
& = \frac{1}{N_{\one}}\sum_{\substack{\mo:(\vecx_{\mo}, \vecx_{\tilde{m}_{\one}}, \vecy_{\mt}, \vecy)\in T^n_{X\tilde{X}\tilde{Y}Y},\\ \tilde{m}_{\one}\in \cM_{\one},\tilde{m}_{\one}\neq \mo,\mt\in\cM_{\two}}} \sum_{\vecz\in T^{n}_{Z|X\tilde{X}Y\tilde{Y}}(\vecx_{\mo},\vecx_{\tilde{m}_{\one}},\vecy,\vecy_{\tilde{m}_{\two}})}W^n\inp{\vecz|\vecx_{\mo}, \vecy}\\
&\leq \exp\inb{-n\epsilon/2}.
\end{align*}
Thus, it is sufficient to consider distributions $P_{X\tilde{X}Y\tilde{Y}Z}\in \cP_1$ for which 
\begin{align}
I\inp{X;\tilde{X}\tilde{Y}Y}\leq |R_{\one}- I(\tilde{X};\tilde{Y}Y)|^{+}+|R_{\two}-I(\tilde{Y};Y)|^{+}+\epsilon\label{eq:1_p}
\end{align}
For $P_{X\tilde{X}Y\tilde{Y}Z}\in \cP_1$ satisfying~\eqref{eq:1_p},
\begin{align}
&\sum_{\vecz\in \cE_{\mo,1}(P_{X\tilde{X}Y\tilde{Y}Z})}W^n(\vecz|\vecx_{\mo}, \vecy)\nonumber\\
&\qquad\leq\sum_{\substack{\tilde{m}_{\one}, \tilde{m}_{\two}:\\(\vecx_{\mo}, \vecx_{\tilde{m}_{\one}}, \vecy_{\tilde{m}_{\two}},\vecy)\in T^{n}_{X\tilde{X}\tilde{Y}Y}}}\sum_{\vecz:(\vecx_{\mo}, \vecx_{\tilde{m}_{\one}}, \vecy_{\tilde{m}_{\two}},\vecy, \vecz)\in T^{n}_{X\tilde{X}\tilde{Y}YZ}}W^n(\vecz|\vecx_{\mo}, \vecy)\nonumber\\
&\qquad\leq \sum_{\substack{\tilde{m}_{\one}, \tilde{m}_{\two}:\\(\vecx_{\mo}, \vecx_{\tilde{m}_{\one}}, \vecy_{\tilde{m}_{\two}},\vecy)\in T^{n}_{X\tilde{X}\tilde{Y}Y}}}\frac{|T^{n}_{Z|X\tilde{X}\tilde{Y}Y}(\vecx_{\mo},\vecx_{\tilde{m}_{\one}}, \vecy_{\tilde{m}_{\two}}, \vecy)|}{|T^n_{Z|XY}(\vecx_{\mo},\vecy)|}\nonumber\\
&\qquad \leq \sum_{\substack{\tilde{m}_{\one}, \tilde{m}_{\two}:\\(\vecx_{\mo}, \vecx_{\tilde{m}_{\one}}, \vecy_{\tilde{m}_{\two}},\vecy)\in T^{n}_{X\tilde{X}\tilde{Y}Y}}}\frac{\exp\inp{nH(Z|X\tilde{X}\tilde{Y}Y)}}{(n+1)^{-|\cX||\cY||\cZ|}\exp\inp{nH(Z|XY)}}\nonumber\\
&\qquad \leq \sum_{\substack{\tilde{m}_{\one}, \tilde{m}_{\two}:\\(\vecx_{\mo}, \vecx_{\tilde{m}_{\one}}, \vecy_{\tilde{m}_{\two}},\vecy)\in T^{n}_{X\tilde{X}\tilde{Y}Y}}}\exp\inp{-n\inp{I(Z;\tilde{X}\tilde{Y}|XY)-\epsilon}} \text{ for large }n.\nonumber\\
&\qquad\stackrel{\text{(a)}}{\leq}\exp\inp{n\inp{|R_{\one}- I(\tilde{X};\tilde{Y}XY)|^{+}+|R_{\two}-I(\tilde{Y};XY)|^{+}-I(Z;\tilde{X}\tilde{Y}|XY)+2\epsilon}}\label{eq:upperbound2_p}
\end{align}
where (a) follows using~\eqref{codebook:3b}. 
We see that
\begin{align*}
I(Z;\tilde{X}\tilde{Y}|XY) &= I(XZ;\tilde{X}\tilde{Y}|Y)-I(X;\tilde{X}\tilde{Y}|Y)\\
&\stackrel{\text{(a)}}{\geq} \eta-I(X;\tilde{X}\tilde{Y}Y)\\
&\stackrel{\text{(b)}}{\geq} \eta- |R_{\one}- I(\tilde{X};\tilde{Y}Y)|^{+}-|R_{\two}-I(\tilde{Y};Y)|^{+}-\epsilon
\end{align*} 
where (a) uses the condition $I(XZ;\tilde{X}\tilde{Y}|Y)\geq \eta$ from definition of $\cP_1$ and the fact that $I(X;\tilde{X}\tilde{Y}Y)\geq I(X;\tilde{X}\tilde{Y}|Y)$ and (b) follows from \eqref{eq:1_p}.
This implies that
\begin{align*}
&\sum_{\vecz\in \cE_{\mo,1}(P_{X\tilde{X}Y\tilde{Y}Z})}W^n(\vecz|\vecx_{\mo}, \vecy)
\\&\leq \exp\inp{n\inp{|R_{\one}- I(\tilde{X};\tilde{Y}XY)|^{+}+|R_{\two}-I(\tilde{Y};XY)|^{+}+|R_{\one}- I(\tilde{X};\tilde{Y}Y)|^{+}+|R_{\two}-I(\tilde{Y};Y)|^{+}-\eta+3\epsilon}}\\
&\leq \exp\inp{n\inp{4\delta-\eta+3\epsilon}}\\
&\rightarrow 0\text{ when }\eta>3\epsilon+4\delta.
\end{align*}

Now, we move on to the second term in the RHS of~\eqref{err:upperbound_p}. We see that by using~\eqref{codebook:4}, it is sufficient to consider distribution $P_{X\tilde{Y}_1\tilde{Y}_2YZ}\in \cP_2$ for which 
\begin{align}
I\inp{X;\tilde{Y}_1\tilde{Y}_2Y}\leq|R_{\two}-I(\tilde{Y}_1;Y)|^{+}+|R_{\two}-I(\tilde{Y}_2;\tilde{Y}_1 Y)|^{+} +\epsilon.\label{eq:2_p}
\end{align}
For $P_{X\tilde{Y}_1\tilde{Y}_2YZ}\in \cP_2$ satisfying~\eqref{eq:2_p},

\begin{align}
&\sum_{\vecz\in \cE_{\mo,\vecy, 2}(P_{X\tilde{Y}_1\tilde{Y}_2YZ})}W^n(\vecz|\vecx_{\mo}, \vecy)\nonumber\\
&\qquad\leq\sum_{\substack{\tilde{m}_{\two 1},  \tilde{m}_{\two 2}:\\(\vecx_{\mo}, \vecy_{\tilde{m}_{\two 1}}, \vecy_{\tilde{m}_{\two 2}},\vecy)\in T^{n}_{X\tilde{Y}_1\tilde{Y}_2Y}}}\sum_{\vecz:(\vecx_{\mo}, \vecy_{\tilde{m}_{\two 1}}, \vecy_{\tilde{m}_{\two 2}},\vecy, \vecz)\in T^{n}_{X\tilde{Y}_1\tilde{Y}_2YZ}}W^n(\vecz|\vecx_{\mo}, \vecy)\nonumber\\
&\qquad\leq \sum_{\substack{\tilde{m}_{\two 1}, \tilde{m}_{\two 2}:\\(\vecx_{\mo}, \vecy_{\tilde{m}_{\two 1}}, \vecy_{\tilde{m}_{\two 2}},\vecy)\in T^{n}_{X\tilde{Y}_1\tilde{Y}_2Y}}}\frac{|T^{n}_{Z|X\tilde{Y}_1\tilde{Y}_2Y}(\vecx_{\mo},\vecy_{\tilde{m}_{\two 1}}, \vecy_{\tilde{m}_{\two 2}}, \vecy)|}{|T^n_{Z|XY}(\vecx_{\mo},\vecy)|}\nonumber\\
&\qquad \leq \sum_{\substack{\tilde{m}_{\two 1}, \tilde{m}_{\two 2}:\\(\vecx_{\mo}, \vecy_{\tilde{m}_{\two 1}}, \vecy_{\tilde{m}_{\two 2}},\vecy)\in T^{n}_{X\tilde{Y}_1\tilde{Y}_2Y}}}\frac{\exp\inp{nH(Z|X\tilde{Y}_1\tilde{Y}_2Y)}}{(n+1)^{-|\cX||\cY||\cZ|}\exp\inp{nH(Z|XY)}}\nonumber\\
&\qquad \leq \sum_{\substack{\tilde{m}_{\two 1}, \tilde{m}_{\two 2}:\\(\vecx_{\mo}, \vecy_{\tilde{m}_{\two 1}}, \vecy_{\tilde{m}_{\two 2}},\vecy)\in T^{n}_{X\tilde{Y}_1\tilde{Y}_2Y}}}\exp\inp{-n\inp{I(Z;\tilde{Y}_1\tilde{Y}_2|XY)-\epsilon}} \text{ for large }n.\nonumber\\
&\qquad\stackrel{\text{(a)}}{\leq}\exp\inp{n\inp{|R_{\two}-I(\tilde{Y}_1;XY)|^{+}+|R_{\two}-I(\tilde{Y}_2;\tilde{Y}_1 XY)|^{+}-I(Z;\tilde{Y}_1\tilde{Y}_2|XY)+2\epsilon}}\\
&\qquad\stackrel{\text{(b)}}{\leq}\exp\Big(n\Big(|R_{\two}-I(\tilde{Y}_1;XY)|^{+}+|R_{\two}-I(\tilde{Y}_2;\tilde{Y}_1 XY)|^{+}+|R_{\two}-I(\tilde{Y}_1;Y)|^{+}\\
&\qquad\qquad+|R_{\two}-I(\tilde{Y}_2;\tilde{Y}_1 Y)|^{+}-\eta+3\epsilon\Big)\Big)\\
&\qquad\leq \exp\inp{n\inp{4\delta-\eta+3\epsilon}}\\
&\qquad\rightarrow 0\text{ when }\eta>3\epsilon+4\delta.
\end{align}
where (a) follows using~\eqref{codebook:5} and (b) follows from \eqref{eq:2_p} and definition of $\cP_{2}$.\\
Similarly, we can show that if $\eta>3\epsilon+4\delta$ the probability of error goes to zero with $n$ when user \one is malicious.

\end{proof}

\section{Codebook for Theorem~\ref{thm:main_result} and~\ref{thm:inner_bd}} 
\begin{lemma}[codebook lemma]\label{lemma:codebook}
Suppose $\mathcal{X,Y,Z}$ are finite. Let $P_{\one}\in \cP^n_{\cX}$ and $P_{\two}\in \cP^n_{\cY}$. For any $\epsilon>0$, there exists $n_0(\epsilon)$ such that for all $n\geq n_0(\epsilon),\, N_{\one}, N_{\two}\geq \exp(n\epsilon)$, there exists codebooks $\{\vecx_1, \vecx_2, \ldots, \vecx_{N_{\one}}\}$ of type $P_{\one}$ and $\{\vecy_1, \vecy_2, \ldots, \vecy_{N_{\two}}\}$ of type $P_{\two}$ such that for every $\vecx, \vecx'\in \cX^n$  and $\vecy, \vecy'\in \cY^n$, and joint types $P_{X\tilde{X}\tilde{Y}Y}\in \cP^n_{\cX\times \cX\times \cY\times \cY}$ and $P_{X'\tilde{Y}_1\tilde{Y}_2Y'}\in \cP^n_{\cX\times \cY\times \cY\times \cY}$ such that $P_{X} = P_{\tilde{X}} = P_{X'} = P_{\one}$, $P_{\tilde{Y}_1} = P_{\tilde{Y}_1} = P_{\two}$, $(\vecx, \vecy)\in T^n_{XY}$ and $(\vecx', \vecy')\in T^n_{X'Y'}$, and for $R_{\one}\defineqq (1/n)\log N_{\one}$ and $R_{\two}\defineqq (1/n)\log N_{\two}$ where $R_{\one}\leq H(X)$ and $R_{\two}\leq H(\tilde{Y}_1)=H(\tilde{Y}_2)$, the following holds: 
\begin{align}
&\frac{\left|\inb{m_{\one}: (\vecx_{\mo}, \vecy)\in T_{XY}^n}\right|}{N_{\one}}\leq \exp\inb{-n\epsilon/2},\text{ if }I(X;Y)> \epsilon\label{codebook:1}\\
&\frac{\left|\inb{\mo:(\vecx_{\mo}, \vecx_{\tilde{m}_{\one}}, \vecy_{\mt}, \vecy)\in T^n_{X\tilde{X}\tilde{Y}Y} \text{ for some }\tilde{m}_{\one}\neq \mo\text{ and some }\mt} \right|}{N_{\one}} \leq \exp\inb{-n\epsilon/2} \nonumber\\
& \qquad \text{ if }I(X;\tilde{X}\tilde{Y}Y)-|R_{\one}- I(\tilde{X};\tilde{Y}Y)|^{+}-|R_{\two}-I(\tilde{Y};Y)|^{+}       >\epsilon\label{codebook:2b}\\
&\left|\inb{(\tilde{m}_{\one}, \tilde{m}_{\two}):(\vecx, \vecx_{\tilde{m}_{\one}}, \vecy_{\tilde{m}_{\two}}, \vecy)\in T^n_{X\tilde{X}\tilde{Y}Y}}\right|\nonumber\\
&\qquad\leq \exp\inb{n\inp{|R_{\one}- I(\tilde{X};\tilde{Y}XY)|^{+}+|R_{\two}-I(\tilde{Y};XY)|^{+}+\epsilon}}\label{codebook:3b}\\
&\frac{\left|\inb{\mo:(\vecx_{\mo}, \vecy_{\tilde{m}_{\two 1}}, \vecy_{\tilde{m}_{\two 2}}, \vecy')\in T^n_{X'\tilde{Y}_1\tilde{Y}_2Y'} \text{ for some }\tilde{m}_{\two 1},\tilde{m}_{\two 2}, } \right|}{N_{\one}} \leq \exp\inb{-n\epsilon/2} \nonumber\\
& \qquad \text{ if }I(X';\tilde{Y}_1\tilde{Y}_2Y')-|R_{\two}-I(\tilde{Y}_1;Y')|^{+}-|R_{\two}-I(\tilde{Y}_2;\tilde{Y}_1 Y')|^{+} >\epsilon\label{codebook:4}\\
&\text{and }\left|\inb{(\tilde{m}_{\two 1}, \tilde{m}_{\two 2}):(\vecx', \vecy_{\tilde{m}_{\two 1}}, \vecy_{\tilde{m}_{\two 2}}, \vecy')\in T^n_{X'\tilde{Y}_1\tilde{Y}_2Y'} }\right|\nonumber\\
&\qquad\leq \exp\inb{n\inp{|R_{\two}-I(\tilde{Y}_1;X'Y')|^{+}+|R_{\two}-I(\tilde{Y}_2;\tilde{Y}_1 X'Y')|^{+}+\epsilon}}.\label{codebook:5}
\end{align}
Analogous statements hold when the roles of users \one and \two are interchanged.
\end{lemma}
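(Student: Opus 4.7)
The plan is a random coding argument in the spirit of \cite[Lemma~3]{CsiszarN88}. I would draw $\vecX_1, \ldots, \vecX_{N_\one}$ i.i.d.\ uniformly from the constant-composition class $T^n_{P_\one}$, and, independently, $\vecY_1, \ldots, \vecY_{N_\two}$ i.i.d.\ uniformly from $T^n_{P_\two}$. For each fixed choice of the quantified sequences $\vecx, \vecx', \vecy, \vecy'$ and each of the polynomially many joint types appearing in \eqref{codebook:1}--\eqref{codebook:5}, I will show that the corresponding inequality fails with probability doubly exponentially small in $n$. A union bound over the types and the (at most exponentially many) quantified sequences then guarantees that a single deterministic codebook satisfying all five conditions exists.

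The workhorse is the standard type-class estimate $|T^n_{A|B}(\vecb)|/|T^n_{P_A}| = \exp(-n(I(A;B)\pm o(1)))$ for $\vecb\in T^n_{P_B}$. For \eqref{codebook:1}, for each fixed $\vecy$ the count is a sum of $N_\one$ i.i.d.\ indicators with mean at most $N_\one\exp(-n(\epsilon - o(1)))$ under the stated hypothesis, and a Chernoff bound yields that the count exceeds $N_\one\exp(-n\epsilon/2)$ with probability at most $\exp(-c\,N_\one\exp(-n\epsilon/2))$; since $N_\one\geq\exp(n\epsilon)$, this is doubly exponentially small and beats the union bound over $\vecy\in\cY^n$. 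For \eqref{codebook:2b} and, symmetrically, \eqref{codebook:4}, iterating conditional expectations over the two independent codebooks shows that the expected number of $\mo$ realizing the bad joint type for a fixed $\vecy$ is at most
\begin{equation*}
N_\one\,\exp\bigl(-n\bigl(I(X;\tilde{X}\tilde{Y}Y)-|R_\one-I(\tilde{X};\tilde{Y}Y)|^+ - |R_\two - I(\tilde{Y};Y)|^+ - o(1)\bigr)\bigr),
\end{equation*}
which the hypothesis forces to be at most $N_\one\exp(-n(\epsilon-o(1)))$, and the same Chernoff argument closes the case. The deterministic bounds \eqref{codebook:3b} and \eqref{codebook:5} follow from an upper-tail Chernoff bound on the number of tuples $(\tilde{m}_\one, \tilde{m}_\two)$ (respectively $(\tilde{m}_{\two 1}, \tilde{m}_{\two 2})$) extending each fixed $\vecx, \vecy$ (resp.\ $\vecx', \vecy'$): when the expected count is large, Chernoff pins the random count near the mean, and when it is below $1$, the $|\cdot|^+$ in the target bound leaves enough slack for a simple Markov step.

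The main obstacle will be the bookkeeping required to extract each $|R_i-I(\cdot;\cdot)|^+$ term in the right place when evaluating the nested expectations: each inner expectation over one of the two codebooks may be above or below $1$ depending on the joint type, and the outer expectation must split cases accordingly. Modulo this bookkeeping, the argument is the standard random-coding-plus-concentration template, and the rate assumption $N_\one, N_\two\geq\exp(n\epsilon)$ provides precisely the slack required for the doubly exponential tails to beat the union bound over $(\vecx,\vecx',\vecy,\vecy')\in \cX^n\times\cX^n\times\cY^n\times\cY^n$.
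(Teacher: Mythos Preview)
Your plan matches the paper's: random constant-composition codebooks drawn independently from $T^n_{P_\one}$ and $T^n_{P_\two}$, doubly exponential failure probability for each fixed type and sequence tuple, then a union bound. One technical point deserves sharpening. For \eqref{codebook:2b}, \eqref{codebook:4}, and \eqref{codebook:5}, the indicators you are summing are \emph{not} independent: in \eqref{codebook:2b} the indicator for index $m_\one$ depends on every other $\vecX_{\tilde m_\one}$ and on the full $\vecY$-codebook, so an i.i.d.\ Chernoff bound does not directly apply. The paper (following \cite{CsiszarN88}) uses the martingale-type concentration of \cite[Lemma~A1]{CsiszarN88} on the sequence $\vecX_1,\ldots,\vecX_{N_\one}$, combined with a truncation trick: one first shows, via a \eqref{codebook:3b}-style bound, that with doubly exponentially high probability the number of ``offending'' pairs $(\tilde m_\one,\tilde m_\two)$ lies below a fixed threshold; conditioned on that event, the conditional expectation of each indicator given $\vecX_1,\ldots,\vecX_{i-1}$ is uniformly small, and Lemma~A1 then yields the doubly exponential tail. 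Your phrase ``iterating conditional expectations'' is pointing in the right direction, but the truncation step is what makes the conditional expectation uniformly bounded and is the missing ingredient in your sketch. For \eqref{codebook:5} there is one further wrinkle the paper handles explicitly: the diagonal case $\tilde m_{\two 1}=\tilde m_{\two 2}$ (two draws from the same codebook coinciding) must be split off and treated separately, using the assumption $R_\two\leq H(\tilde Y_2)$.
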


\begin{proof}
This proof is along the lines of the proof of \cite[Lemma 3]{CsiszarN88}. We will generate the codebook by a random experiment. For fixed $\vecx, \vecx'$, $ \vecy,\vecy'$,$P_{X\tilde{X}\tilde{Y}Y}$ and $P_{X'\tilde{Y}_1\tilde{Y}_2Y'}$ satisfying the conditions of the Lemma, we will show that the probability that each of the  statements~\eqref{codebook:1}~-~\eqref{codebook:5} does not hold falls doubly exponentially in $n$. Since, $|\cX^n|$, $|\cY^n|$, $|\cP_{\cX\times \cX\times \cY\times \cY}|$ and $|\cP_{\cX\times \cY\times \cY\times \cY}|$ grow at most exponentially in $n$, a union bound will imply that the probability that any of the statements~\eqref{codebook:1}~-~\eqref{codebook:5} fail for some $\vecx, \vecx',\vecy,\vecy',$$P_{X\tilde{X}\tilde{Y}Y}$ and $P_{X'\tilde{Y}_1\tilde{Y}_2Y'}$ also falls doubly exponentially. This will show existence of a codebook which satisfies~\eqref{codebook:1}~-~\eqref{codebook:5}. The proof will employ \cite[Lemma A1]{CsiszarN88}, which is stated below.
\begin{lemma}\cite[Lemma A1]{CsiszarN88}\label{lemma:A1}
Let $Z_1, \ldots, Z_{N}$ be arbitrary random variables, and let $f_i(Z_1, \ldots, Z_i)$ be arbitrary with $0\leq f_i\leq 1$, $i = 1, \ldots, N$. Then the condition
\begin{align}\label{eq:A1}
E\insq{f_i(Z_1,\ldots, Z_i)|Z_1, \ldots, Z_{i-1}}\leq a \text{ a.s.,} \quad i = 1, \ldots, N, 
\end{align}
implies that 
\begin{align}\label{eq:A2}
\bbP\inb{\frac{1}{N}\sum_{i = 1}^{N}f_i(Z_1, \ldots, Z_i)>t}\leq \exp{\inb{-N(t-a\log{e})}}.
\end{align}
\end{lemma}

We denote the  type classes of $P_{\one}$ and $P_{\two}$ by $T^n_{\one}$ and $T^n_{\two}$ respectively. Let $\vecX_1, \vecX_2, \ldots, \vecX_{N_{\one}}$ be independent random vectors each uniformly distributed on $T^n_{\one}$ and $\vecY_1, \vecY_2, \ldots, \vecY_{N_{\two}}$ be another set of independent random vectors (independent of $\vecX_1, \vecX_2, \ldots, \vecX_{N_{\one}}$) with each element uniformly distributed on $T^n_{\two}$.  $(\vecX_1, \vecX_2, \ldots, \vecX_{N_{\one}})$ and $(\vecY_1, \vecY_2, \ldots, \vecY_{N_{\two}})$ are the random codebooks for user \one and \two respectively.
Fix $P_{X\tilde{X}\tilde{Y}Y}\in \cP_{\cX\times \cX\times \cY\times \cY}$, $P_{X'\tilde{Y}_1\tilde{Y}_2Y'}\in \cP^n_{\cX\times \cY\times \cY\times \cY}$, $\vecx,\vecx'\in T^n_{\one}$ and $\vecy,\vecy'\in \cY^n$ such that $P_X =P_X'= P_{\tilde{X}} = P_{\one}$, $P_{\tilde{Y}_1} = P_{\tilde{Y}_1} = P_{\two}$, $(\vecx,\vecy)\in T^n_{XY}$ and $(\vecx', \vecy')\in T^n_{X'Y'}$.

\noindent\underline{\em Analysis of \eqref{codebook:3b}}\\
Define
\begin{align}\label{func:g_i}
g_i(\vecy_1, \vecy_2, \ldots, \vecy_i) \defineqq \begin{cases}
    1, & \text{if } \vecy_i\in T^n_{\tilde{Y}|{X}Y}(\vecx,\vecy) \\
    0, & \text{otherwise,}
   \end{cases}
\end{align}
and for  $\tilde{\vecy}\in T^n_{\tilde{Y}|{X}Y}(\vecx,\vecy)$,
\begin{align}\label{func:h_i}
h^{\tilde{\vecy}}_i(\vecx_1, \vecx_2, \ldots, \vecx_i) \defineqq \begin{cases}
    1, & \text{if } \vecx_i\in T^n_{\tilde{X}|\tilde{Y}XY}(\tilde{\vecy}, \vecx,\vecy) \\
    0, & \text{otherwise.}
   \end{cases}
\end{align}
Let events $\cE, \cE_1$ and $\cE^{\tilde{\vecy}}_2$ be defined as 
\begin{align*}
\cE = &\Big\{\left|\inb{(\tilde{m}_{\one}, \tilde{m}_{\two}):(\vecx, \vecX_{\tilde{m}_{\one}}, \vecY_{\tilde{m}_{\two}}, \vecy)\in T^n_{X\tilde{X}\tilde{Y}Y}}\right|\\
&\qquad> \exp\inb{n\inp{|R_{\one}- I(\tilde{X};XY\tilde{Y})|^{+}+|R_{\two}-I(\tilde{Y};XY)|^{+}+\epsilon}}\Big\},\\
\cE_1 &= \inb{\sum_{i = 1}^{N_{\two}}g_i(\vecY_1, \vecY_2, \ldots, \vecY_i)>\exp{\inb{n\inp{|R_{\two}-I(\tilde{Y};XY)|^{+}+\frac{\epsilon}{2}}}}}\text{, and }\\
\cE_2^{\tilde{\vecy}} &= \inb{\sum_{j = 1}^{N_{\one}} h^{\tilde{\vecy}}_j(\vecX_1, \vecX_2, \ldots, \vecX_j)>\exp{\inb{n\inp{|R_{\one}-I(\tilde{X};\tilde{Y}XY)|^{+}+\frac{\epsilon}{2}}}}}.
\end{align*}
We note that
\begin{align*}
&\left|\inb{(\tilde{m}_{\one}, \tilde{m}_{\two}):(\vecx, \vecX_{\tilde{m}_{\one}}, \vecY_{\tilde{m}_{\two}}, \vecy)\in T^n_{X\tilde{X}\tilde{Y}Y}}\right|\\
&\qquad \qquad \qquad= \sum_{i = 1}^{N_{\two}}g_i(\vecY_1, \vecY_2, \ldots, \vecY_i)\inp{\sum_{j = 1}^{N_{\one}} h^{\vecY_i}_j(\vecX_1, \vecX_2, \ldots, \vecX_j)}.
\end{align*}
Thus, $ \cE \subseteq \inp{\cup_{\tilde{\vecy}\in T_{\tilde{Y}|XY}(\vecx, \vecy)}\cE_2^{\tilde{\vecy}}}\cup \cE_1$. In order to apply Lemma~\ref{lemma:A1} to \eqref{func:g_i} with $(\vecY_1, \ldots, \vecY_{N_{\two}})$ as the random variables $(Z_1, \ldots, Z_N)$, we note that 
\begin{align*}
E\insq{g_i(\vecY_1,\ldots, \vecY_i)|\vecY_1, \ldots, \vecY_{i-1}} = &\bbP\inb{\vecY_i\in T^n_{\tilde{Y}|{X}Y}(\vecx,\vecy)}\\
=&\frac{|T^n_{\tilde{Y}|{X}Y}(\vecx,\vecy)|}{|T^n_{\two}|}\\
\stackrel{\text{(a)}}{\leq}&\frac{\exp\inp{nH(\tilde{Y}|{X}Y)}}{(n+1)^{-|\cY|}\exp\inp{nH(\tilde{Y})}}\\
=&(n+1)^{|\cY|}\exp{\inp{-nI(\tilde{Y};XY)}},
\end{align*}
where (a) follows because $P_{\two}=P_{\tilde{Y}}$ and thus $|T^n_{\two}| = |T^n_{\tilde{Y}}|$. Taking $t =\frac{1}{N_{\two}}\exp{\inb{n\inp{|R_{\two}-I(\tilde{Y};XY)|^{+}+\frac{\epsilon}{2}}}}$ and $n\geq n_1(\epsilon)$, where $n_1(\epsilon) \defineqq $$\min{\inb{n:(n+1)^{|\cY|}\log{e}<\frac{1}{2}\exp(\frac{n\epsilon}{2})}},$
we see that $N_{\two}(t-a\log{e})\geq (1/2)\exp(n\frac{n\epsilon}{2})$. Using~\eqref{eq:A2}, this gives us
\begin{align} \label{eq:e1}
  \bbP(\cE_1) \leq \exp\inb{-\frac{1}{2}\exp\inb{\frac{n\epsilon}{2}}}.
\end{align}
Similarly, we apply Lemma~\ref{lemma:A1} to \eqref{func:h_i} with $(\vecX_1, \ldots, \vecX_{N_{\one}})$ as the random variables $(Z_1, \ldots, Z_N)$. We can show that $a= (n+1)^{|\cX|}\exp{\inp{-nI(\tilde{X};\tilde{Y}XY)}}$ satisfies \eqref{eq:A1}. We take $t =\frac{1}{N_{\one}}\exp{\inb{n\inp{|R_{\one}-I(\tilde{X};\tilde{Y}XY)|^{+}+\frac{\epsilon}{2}}}}$ and $n\geq n_2(\epsilon)$ where $n_2(\epsilon) \defineqq \min{\inb{n:(n+1)^{|\cX|}\log{e}<\frac{1}{2}\exp(\frac{n\epsilon}{2})}}$. This gives $N_{\one}(t-a\log{e})\geq (1/2)\exp(\frac{n \epsilon}{2})$ which, when plugged in \eqref{eq:A2}, gives 
\begin{align}\label{eq:e2}
\bbP\inp{\cE_2^{\tilde{\vecy}}} \leq \exp\inb{-\frac{1}{2}\exp\inb{\frac{n\epsilon}{2}}}.
\end{align}
Using \eqref{eq:e1} and \eqref{eq:e2}, 
\begin{align}\label{eq:prob1}
\bbP\inp{\cE}\leq \inp{|T^n_{\tilde{Y}|XY}(\vecx, \vecy)|+1}\exp\inb{-\frac{1}{2}\exp\inb{\frac{n\epsilon}{2}}}.
\end{align}
This shows that the probability that \eqref{codebook:3b} does not hold falls doubly exponentially. 

\noindent\underline{\em Analysis of \eqref{codebook:1}} \\
We will use the same arguments as used in obtaining \eqref{eq:e2}. We replace $\tilde{X}$ with $X$, $(\tilde{Y}, X, Y)$ with $Y$, to obtain
\begin{align*}
\bbP\inb{\left|\inb{m_{\one}: (\vecx_{\mo}, \vecy)\in T_{XY}^n, }\right|> \exp\inb{n\inp{|R_{\one}-I(X;Y)|^{+}+\frac{\epsilon}{2}}}}\leq \exp\inb{-\frac{1}{2}\exp\inb{\frac{n \epsilon}{2}}}.
\end{align*}
So, 
\begin{align*}
\bbP\inb{\frac{1}{N_{\one}}\left|\inb{m_{\one}: (\vecx_{\mo}, \vecy)\in T_{XY}^n, }\right|> \exp\inb{n\inp{|R_{\one}-I(X;Y)|^{+}-R_{\one}+\frac{\epsilon}{2}}}}\leq \exp\inb{-\frac{1}{2}\exp\inb{\frac{n \epsilon}{2}}}.
\end{align*}
We are given that $I(X;Y)> \epsilon$. When $R_{\one}>I(X;Y)$, we have  $|R_{\one}-I(X;Y)|^{+}$$-R_{\one}+\frac{\epsilon}{2}$ = $\frac{\epsilon}{2} - I(X;Y)\leq  -\frac{\epsilon}{2}$. When $R_{\one}\leq I(X;Y)$, we have  $|R_{\one}-I(X;Y)|^{+}$$-R_{\one}+\frac{\epsilon}{2}$ = $\frac{\epsilon}{2} - R_{\one}\leq  -\frac{\epsilon}{2}$ (because $R\geq \epsilon$). Thus 
\begin{align*}
&\bbP\inb{\frac{1}{N_{\one}}\left|\inb{m_{\one}: (\vecx_{\mo}, \vecy)\in T_{XY}^n, }\right|> \exp\inb{\frac{-n\epsilon}{2}}}\\
&\leq \bbP\inb{\frac{1}{N_{\one}}\left|\inb{m_{\one}: (\vecx_{\mo}, \vecy)\in T_{XY}^n, }\right|> \exp\inb{n\inp{|R_{\one}-I(X;Y)|^{+}-R_{\one}+\frac{\epsilon}{2}}}}\\
&\leq \exp\inb{-\frac{1}{2}\exp\inb{\frac{n \epsilon}{2}}}.
\end{align*}

\noindent\underline{\em Analyses of \eqref{codebook:2b}}\\
For $i\in [1:N_{\one}]$, let $A_i$ be the set of indices $(j,k)\in [1:N_{\one}]\times[1:N_{\two}], j<i$ such that $(\vecx_j, \vecy_k)\in T^{n}_{\tilde{X}\tilde{Y}|Y}(\vecy)$ provided $|A_i|\leq \exp{\inb{n\inp{|R_{\one}- I(\tilde{X};\tilde{Y}Y)|^{+}+|R_{\two}-I(\tilde{Y};Y)|^{+}} +\frac{\epsilon}{4}}}$. Otherwise, $A_i = \emptyset$.
Let 
\begin{align}\label{func:f_i}
f_i^{[\vecy_1,\vecy_2,\ldots,\vecy_{N_{\two}}]}\inp{\vecx_1,\vecx_2,\ldots, \vecx_i} = \begin{cases}
    1, & \text{if } \vecx_i\in \cup_{(j,k)\in A_i} T^n_{X|\tilde{X}\tilde{Y}Y}(\vecx_j,\vecy_k,\vecy) \\
    0, & \text{otherwise.}
   \end{cases}
\end{align}
Then, 
\begin{align}
\bbP&\inb{\sum_{i=1}^{N_\one}f^{[\vecY_1,\vecY_2,\ldots,\vecY_{N_{\two}}]}_{i}\inp{\vecX_1,\vecX_2,\ldots, \vecX_i}\neq \left|\inb{i:\vecX_i\in T^n_{X|\tilde{X}\tilde{Y}Y}(\vecX_j,\vecY_k,\vecy) \text{ for some }j<i \text{ and some }k}\right|}\nonumber\\
=&\bbP\inb{\left|\inb{(\tilde{m}_{\one}, \tilde{m}_{\two}):(\vecX_{\tilde{m}_{\one}}, \vecY_{\tilde{m}_{\two}}, \vecy)\in T^n_{\tilde{X}\tilde{Y}Y}}\right|> \exp{\inb{n\inp{|R_{\one}- I(\tilde{X};\tilde{Y}Y)|^{+}+|R_{\two}-I(\tilde{Y};Y)|^{+}} +\frac{\epsilon}{4}}}}\nonumber\\
\leq& \inp{|T_{\tilde{Y}|Y}(\vecy)|+1}\exp\inb{-\frac{1}{2}\exp\inb{\frac{n\epsilon}{8}}},\label{eq:conditioning}
\end{align}
where the last inequality can be obtained from the definition of event $\cE$ and \eqref{eq:prob1} where we replace $(\vecX, \vecY)$ with $\vecY$, $(\vecx, \vecy)$ with $\vecy$, and $\epsilon$ with $\epsilon/4$.

For $\vecy_i\in T^n_{\two}, \, i = 1, \ldots, N_{\two}$, we will apply Lemma~\ref{lemma:A1} on $f_i^{[\vecy_1,\vecy_2,\ldots,\vecy_{N_{\two}}]}$ with $(\vecX_1,\ldots,\vecX_{N_{\one}})$ as the random variables $(Z_1, \ldots, Z_N)$. We will first compute the value of $a$ in \eqref{eq:A1}. We note that, for $i\in [1:N_{\one}]$,  $E\insq{f^{[\vecy_1,\vecy_2,\ldots,\vecy_{N_{\two}}]}_{i}\inp{\vecX_1,\vecX_2,\ldots, \vecX_i}\Big|\vecX_1,\vecX_2,\ldots, \vecX_{i-1}}$, being a random function of $(\vecX_1,\vecX_2,\ldots, \vecX_{i-1})$, is a random variable. We will compute it for $(\vecX_1,\vecX_2,\ldots, \vecX_{i-1})= (\vecx_1,\vecx_2,\ldots, \vecx_{i-1})$.

\begin{align*}
&E\insq{f^{[\vecy_1,\vecy_2,\ldots,\vecy_{N_{\two}}]}_{i}\inp{\vecX_1,\vecX_2,\ldots, \vecX_i}\Big|(\vecX_1,\vecX_2,\ldots, \vecX_{i-1}) = (\vecx_1,\vecx_2,\ldots, \vecx_{i-1})}\\
&=\bbP\inp{\vecX_i\in \cup_{(j,k)\in A_i} T^n_{X|\tilde{X}\tilde{Y}Y}(\vecx_j,\vecy_k,\vecy)}\\
&\stackrel{\text{(a)}}{\leq} |A_{i}|\frac{\exp\inb{nH(X|\tilde{X}\tilde{Y}Y)}}{(n+1)^{-|\cX|}\exp(nH(X))}\\
& = (n+1)^{|\cX|}\exp\inb{n\inp{|R_{\one}- I(\tilde{X};\tilde{Y}Y)|^{+}+|R_{\two}-I(\tilde{Y};Y)|^{+}} -I(X;\tilde{X}\tilde{Y}Y)+\frac{\epsilon}{4}},
\end{align*}
where (a) follows by union bound over $(j,k)\in A_{i}$ and by noting that $|T^n_{\one}| = |T^n_{X}|$. For all $i\in [1:N_{\one}]$, this upper bound holds for every realization of $(\vecX_1,\vecX_2,\ldots, \vecX_{i-1})$. Thus, in \eqref{eq:A1}, we may take $a=(n+1)^{|\cX|}\exp$$\Big\{n\Big(|R_{\one}- I(\tilde{X};\tilde{Y}Y)|^{+}$$+|R_{\two}-I(\tilde{Y};Y)|^{+}\Big) -I(X;\tilde{X}\tilde{Y}Y)+\frac{\epsilon}{4}\Big\}$.
If $I(X;\tilde{X}\tilde{Y}Y) >|R_{\one}- I(\tilde{X};\tilde{Y}Y)|^{+}+|R_{\two}-I(\tilde{Y};Y)|^{+} +\epsilon$ (as postulated in~\eqref{codebook:2b}), \eqref{eq:A1} holds with $a = (n+1)^{|\cX|}\exp\inb{-\frac{3}{4}n\epsilon}$. For $t = \exp\inb{\frac{-n\epsilon}{2}}$ and $n\geq n_2(\epsilon)$ with $n_2(\epsilon) \defineqq \min{\inb{n:(n+1)^{|\cX|}\log e<\frac{1}{2}\exp{\inb{\frac{n\epsilon}{4}}}}},$ 
we get
\begin{align}
\bbP&\inb{\frac{1}{N_{\one}}\sum_{i=1}^{N_{\one}}f_i^{[\vecy_1,\vecy_2,\ldots,\vecy_{N_{\two}}]}\inp{\vecX_1,\vecX_2,\ldots, \vecX_i}>\exp\inb{\frac{-n\epsilon}{2}}}\nonumber \\
&\leq \exp\inb{-\frac{N_{\one}}{2}\exp\inb{-\frac{n\epsilon}{2}}}\nonumber\\
&\leq \exp\inb{-\frac{1}{2}\exp\inb{\frac{n\epsilon}{2}}}, \nonumber
\end{align}
where the last inequality uses the assumption that $N_\one\geq \exp\inb{n\epsilon}$.  \\
Averaging over $(\vecY_1, \ldots, \vecY_{\two})$, we get
\begin{align}
\bbP&\inb{\frac{1}{N_{\one}}\sum_{i=1}^{N_{\one}}f_i^{[\vecY_1,\vecY_2,\ldots,\vecY_{N_{\two}}]}\inp{\vecX_1,\vecX_2,\ldots, \vecX_i}>\exp\inb{\frac{-n\epsilon}{2}}}\nonumber \\
&\leq \exp\inb{-\frac{1}{2}\exp\inb{\frac{n\epsilon}{2}}}. \label{eq:prob}
\end{align}
Let events $\cF_1$ and $\cF_2$ be defined as 
\begin{align*}
\cF_1 &= \inb{\frac{1}{N_{\one}}\left|\inb{i:\vecX_i\in T^n_{X|\tilde{X}\tilde{Y}Y}(\vecX_j, \vecY_k, \vecy)\text{ for some}, j<i \text{ and }k}\right|>\exp\inb{\frac{-n\epsilon}{2}}},\\
\cF_2 &= \inb{\sum_{i=1}^{N_\one}f^{[\vecY_1,\vecY_2,\ldots,\vecY_{N_{\two}}]}_{i}\inp{\vecX_1,\vecX_2,\ldots, \vecX_i}\neq \left|\inb{i:\vecX_i\in T^n_{X|\tilde{X}\tilde{Y}Y}(\vecX_j,\vecY_k,\vecy) \text{ for some }j<i \text{ and some }k}\right|},\\
\cF_3 &= \inb{\sum_{i=1}^{N_\one}f^{[\vecY_1,\vecY_2,\ldots,\vecY_{N_{\two}}]}_{i}\inp{\vecX_1,\vecX_2,\ldots, \vecX_i}>\exp\inb{\frac{-n\epsilon}{2}}}.
\end{align*}
We are interested in $\bbP\inp{\cF_1}$. We see that 
\begin{align*}
\bbP\inp{\cF_1}& = \bbP\inp{\cF_1\cap \cF_2} +\bbP\inp{\cF_1\cap \cF^{c}_2}\\
& \leq \bbP\inp{\cF_2} +\bbP\inp{\cF_1\cap \cF^{c}_2}\\
& \leq \bbP\inp{\cF_2} +\bbP\inp{\cF_3}\\
&\stackrel{\text{(a)}}{\leq}\inp{|T_{\tilde{Y}|Y}(\vecy)|+1}\exp\inb{-\frac{1}{2}\exp\inb{\frac{n\epsilon}{8}}}+\exp\inb{-\frac{1}{2}\exp\inb{\frac{n\epsilon}{2}}}\\
& \leq \inp{|T_{\tilde{Y}|Y}(\vecy)|+2}\exp\inb{-\frac{1}{2}\exp\inb{\frac{n\epsilon}{8}}},
\end{align*}
where (a) follows from \eqref{eq:conditioning} and \eqref{eq:prob}. Thus, 
\begin{align*}
\bbP&\inp{\frac{1}{N_{\one}}\left|\inb{i:\vecX_i\in T^n_{X|\tilde{X}\tilde{Y}Y}(\vecX_j, \vecY_k, \vecy)\text{ for some}, j<i \text{ and }k}\right|>\exp\inb{\frac{-n\epsilon}{2}}}\\
&\leq \inp{|T_{\tilde{Y}|Y}(\vecy)|+2}\exp\inb{-\frac{1}{2}\exp\inb{\frac{n\epsilon}{8}}}.
\end{align*}
By symmetry, we get the same upper bound when $j>i$. Thus,
\begin{align*}
\bbP&\inb{\frac{\left|\inb{\mo:(\vecX_{\mo}, \vecX_{\tilde{m}_{\one}}, \vecY_{\mt}, \vecy)\in T^n_{X\tilde{X}\tilde{Y}Y} \text{ for some }\tilde{m}_{\one}\neq \mo\text{ and some }\mt} \right|}{N_{\one}} >\exp\inb{-n\epsilon/2} }\\
&< 2\inp{|T_{\tilde{Y}|Y}(\vecy)|+2}\exp\inb{-\frac{1}{2}\exp\inb{\frac{n\epsilon}{8}}}.
\end{align*}
This completes the analysis for \eqref{codebook:2b}. 

\noindent\underline{\em Analysis of \eqref{codebook:5}}\\
We will split the analysis in two parts as suggested by the inequalities below.
\begin{align*}
&\bbP\inb{\left|\inb{(\tilde{m}_{\two 1}, \tilde{m}_{\two 2}):(\vecx', \vecy_{\tilde{m}_{\two 1}}, \vecy_{\tilde{m}_{\two 2}}, \vecy')\in T^n_{X'\tilde{Y}_1\tilde{Y}_2Y'} }\right|
> \exp\inb{n\inp{|R_{\two}-I(\tilde{Y}_1;X'Y')|^{+}+|R_{\two}-I(\tilde{Y}_2;\tilde{Y}_1 X'Y')|^{+}+\epsilon}}}\\
&\leq \bbP\Big\{\left|\inb{(\tilde{m}_{\two 1}, \tilde{m}_{\two 2}):\tilde{m}_{\two 1}\neq  \tilde{m}_{\two 2},(\vecx', \vecY_{\tilde{m}_{\two 1}}, \vecY_{\tilde{m}_{\two 2}}, \vecy')\in T^n_{X'\tilde{Y}_1\tilde{Y}_2Y'} }\right|\\
 &\qquad\qquad>1/2\exp\inb{n\inp{|R_{\two}-I(\tilde{Y}_1;X'Y')|^{+}+|R_{\two}-I(\tilde{Y}_2;\tilde{Y}_1 X'Y')|^{+}+\epsilon}}\Big\}\\
&+\bbP\Big\{\left|\inb{(\tilde{m}_{\two 1}, \tilde{m}_{\two 2}):\tilde{m}_{\two 1}= \tilde{m}_{\two 2},(\vecx', \vecY_{\tilde{m}_{\two 1}}, \vecY_{\tilde{m}_{\two 2}}, \vecy')\in T^n_{X'\tilde{Y}_1\tilde{Y}_2Y'} }\right|\\
&\qquad\qquad> 1/2\exp\inb{n\inp{|R_{\two}-I(\tilde{Y}_1;X'Y')|^{+}+|R_{\two}-I(\tilde{Y}_2;\tilde{Y}_1 X'Y')|^{+}+\epsilon}}\Big\}\\
&\leq \bbP\Big\{\left|\inb{(\tilde{m}_{\two 1}, \tilde{m}_{\two 2}):\tilde{m}_{\two 1}\neq  \tilde{m}_{\two 2},(\vecx', \vecY_{\tilde{m}_{\two 1}}, \vecY_{\tilde{m}_{\two 2}}, \vecy')\in T^n_{X'\tilde{Y}_1\tilde{Y}_2Y'} }\right|\\
 &\qquad\qquad>\exp\inb{n\inp{|R_{\two}-I(\tilde{Y}_1;X'Y')|^{+}+|R_{\two}-I(\tilde{Y}_2;\tilde{Y}_1 X'Y')|^{+}+\epsilon'}}\Big\}\\
&+\bbP\Big\{\left|\inb{(\tilde{m}_{\two 1}, \tilde{m}_{\two 2}):\tilde{m}_{\two 1}= \tilde{m}_{\two 2},(\vecx', \vecY_{\tilde{m}_{\two 1}}, \vecY_{\tilde{m}_{\two 2}}, \vecy')\in T^n_{X'\tilde{Y}_1\tilde{Y}_2Y'} }\right|\\
&\qquad\qquad> \exp\inb{n\inp{|R_{\two}-I(\tilde{Y}_1;X'Y')|^{+}+|R_{\two}-I(\tilde{Y}_2;\tilde{Y}_1 X'Y')|^{+}+\epsilon'}}\Big\}
\end{align*}
for $\epsilon' = \epsilon/2$.
We first consider the case when $\tilde{m}_{\two 1}\neq \tilde{m}_{\two 2}$.

We follow arguments similar to those for \eqref{codebook:3b} and get the upper bound. We define
\begin{align}\label{func:tg}
\tilde{g}_i(\vecy_1, \vecy_2, \ldots, \vecy_i) \defineqq \begin{cases}
    1, & \text{if } \vecy_i\in T^n_{\tilde{Y}_1|{X}'Y'}(\vecx',\vecy') \\
    0, & \text{otherwise.}
   \end{cases}
\end{align}
For $\tilde{\vecy}\in T^n_{\tilde{Y}_1|{X}'Y'}(\vecx',\vecy')$,
\begin{align}\label{func:th}
\tilde{h}^{\tilde{\vecy}}_i(\vecy_1, \vecy_2, \ldots, \vecy_i) \defineqq \begin{cases}
    1, & \text{if } \vecy_i\in T^n_{\tilde{Y}_2|\tilde{Y}_1X'Y'}(\tilde{\vecy}, \vecx',\vecy') \\
    0, & \text{otherwise.}
   \end{cases}
\end{align}
Define events $\tilde{\cE}$ and $ \tilde{\cE}_1$ as 
\begin{align*}
\tilde{\cE} = &\Big\{\left|\inb{(\tilde{m}_{\two 1}, \tilde{m}_{\two 2}):\tilde{m}_{\two 1}\neq \tilde{m}_{\two 2},\,(\vecx', \vecY_{\tilde{m}_{\two 1}}, \vecY_{\tilde{m}_{\two 2}}, \vecy')\in T^n_{X'\tilde{Y}_1\tilde{Y}_2Y'}}\right|\\
&\qquad> \exp\inb{n\inp{|R_{\two}- I(\tilde{Y}_2;X'Y'\tilde{Y}_1)|^{+}+|R_{\two}-I(\tilde{Y}_1;X'Y')|^{+}+\epsilon'}}\Big\},\\
\tilde{\cE}_1 &= \inb{\sum_{i = 1}^{N_{\two}}\tilde{g}_i(\vecY_1, \vecY_2, \ldots, \vecY_i)>\exp{\inb{n\inp{|R_{\two}-I(\tilde{Y}_1;X'Y')|^{+}+\frac{\epsilon'}{2}}}}}.
\end{align*}
Let $R'_{\two}\defineqq \frac{\log{\inp{N_{\two}-1}}}{n} =\frac{\log{\inp{2^{nR_{\two}}-1}}}{n} $. For $i\in [1:N_{\two}]$ and $\tilde{\vecy}\in T^n_{\tilde{Y}_1|{X}'Y'}(\vecx',\vecy')$, define events $\tilde{\cE}_{2}^{i, \tilde{\vecy}}$ and $\tilde{\cE}_{2,\dagger}^{i, \tilde{\vecy}}$ as 
\begin{align*}
\tilde{\cE}_{2}^{i, \tilde{\vecy}} &= \inb{\sum_{j = 1, j\neq i}^{N_{\two}} \tilde{h}^{\tilde{\vecy}}_j(\vecY_1, \vecY_2, \ldots, \vecY_j)>\exp{\inb{n\inp{|R_{\two}-I(\tilde{Y}_2;\tilde{Y}_1X'Y')|^{+}+\frac{\epsilon'}{2}}}}}\\
\tilde{\cE}_{2,\dagger}^{i, \tilde{\vecy}} &= \inb{\sum_{j = 1, j\neq i}^{N_{\two}} \tilde{h}^{\tilde{\vecy}}_j(\vecY_1, \vecY_2, \ldots, \vecY_j)>\exp{\inb{n\inp{|R'_{\two}-I(\tilde{Y}_2;\tilde{Y}_1X'Y')|^{+}+\frac{\epsilon'}{2}}}}}
\end{align*}
Note that 
\begin{align*}
&\left|\inb{(\tilde{m}_{\two 1}, \tilde{m}_{\two 2}):\tilde{m}_{\two 1}\neq  \tilde{m}_{\two 2}\text{ and }(\vecx', \vecY_{\tilde{m}_{\two 1}}, \vecY_{\tilde{m}_{\two 2}}, \vecy')\in T^n_{X'\tilde{Y}_1\tilde{Y}_2Y'}}\right|\\
&\qquad \qquad \qquad= \sum_{i = 1}^{N_{\two}}\tilde{g}_i(\vecY_1, \vecY_2, \ldots, \vecY_i)\inp{\sum_{j = 1, j\neq i}^{N_{\two}} \tilde{h}^{\vecY_i}_j(\vecY_1, \vecY_2, \ldots, \vecY_j)}.
\end{align*}
Since,
\begin{align*}
\bbP\inp{\tilde{\cE_2}^{i, \vecY_i}} &= \sum_{\tilde{y}\in \cY^n}\bbP\inp{\vecY_i = \tilde{y}}\bbP\inp{\tilde{\cE_2}^{i, \vecY_i}|\vecY_i = \tilde{y}}\\
& = \sum_{\tilde{y}\in \cY^n}\bbP\inp{\vecY_i = \tilde{y}}\bbP\inp{\tilde{\cE_2}^{i, \tilde{y}}},
\end{align*} and $\tilde{\cE}_{2}^{i, \tilde{\vecy}}\subseteq \tilde{\cE}_{2,\dagger}^{i, \tilde{\vecy}}$ for all $i\in [1:N_{\two}]$,
\begin{align*}
\tilde{\cE} &\subseteq \inp{\cup_{i\in 2^{nR_{\two}}}\cup_{\tilde{\vecy}\in T_{\tilde{Y}_1|X'Y'}(\vecx', \vecy')}\tilde{\cE}_2^{i,\tilde{\vecy}}}\cup \tilde{\cE_1}\\
&\subseteq \inp{\cup_{i\in 2^{nR_{\two}}}\cup_{\tilde{\vecy}\in T_{\tilde{Y}_1|X'Y'}(\vecx', \vecy')}\tilde{\cE}_{2,R'_{\two}}^{i,\tilde{\vecy}}}\cup \tilde{\cE_1}.
\end{align*}

We apply Lemma~\ref{lemma:A1} to \eqref{func:tg} with $(\vecY_1, \ldots, \vecY_{N_{\two}})$ as the random variables $(Z_1, \ldots, Z_N)$. We can show that $a= (n+1)^{|\cY|}\exp{\inp{-nI(\tilde{Y}_1;X'Y')}}$ satisfies \eqref{eq:A1}. We take $t =\frac{1}{N_{\two}}\exp{\inb{n\inp{|R_{\two}-I(\tilde{Y}_1;X'Y')|^{+}+\frac{\epsilon'}{2}}}}$ and $n\geq n_1(\epsilon')$ (recall that $n_1(\epsilon') = \min{\inb{n:(n+1)^{|\cY|}\log{e}<\frac{1}{2}\exp(\frac{n\epsilon'}{2})}}$). This gives $N_{\two}(t-a\log{e})\geq (1/2)\exp(\frac{n \epsilon'}{2})$ which, when plugged in \eqref{eq:A2}, gives 
\begin{align}\label{eq:te1}
\bbP\inp{\tilde{\cE}_1} \leq \exp\inb{-\frac{1}{2}\exp\inb{\frac{n\epsilon'}{2}}}.
\end{align}

Similarly, for $i\in [1:N_{\two}]$, we can apply Lemma~\ref{lemma:A1} to \eqref{func:th} with $(\vecY_1, \ldots,\vecY_{i-1}, \vecY_{i+1}, \vecY_{N_{\two}})$ as the random variables $(Z_1, \ldots, Z_N)$. We can show that $a= (n+1)^{|\cY|}\exp{\inp{-nI(\tilde{Y}_2;\tilde{Y}_1X'Y')}}$ satisfies \eqref{eq:A1}. Choose, $t =\frac{1}{N_{\two}-1}\exp{\inb{n\inp{|R'_{\two}-I(\tilde{Y}_2;\tilde{Y}_1X'Y')|^{+}+\frac{\epsilon'}{2}}}}$ and $n\geq n_1(\epsilon')$ to obtain

\begin{align}\label{eq:te2}
\bbP\inp{\tilde{\cE}_{2,\dagger}^{i, \tilde{\vecy}}} \leq \exp\inb{-\frac{1}{2}\exp\inb{\frac{n\epsilon'}{2}}}, \, \tilde{y}\in T_{\tilde{Y}_1|X'Y'}(\vecx', \vecy').
\end{align}
Using \eqref{eq:te1} and \eqref{eq:te2}, we see that
\begin{align*}
\bbP\inp{\tilde{\cE}}\leq \inp{2^{nR_{\two}}|T_{\tilde{Y}|X'Y'}(\vecx', \vecy')|+1}\exp\inb{-\frac{1}{2}\exp\inb{\frac{n\epsilon'}{2}}}.
\end{align*}
When $\tilde{m}_{\two 1}= \tilde{m}_{\two 2}$ and $\tilde{Y}_1\neq \tilde{Y}_2$,
\begin{align*}
\left|\inb{(\tilde{m}_{\two 1}, \tilde{m}_{\two 2}):(\vecx', \vecY_{\tilde{m}_{\two 1}}, \vecY_{\tilde{m}_{\two 2}}, \vecy')\in T^n_{X'\tilde{Y}_1\tilde{Y}_2Y'}}\right| = 0 \text{ w.p. }1.
\end{align*} When $\tilde{m}_{\two 1}= \tilde{m}_{\two 2}$ and $\tilde{Y}_1= \tilde{Y}_2$,
\begin{align*}
&\bbP\inb{\left|\inb{(\tilde{m}_{\two 1}, \tilde{m}_{\two 2}):(\vecx', \vecY_{\tilde{m}_{\two 1}}, \vecY_{\tilde{m}_{\two 2}}, \vecy')\in T^n_{X'\tilde{Y}_1\tilde{Y}_2Y'}}\right|> \exp\inb{n\inp{|R_{\two}- I(\tilde{Y}_2;X'Y'\tilde{Y}_1)|^{+}+|R_{\two}-I(\tilde{Y}_1;X'Y')|^{+}+\epsilon'}}}\\
&=\bbP\inb{\left|\inb{\tilde{m}_{\two 1}:(\vecx',  \vecY_{\tilde{m}_{\two 1}}, \vecy')\in T^n_{X'\tilde{Y}_1Y'}}\right|> \exp\inp{n\inp{|R_{\two}-I(\tilde{Y}_1;X'Y')|^{+}+\epsilon'}}}\\
&\leq \exp\inb{-1/2\exp(n\epsilon')}.
\end{align*}
The equality follows from the condition that $R_{\two}\leq H(\tilde{Y}_2)$ and the inequality follows from \cite{CsiszarN88}[(A7)]. 
Thus, 
\begin{align}
&\bbP\inb{\left|\inb{(\tilde{m}_{\two 1}, \tilde{m}_{\two 2}):(\vecx', \vecy_{\tilde{m}_{\two 1}}, \vecy_{\tilde{m}_{\two 2}}, \vecy')\in T^n_{X'\tilde{Y}_1\tilde{Y}_2Y'} }\right|
> \exp\inb{n\inp{|R_{\two}-I(\tilde{Y}_1;X'Y')|^{+}+|R_{\two}-I(\tilde{Y}_2;\tilde{Y}_1 X'Y')|^{+}+\epsilon'}}}\nonumber\\
&\leq \inp{2^{nR_{\two}}|T_{\tilde{Y}|X'Y'}(\vecx', \vecy')|+1}\exp\inb{-\frac{1}{2}\exp\inb{\frac{n\epsilon'}{2}}} + \exp\inb{-1/2\exp(n\epsilon')}\nonumber\\
&\leq \inp{2^{nR_{\two}}|T_{\tilde{Y}|X'Y'}(\vecx', \vecy')|+1}\exp\inb{-\frac{1}{2}\exp\inb{\frac{n\epsilon}{4}}} + \exp\inb{-\frac{1}{2}\exp\inb{\frac{n\epsilon}{2}}}.\label{eq:prob2}
\end{align}

This completes the analysis of \eqref{codebook:5}\\

\noindent\underline{\em Analysis of \eqref{codebook:4}}\\
Let $A$ be the set of indices $(j,k)\in [1:N_{\two}]\times[1:N_{\two}]$  such that $(\vecy_j, \vecy_k)\in T^{n}_{\tilde{Y}_{1}\tilde{Y}_2|Y'}(\vecy')$ provided $|A|\leq \exp{\inb{n\inp{|R_{\two}- I(\tilde{Y}_2;\tilde{Y}_1Y')|^{+}+|R_{\two}-I(\tilde{Y}_1;Y')|^{+}} +\frac{\epsilon}{4}}}$. Otherwise, $A = \emptyset$.
Let 
\begin{align*}
\tilde{f}_i^{[\vecy_1,\vecy_2,\ldots,\vecy_{N_{\two}}]}\inp{\vecx_1,\vecx_2,\ldots, \vecx_i} = \begin{cases}
    1, & \text{if } \vecx_i\in \cup_{(j,k)\in A} T^n_{X'|\tilde{Y}_1\tilde{Y}_2Y'}(\vecy_j,\vecy_k,\vecy') \\
    0, & \text{otherwise.}
   \end{cases}
\end{align*}
\begin{align}
\bbP&\inb{\sum_{i=1}^{N_\one}\tilde{f}^{[\vecY_1,\vecY_2,\ldots,\vecY_{N_{\two}}]}_{i}\inp{\vecX_1,\vecX_2,\ldots, \vecX_i}\neq \left|\inb{i:\vecX_i\in T^n_{X|\tilde{Y}_1\tilde{Y}_2Y'}(\vecY_j,\vecY_k,\vecy') \text{ for some }j\neq k \right|}}\nonumber\\
=&\bbP\Bigg\{\left|\inb{(\tilde{m}_{\two1}, \tilde{m}_{\two2}):(\vecY_{\tilde{m}_{\two 1}}, \vecY_{\tilde{m}_{\two 2}}, \vecy')\in T^n_{\tilde{Y}_1\tilde{Y}_2Y'}}\right|\\
&\qquad \qquad \qquad\qquad> \exp{\inb{n\inp{|R_{\two}- I(\tilde{Y}_1;\tilde{Y}_2Y')|^{+}+|R_{\two}-I(\tilde{Y}_1;Y')|^{+}} +\frac{\epsilon}{4}}}\Bigg\}\nonumber\\
\leq& \inp{2^{nR_{\two}}|T_{\tilde{Y}|Y'}(\vecy')|+1}\exp\inb{-\frac{1}{2}\exp\inb{\frac{n\epsilon}{8}}}+ \exp\inb{-\frac{1}{2}\exp\inb{{n\epsilon}}}.\label{eq:conditioning2}
\end{align}

where last inequality follows from \eqref{eq:prob2} by replacing $(\vecx', \vecy')$ with $\vecy'$, $(\vecX',\vecY')$ with $\vecY'$ and $\frac{\epsilon}{2}$ (or $\epsilon'$) with $\frac{\epsilon}{4}$.

For $\vecy_i\in T^n_{\one},\, i = 1, \ldots, \vecy_{N_{\two}}$, we will apply Lemma~\ref{lemma:A1} on $\tilde{f}_i^{[\vecy_1,\vecy_2,\ldots,\vecy_{N_{\two}}]}$ with $(\vecX_1,\ldots,\vecX_{N_{\one}})$ as the random variables $(Z_1, \ldots, Z_N)$. We will first compute the value of $a$ in \eqref{eq:A1}. 

\begin{align*}
&E\insq{\tilde{f}^{[\vecy_1,\vecy_2,\ldots,\vecy_{N_{\two}}]}_{i}\inp{\vecX_1,\vecX_2,\ldots, \vecX_i}\Big|(\vecX_1,\vecX_2,\ldots, \vecX_{i-1})}\\
&=\bbP\inp{\vecX_i\in \cup_{(j,k)\in A} T^n_{X'|\tilde{Y}_1\tilde{Y}_2Y'}(\vecy_j,\vecy_k,\vecy')}\\
&\stackrel{\text{(a)}}{\leq} |A|\frac{\exp\inb{nH(X'|\tilde{Y}_1\tilde{Y}_2Y')}}{(n+1)^{-|\cX|}\exp(nH(X'))}\\
& \leq  (n+1)^{|\cX|}\exp\inb{n\inp{|R_{\two}- I(\tilde{Y}_2;\tilde{Y}_1Y')|^{+}+|R_{\two}-I(\tilde{Y}_1;Y')|^{+}} -I(X';\tilde{Y}_1\tilde{Y}_2Y')+\frac{\epsilon}{4}}.
\end{align*}
where (a) follows by union bound over $(j,k)\in A$ and by noting that $|T^n_{\one}| = |T^n_{X'}|$. 
If $I(X';\tilde{Y}_1\tilde{Y}_2Y') >|R_{\two}- I(\tilde{Y}_2;\tilde{Y}_1Y')|^{+}+|R_{\two}-I(\tilde{Y}_1;Y')|^{+} +\epsilon$ (which~\eqref{codebook:4} postulates), \eqref{eq:A1} holds with $a = (n+1)^{|\cX|}\exp\inb{-\frac{3}{4}n\epsilon}$. For $t = \exp\inb{\frac{-n\epsilon}{2}}$ and $n\geq n_2(\epsilon)$ (recall that $n_2(\epsilon) = \min{\inb{n:(n+1)^{|\cX|}\log e<\frac{1}{2}\exp{\inb{\frac{n\epsilon}{4}}}}},$ 
we get
\begin{align}
\bbP&\inb{\frac{1}{N_{\one}}\sum_{i=1}^{N_{\one}}\tilde{f}_i^{[\vecy_1,\vecy_2,\ldots,\vecy_{N_{\two}}]}\inp{\vecX_1,\vecX_2,\ldots, \vecX_i}>\exp\inb{\frac{-n\epsilon}{2}}}\nonumber \\
&\leq \exp\inb{-\frac{N_{\one}}{2}\exp\inb{-\frac{n\epsilon}{2}}}\nonumber\\
&\leq \exp\inb{-\frac{1}{2}\exp\inb{\frac{n\epsilon}{2}}} \nonumber
\end{align}
where the last inequality uses the assumption that $N_\one\geq \exp\inb{n\epsilon}$. Averaging over $(\vecY_1, \ldots, \vecY_{\two})$, we get
\begin{align}
\bbP&\inb{\frac{1}{N_{\one}}\sum_{i=1}^{N_{\one}}\tilde{f}_i^{[\vecY_1,\vecY_2,\ldots,\vecY_{N_{\two}}]}\inp{\vecX_1,\vecX_2,\ldots, \vecX_i}>\exp\inb{\frac{-n\epsilon}{2}}}\nonumber \\
&\leq \exp\inb{-\frac{1}{2}\exp\inb{\frac{n\epsilon}{2}}} \label{eq:tprob}
\end{align}
Let events $\tilde{\cF_1}$ and $\tilde{\cF_2}$ be defined as 
\begin{align*}
\tilde{\cF_1} &= \inb{\frac{1}{N_{\one}}\left|\inb{i:\vecX_i\in T^n_{X|\tilde{Y}_1\tilde{Y}_2Y'}(\vecY_j, \vecY_k, \vecy')\text{ for some}, j}\right|>\exp\inb{\frac{-n\epsilon}{2}}},\\
\tilde{\cF_2} &= \inb{\sum_{i=1}^{N_\one}\tilde{f}^{[\vecY_1,\vecY_2,\ldots,\vecY_{N_{\two}}]}_{i}\inp{\vecX_1,\vecX_2,\ldots, \vecX_i}\neq \left|\inb{i:\vecX_i\in T^n_{X|\tilde{Y}_1\tilde{Y}_2Y'}(\vecY_j,\vecY_k,\vecy') \text{ for some }j}\right|},\\
\tilde{\cF_3} &= \inb{\sum_{i=1}^{N_\one}\tilde{f}^{[\vecY_1,\vecY_2,\ldots,\vecY_{N_{\two}}]}_{i}\inp{\vecX_1,\vecX_2,\ldots, \vecX_i}>\exp\inb{\frac{-n\epsilon}{2}}}.
\end{align*}
We are interested in $\bbP\inp{\tilde{\cF_1}}$. We see that 
\begin{align*}
\bbP\inp{\tilde{\cF}_1}& = \bbP\inp{\tilde{\cF}_1\cap \tilde{\cF}_2} +\bbP\inp{\tilde{\cF}_1\cap \tilde{\cF}^{c}_2}\\
& \leq \bbP\inp{\tilde{\cF}_2} +\bbP\inp{\tilde{\cF}_1\cap \tilde{\cF}^{c}_2}\\
& \leq \bbP\inp{\tilde{\cF}_2} +\bbP\inp{\tilde{\cF}_3}\\
&\stackrel{\text{(a)}}{\leq} \inp{2^{nR_{\two}}|T_{\tilde{Y}|Y'}(\vecy')|+1}\exp\inb{-\frac{1}{2}\exp\inb{\frac{n\epsilon}{8}}}+ \exp\inb{-\frac{1}{2}\exp\inb{{n\epsilon}}}+\exp\inb{-\frac{1}{2}\exp\inb{\frac{n\epsilon}{2}}}\\
& =\inp{2^{nR_{\two}}|T_{\tilde{Y}|Y'}(\vecy')|+3}\exp\inb{-\frac{1}{2}\exp\inb{\frac{n\epsilon}{8}}},
\end{align*}
where (a) follows from \eqref{eq:conditioning2} and \eqref{eq:tprob}.        
\end{proof}


\section{\em Proof of Theorem~\ref{thm:inner_bd}}\label{sec:inner_bd_proof}
\begin{proof}
\noindent {\em Encoding.} 
For some $P_{\one}$ and $P_{\two}$ satisfying $\min_{x\in \cX}P_{\one}(x)>0$ and $\min_{y\in \cY}P_{\two}(y)>0$ respectively, and $\epsilon>0$ (TBD), consider a codebook of rate $(R_{\one}, R_{\two})$ (TBD) as given by Lemma~\ref{lemma:codebook}.   For $\mo \in \cM_{\one}$, $f_{\one}(\mo) = \vecx_{\mo}$ and for $\mt \in \cM_{\two}$, $f_{\two}(\mt) = \vecy_{\mt}$.\\
{\em Decoding.} 
For a parameter $\eta>0$, let $\cD_{\eta}$ be the set of joint distributions defined as
$D_\eta \defineqq  \inb{P_{XYZ}\in \cP^n_{\cX\times \cY \times \cZ}:\, D\inp{P_{XYZ}||P_XP_YW}\leq\eta }$.
 Decoding happens in five steps. In the first step, we populate sets $A_1$ and $B_1$ containing candidate messages for user \one and \two respectively. In steps $2-5$, we sequentially remove the candidates. \\
\begin{description}
\item  [{\em Step 1}:]\label{step:1} Let $A_1 = \{m_{\one}\in\mathcal{M}_{\one}:\inp{f_{\one}(m_{\one}), \vecy, \vecz} \in T^{n}_{XYZ}$ for some $\vecy\in \cY^n$ such that $P_{XYZ}\in \cD_{\eta}$\} and\\
$B_1 = \{m_{\two}\in\mathcal{M}_{\two}:\inp{\vecx, f_{\two}(m_{\two}), \vecz} \in T^{n}_{XYZ}$ for some $\vecx\in \cX^n$ such that $P_{XYZ}\in \cD_{\eta}$\}.

\item [{\em Step 2}:] \label{step:2} Let $C_1 = \{\mo\in A_1:$ {For every } $\tilde{m}_{\two 1},\, \tilde{m}_{\two 2}\in B_1$,  such that for every $\vecy\in \cY^n$ with $\inp{f_{\one}(\mo), \vecy,f_{\two}(\tilde{m}_{\two 1}), f_{\two}(\tilde{m}_{\two 2}),\vecz}\in T^n_{XY\tilde{Y}_1\tilde{Y}_2Z}$ and $P_{XYZ}\in\cD_{\eta}$, $I(\tilde{Y}_1\tilde{Y}_2;XZ|Y)> \eta\}$. Let $A_2 = A_1\setminus C_1$.

\item [{\em Step 3}:]\label{step:3} Let $C_2 = \{\mt\in B_1:$ {For every } $\tilde{m}_{\one 1},\, \tilde{m}_{\one 2}\in A_2$,  such that for every $\vecx\in \cX^n$ with $\inp{\vecx, f_{\two}(\mt),  f_{\one}(\tilde{m}_{\one 1}),f_{\one}(\tilde{m}_{\one 2}),\vecz}\in T^n_{XY\tilde{X}_1\tilde{X}_2Z}$ and $P_{XYZ}\in \cD_{\eta}$ and $I(\tilde{X}_1\tilde{X}_2;YZ|X)> \eta\}$. Let $B_2 = B_1\setminus C_2$.

\item [{\em Step 4}:]\label{step:4} Let $C_3 = \{\mo\in A_2:$ {For every } $(\tilde{m}_{\one},\, \tilde{m_{\two}})\in A_2\times B_2, \, \tilde{m}_{\one}\neq m_{\one}$ such that for every $\vecy\in \cY^n$ with $\inp{f_{\one}(\mo), \vecy, f_{\one}(\tilde{m}_{\one}), f_{\two}(\tilde{m}_{\two}), \vecz}\in T^n_{XY\tilde{X}\tilde{Y}Z}$ and $P_{XYZ}\in \cD_{\eta}$ , $I(\tilde{X}\tilde{Y};XZ|Y)>\eta\}$. Let $A_3 = A_2\setminus C_3$.

\item [{\em Step 5}:]\label{step:5} Let $C_4 = \{\mt\in B_2:$ {For every } $(\tilde{m}_{\one},\, \tilde{m_{\two}})\in A_3\times B_2, \, \tilde{m}_{\two}\neq m_{\two}$ such that for every $\vecx\in \cX^n$ with $\inp{\vecx,f_{\two}(\mt), f_{\one}(\tilde{m}_{\one}), f_{\two}(\tilde{m}_{\two}), \vecz}\in T^n_{XY\tilde{X}\tilde{Y}Z}$ and $P_{XYZ}\in \cD_{\eta}$,\,$I(\tilde{X}\tilde{Y};YZ|X)>\eta\}$. Let $B_3 = B_2\setminus C_4$.

\end{description}
After steps 1-5, the decoded output is as follows.
\begin{align*}
\phi(\vecz) = \begin{cases}(\mo,\mt) &\text{ if }A_3\times B_3 = \{(\mo,\mt)\},\\ \oneb &\text{ if }|A_3| = 0, \, |B_3| \neq 0,\\\twob &\text{ if }|A_3| \neq 0, \, |B_3| = 0\text{ and}\\(1,1)&\text{ otherwise.}
																\end{cases}
\end{align*}

For small enough choice of $\eta>0$, Lemma~\ref{lemma:disambiguity} implies that if $|A_3|, \, |B_3| \geq 1$, then $|A_3|$ = $|B_3|$ = 1. 
Suppose the channel is non-\spoofable.
We start by showing that $P_{e,\na}$ can be upper bounded by sum of $P_{e,\malone}$ and $P_{e,\maltwo}$. So, we only need to analyse the case when a user is malicious. To show this, we note that $\cE_{\mo, \mt}  = \cE_{\mo}\cup\cE_{\mt}$. Thus,
\begin{align*}
&P_{e,\na}\hspace{-0.25em} =
\frac{1}{N_{\one}\cdot N_{\two}} \sum_{(\mo, \mt)\in \mathcal{M}_{\one}\times\mathcal{M}_{\two}}W^n\inp{\cE_{\mo}\cup\cE_{\mt}|f_{\one}(\mo), f_{\two}(\mt)}\\
&\leq\frac{1}{N_{\one}\cdot N_{\two}} \sum_{(\mo, \mt)\in \mathcal{M}_{\one}\times\mathcal{M}_{\two}}\Big(W^n\inp{\cE_{\mo}|f_{\one}(\mo), f_{\two}(\mt)}+W^n\inp{\cE_{\mt}|f_{\one}(\mo), f_{\two}(\mt)}\Big)\\
&=\frac{1}{ N_{\two}}\sum_{\mt\in\cM_{\two}}\inp{\frac{1}{N_{\one}} \sum_{\mo\in \mathcal{M}_{\one}}W^n\inp{\cE_{\mo}|f_{\one}(\mo), f_{\two}(\mt)}}\\
&\qquad \qquad+\frac{1}{N_{\one}}\sum_{\mo\in\cM_{\one}}\inp{ \frac{1}{N_{\two}}\sum_{\mt\in\mathcal{M}_{\two}}W^n\inp{\cE_{\mt}|f_{\one}(\mo), f_{\two}(\mt)}}\\
&\leq P_{e,\malone} +P_{e,\maltwo}.
\end{align*}
So, if $P_{e,\malone}$ and $P_{e,\maltwo}$ are small, $P_{e,\na}$ is also small. Thus, it is sufficient to analyze the cases when one of the user is adversarial.

We consider the case when user \two is malicious while user \one is honest.
Let $\cE$ be defined as
\begin{align*}
\cE =\{\vecz:\phi(\vecz)\in\inb{\cM_{\one}\setminus\{\mo\}\times\cM_{\two}, \oneb, (1,1)}\}.
\end{align*}
Then, the probability of error is
\begin{align*}
P_{e,\maltwo} = \max_{\vecy\in \cY^n}{\frac{1}{N_{\one}}\sum_{\mo\in \cM_{\one}}W^n(\cE|f_{\one}^{n}(\mo),\vecy)}
\end{align*}
For each $\vecy' \in \cY^n$, we will get a uniform upper bound on $P_{e, \maltwo}$ which goes to zero with $n$.
So, let us fix an attack vector $\vecy\in \cY^n$ and analyze
\begin{align*}
P:= {\frac{1}{N_{\one}}\sum_{\mo\in \cM_{\one}}W^n(\cE|f_{\one}^{n}(\mo),\vecy)}.
\end{align*}

For some $\epsilon$ satisfying $0<\epsilon<\eta/3$, let
\begin{align*}
\cH&= \inb{m_{\one}: (\vecx_{m_{\one}}, \vecy)\in \cup_{P_{XY}\in \cP^n_{\cX\times\cY}}T_{XY}^n, I(X;Y)> \epsilon}.
\end{align*}

Then,
\begin{align*}
P &\leq \frac{1}{N_{\one}}|\cH| + \sum_{\mo\in \cH^c}W^n(\cE|f_{\one}^{n}(\mo),\vecy)\\
&:= P_1 +P_2.
\end{align*}
The first term on the RHS, 
\begin{align*}
P_1\leq |\cP^n_{\cX\times \cY}|\times\frac{|\inb{m_{\one}: (\vecx_{m_{\one}}, \vecy)\in T_{XY}^n, \, I(X;Y)> \epsilon}|}{N_{\one}}
\end{align*}
which goes to zero as $n\rightarrow \infty$ by using~\eqref{codebook:1} and noting that there are only polynomially many types.

Using the decoder definition and Lemma~\ref{lemma:disambiguity}, we note that $\cE\subseteq\{\vecz: \mo\notin A_3\}$.
Thus, $W^n(\cE|f_{\one}^{n}(\mo),\vecy) \leq W^n(\{\vecz: \mo\notin A_3\}|f_{\one}^{n}(\mo),\vecy)$. 

For $\vecy'\in \cY^n$, let $\cE_1(\vecy')$ be defined as
\begin{align*}
\cE_1(\vecy') = \{\vecz: (\vecx_{\mo}, \vecy', \vecz)\in T^n_{XYZ} \text{ such that }P_{XYZ}\in D_{\eta}\}
\end{align*}
Then $\inp{\cup_{\tilde{\vecy}\in \cY^n}\cE_1(\tilde{\vecy})}^c=\{\vecz: \mo\notin A_1\}$. 
Note that $\inp{\cup_{\tilde{\vecy}\in \cY^n}\cE_1(\tilde{\vecy})}^c \subseteq \cE_1(\vecy)^c$. Then,
\begin{align*}
P_2\leq &  {\frac{1}{N_{\one}}\sum_{\mo\in \cH^c}W^n(\cE|f_{\one}^{n}(\mo),\vecy)}\\
=&\frac{1}{N_{\one}}\sum_{\mo\in \cH^c}W^n(\inp{\cE_1(\vecy)^c\cap\cE}\cup\inp{\cE_1(\vecy)\cap\cE}|f_{\one}^{n}(\mo),\vecy) \\
\leq &\frac{1}{N_{\one}}\sum_{\mo\in \cH^c}W^n(\inp{\cE_1(\vecy)^c}|f_{\one}^{n}(\mo),\vecy) + W^n(\inp{\cE_1(\vecy)\cap\cE}|f_{\one}^{n}(\mo),\vecy) \\
=&\frac{1}{N_{\one}}\sum_{m_{\one}\in \cH^c}\inp{\sum_{P_{XYZ}\in \cD_{\eta}^c}\sum_{\vecz\in T^n_{Z|XY}(\vecx_{m_{\one}},\vecy)}W^n(\vecz|\vecx_{m_{\one}}, \vecy)} \\ +& \frac{1}{N_{\one}}\sum_{m_{\one}\in \cH^c}\inp{\sum_{P_{XYZ}\in \cD_{\eta}}\sum_{\vecz\in T^n_{Z|XY}(\vecx_{m_{\one}},\vecy)\cap\cE}W^n(\vecz|\vecx_{m_{\one}}, \vecy)}\\
=:&P_{2a} +P_{2b}
\end{align*}

For any $\mo \in \cH^c$,

\begin{align*}
\sum_{P_{XYZ}\in \cD_{\eta}^c}\sum_{\vecz\in T^n_{Z|XY}(\vecx_{m_{\one}},\vecy)}W^n(\vecz|\vecx_{m_{\one}}, \vecy) &\leq |\cD_{\eta}^c|\exp{\inp{-nD(P_{XYZ}||P_{XY}W)}}\\
& = |\cD_{\eta}^c|\exp{\inp{-n\inp{D(P_{XYZ}||P_{X}P_{Y}W) -I(X;Y)}}}\\
& \leq |\cD_{\eta}^c|\exp{\inp{-n\inp{\eta-\epsilon}}} 
\end{align*}
Thus, 
\begin{align*}
P_{2a}&\leq \frac{|\cH^c|}{N_{\one}}|\cD_{\eta}^c|\exp{\inp{-n\inp{\eta-\epsilon}}}\\
&\rightarrow 0 \text{ as }\epsilon<\eta/3 \text{ and $|\cD_{\eta}^c|$ grows as a polynomial in $n$.}
\end{align*}

We are left to analyze 
\begin{align*}
P_{2b} = \frac{1}{N_{\one}}\sum_{m_{\one}\in \cH^c}\inp{\sum_{P_{XYZ}\in \cD_{\eta}}\sum_{\vecz\in T^n_{Z|XY}(\vecx_{m_{\one}},\vecy)\cap\cE}W^n(\vecz|\vecx_{m_{\one}}, \vecy)}.
\end{align*} 
Let

\begin{align*}
\cP_1^{\eta}& =  \{P_{X\tilde{X}\tilde{Y}YZ}\in \cP^n_{\cX\times\cX\times\cY\times\cY\times\cZ}: P_{XYZ}\in \cD_{\eta},I(X;Y)\leq \epsilon,\, P_{\tilde{X}Y'Z}\in \cD_{\eta} \text{ for some }Y',\\
&\qquad  \, P_{X'\tilde{Y}Z}\in \cD_{\eta} \text{ for some }X', P_{X}=P_{\tilde{X}}=P_{\one}, P_{\tilde{Y}} = P_{\two}, \, I(\tilde{Y};X)\leq \eta, \, I(\tilde{Y};\tilde{X})\leq \eta\\
&\qquad \text{ and }I(\tilde{X}\tilde{Y};XZ|Y)\geq\eta\}\\
\cP_2^{\eta}& = \{P_{X\tilde{Y}_1\tilde{Y}_2YZ}\in \cP^n_{\cX\times\cY\times\cY\times\cY\times\cZ}: P_{XYZ}\in \cD_{\eta},I(X;Y)\leq \epsilon,\, P_{X'_1\tilde{Y}_1Z}\in \cD_{\eta} \text{ for some }X'_1,\\
&\qquad \, P_{X'_2\tilde{Y}_2Z}\in \cD_{\eta} \text{ for some }X'_2, P_{X}=P_{\one}, P_{\tilde{Y}_1}=P_{\tilde{Y}_2} = P_{\two}\text{ and }I(\tilde{Y}_1\tilde{Y}_2;XZ|Y)\geq\eta\}.
\end{align*}
For $P_{X\tilde{X}\tilde{Y}YZ}\in \cP_1^{\eta}$ and $P_{X\tilde{Y}_1\tilde{Y}_2YZ}\in \cP_2^{\eta} $, let
\begin{align*}
\cE_{\mo,1}(P_{X\tilde{X}\tilde{Y}YZ}) & = \big\{\vecz: \exists(\tilde{m}_{\one},\, \tilde{m}_{\two})\in \cM_{\one}\times \cM_{\two}, \, \tilde{m}_{\one}\neq m_{\one}, \,  \inp{\vecx_{\mo},\vecx_{\tilde{m}_{\one}}, \vecy,   \vecy_{\tilde{m}_{\two}}, \vecz}\in T^n_{X\tilde{X}Y\tilde{Y}Z} \big\} \text{ and }\\
\cE_{\mo,2}(P_{X\tilde{Y}_1\tilde{Y}_2YZ}) & = \big\{\vecz: \exists \tilde{m}_{\two 1},\, \tilde{m}_{\two 2}\in \cM_{\two}, \,\inp{\vecx_{\mo},  \vecy_{\tilde{m}_{\two 1}}, \vecy_{\tilde{m}_{\two 2}},\vecy,\vecz}\in T^n_{X\tilde{Y}_1\tilde{Y}_2YZ}\big\}.
\end{align*}
Note that the extra conditions $I(\tilde{Y};X)\leq \eta$ and $ I(\tilde{Y};\tilde{X})\leq \eta$ in $\cP_{1}^{\eta}$ are due the Lemma~\ref{lemma:indep} (stated below) and using the decoder definition where we only consider $\tilde{m}_{\two}$ which are in $B_2$, that is, they have passed the check in {\em Step 3}.

\begin{lemma}\label{lemma:indep}
For a distribution $P_{XY\tilde{X}Y'X'\tilde{Y}Z}\in \cP^n_{\cX\times\cY\times\cX\times\cY\times\cX\times\cY\times\cZ}$ satisfying
\begin{enumerate}[label=(\Alph*)]
	\item $P_{XYZ}\in D_{\eta}$
	\item $P_{\tilde{X}Y'Z}\in D_{\eta}$
	\item $P_{X'\tilde{Y}Z}\in D_{\eta}$
	\item $I(X\tilde{Y};\tilde{X}Z|Y')<\eta$
\end{enumerate}  
The following holds:  $I(\tilde{Y}X;\tilde{X})\leq \eta$.
\end{lemma}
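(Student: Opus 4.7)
The plan is to introduce $Y'$ as a side variable via the chain rule and then dispatch each resulting term using one of the hypotheses. Specifically, I would start from
\begin{align*}
I(\tilde{X};X\tilde{Y}) &\leq I(\tilde{X};X\tilde{Y}Y')\\
&= I(\tilde{X};Y')+I(\tilde{X};X\tilde{Y}\mid Y'),
\end{align*}
which reduces the task to bounding the two summands on the right.

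For the first summand, I would invoke hypothesis (B). The condition $P_{\tilde{X}Y'Z}\in\cD_{\eta}$ unfolds to $D(P_{\tilde{X}Y'Z}\|P_{\tilde{X}}P_{Y'}W)\leq\eta$, and this KL divergence decomposes as $I(\tilde{X};Y')+D(P_{Z|\tilde{X}Y'}\|W\mid P_{\tilde{X}Y'})$ with the second term nonnegative, giving $I(\tilde{X};Y')\leq\eta$. For the second summand, I would enlarge the right-hand side of the conditional MI by appending $Z$:
\begin{align*}
I(\tilde{X};X\tilde{Y}\mid Y')\leq I(\tilde{X}Z;X\tilde{Y}\mid Y')=I(X\tilde{Y};\tilde{X}Z\mid Y')<\eta,
\end{align*}
where the last inequality is precisely hypothesis (D).

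Combining the two estimates yields $I(X\tilde{Y};\tilde{X})<2\eta$, which matches the stated conclusion up to a constant factor that can be absorbed into $\eta$ throughout the invocations of this lemma in the proof of Theorem~\ref{thm:inner_bd} (equivalently, one can run the decoder with threshold $\eta/2$ in place of $\eta$). Notably, only (B) and (D) are actually used in this argument, while (A) and (C) do not appear to play a role. I do not expect a substantive obstacle: the content of the lemma is the conceptual observation that the codebook variables $X,\tilde{Y}$ are nearly independent of $\tilde{X}$, with $Y'$ serving as the information-carrying bridge supplied by (D) and (B) ensuring that bridge is itself weak.
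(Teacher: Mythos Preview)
Your argument is correct and uses the same two hypotheses, (B) and (D), that the paper relies on; the paper's own proof is just a pointer to the derivation of \eqref{disambeq:2} in Lemma~\ref{lemma:disambiguity}, which combines $D(P_{\tilde{X}Y'Z}\|P_{\tilde{X}}P_{Y'}W)$ and $I(X\tilde{Y};\tilde{X}Z|Y')$ into the single divergence $D(P_{\tilde{X}Y'X\tilde{Y}Z}\|P_{\tilde{X}}P_{X\tilde{Y}Y'}W)\leq 2\eta$ and then marginalizes out $Y',Z$ to obtain $I(\tilde{X};X\tilde{Y})\leq 2\eta$. Your chain-rule decomposition through $Y'$ reaches the same $2\eta$ bound by a slightly more elementary path (bounding each summand separately rather than merging them into one KL), and your observations that (A), (C) are unused and that the stated constant $\eta$ should really be $2\eta$ are both accurate and match what the paper's referenced computation actually yields.
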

The proof of this Lemma follows from arguments in the proof of Lemma~\ref{lemma:disambiguity}. In particular, the claim follows from \eqref{disambeq:2}.
\begin{align*}
\cE_1(\vecy)\cap\cE&= \{\vecz:\mo\in A_1\cap A_2^c\}\cup\{\vecz: \mo\in A_2\cap A_3^c\}\\
&=\inp{\cup_{P_{X\tilde{Y}_1\tilde{Y}_2YZ}\in \cP_2^{\eta}}\cE_{\mo,2}(P_{X\tilde{Y}_1\tilde{Y}_2YZ})}\cup\inp{\cup_{P_{X\tilde{X}\tilde{Y}YZ}\in \cP_1^{\eta}}\cE_{\mo,1}(P_{X\tilde{X}\tilde{Y}YZ})} 
\end{align*}

Thus,
\begin{align}
P_{2b} =& \frac{1}{N_{\one}}\sum_{m_{\one}\in \cH^c}\inp{\sum_{P_{XYZ}\in \cD_{\eta}}\sum_{\vecz\in T^n_{Z|XY}(\vecx_{m_{\one}},\vecy)\cap\cE}W^n(\vecz|\vecx_{m_{\one}}, \vecy)}\nonumber\\
\leq&\frac{1}{N_{\one}}\sum_{\mo\in \cH^c}\sum_{P_{X\tilde{X}\tilde{Y}YZ}\in \cP_1^{\eta}}  W^n\inp{\cE_{\mo,1}(P_{X\tilde{X}\tilde{Y}YZ})|\vecx_{\mo}, \vecy} \nonumber\\
&\qquad \qquad+ \frac{1}{N_{\one}}\sum_{\mo\in \cH^c}\sum_{P_{X\tilde{Y}_1\tilde{Y}_2YZ}\in \cP_2^{\eta}}  W^n\inp{\cE_{\mo,2}(P_{X\tilde{Y}_1\tilde{Y}_2YZ})|\vecx_{\mo}, \vecy}. \label{err:upperbound}
\end{align}
We see that $|\cP_1^{\eta}|$ and $|\cP_2^{\eta}|$ are at most polynomial and clearly $|\cH^c|\leq N_{\one}$. So, it will  suffice to uniformly upper bound $W^n\inp{\cE_{\mo,1}(P_{X\tilde{X}Y\tilde{Y}Z})|\vecx_{\mo}, \vecy}$ and $W^n\inp{P_{X\tilde{Y}_1\tilde{Y}_2YZ})|\vecx_{\mo}, \vecy}$ by a term exponentially decreasing in $n$ for all $P_{X\tilde{X}Y\tilde{Y}Z}\in \cP_1^{\eta}$ and $P_{X\tilde{Y}_1\tilde{Y}_2YZ}\in \cP_2^{\eta} $. 
We start with the first term in the RHS of~\eqref{err:upperbound}. By using~\eqref{codebook:2b}, we see that for $P_{X\tilde{X}\tilde{Y}YZ}\in \cP_1^{\eta}$ such that
\begin{align*}
I\inp{X;\tilde{X}\tilde{Y}Y}> |R_{\one}- I(\tilde{X};\tilde{Y}Y)|^{+}+|R_{\two}-I(\tilde{Y};Y)|^{+}+\epsilon\,
\end{align*}
\begin{align*}
\frac{\left|\inb{\mo:(\vecx_{\mo}, \vecx_{\tilde{m}_{\one}}, \vecy_{\mt}, \vecy)\in T^n_{X\tilde{X}\tilde{Y}Y} \text{ for some }\tilde{m}_{\one}\neq \mo\text{ and some }\mt} \right|}{N_{\one}} \leq \exp\inb{-n\epsilon/2}.
\end{align*}
So, 
\begin{align*}
&\frac{1}{N_{\one}}\sum_{\mo\in \cH^c} W^n\inp{\cE_{\mo,1}(P_{X\tilde{X}Y\tilde{Y}Z})|\vecx_{\mo}, \vecy} \\
& = \frac{1}{N_{\one}}\sum_{\substack{\mo:(\vecx_{\mo}, \vecx_{\tilde{m}_{\one}}, \vecy_{\mt}, \vecy)\in T^n_{X\tilde{X}\tilde{Y}Y},\\ \tilde{m}_{\one}\in \cM_{\one},\tilde{m}_{\one}\neq \mo,\mt\in\cM_{\two}}} \sum_{\vecz\in T^{n}_{Z|X\tilde{X}Y\tilde{Y}}(\vecx_{\mo},\vecx_{\tilde{m}_{\one}},\vecy,\vecy_{\tilde{m}_{\two}})}W^n\inp{\vecz|\vecx_{\mo}, \vecy}\\
&\leq \exp\inb{-n\epsilon/2}.
\end{align*}
Thus, it is sufficient to consider distributions $P_{X\tilde{X}\tilde{Y}YZ}\in \cP_1^{\eta}$ for which 
\begin{align}
I\inp{X;\tilde{X}\tilde{Y}Y}\leq |R_{\one}- I(\tilde{X};\tilde{Y}Y)|^{+}+|R_{\two}-I(\tilde{Y};Y)|^{+}+\epsilon\label{eq:1}
\end{align}
For $P_{X\tilde{X}\tilde{Y}YZ}\in \cP_1^{\eta}$ satisfying~\eqref{eq:1},
\begin{align}
&\sum_{\vecz\in \cE_{\mo,1}(P_{X\tilde{X}Y\tilde{Y}Z})}W^n(\vecz|\vecx_{\mo}, \vecy)\nonumber\\
&\qquad\leq\sum_{\substack{\tilde{m}_{\one}, \tilde{m}_{\two}:\\(\vecx_{\mo}, \vecx_{\tilde{m}_{\one}}, \vecy_{\tilde{m}_{\two}},\vecy)\in T^{n}_{X\tilde{X}\tilde{Y}Y}}}\sum_{\vecz:(\vecx_{\mo}, \vecx_{\tilde{m}_{\one}}, \vecy_{\tilde{m}_{\two}},\vecy, \vecz)\in T^{n}_{X\tilde{X}\tilde{Y}YZ}}W^n(\vecz|\vecx_{\mo}, \vecy)\nonumber\\
&\qquad\leq \sum_{\substack{\tilde{m}_{\one}, \tilde{m}_{\two}:\\(\vecx_{\mo}, \vecx_{\tilde{m}_{\one}}, \vecy_{\tilde{m}_{\two}},\vecy)\in T^{n}_{X\tilde{X}\tilde{Y}Y}}}\frac{|T^{n}_{Z|X\tilde{X}\tilde{Y}Y}(\vecx_{\mo},\vecx_{\tilde{m}_{\one}}, \vecy_{\tilde{m}_{\two}}, \vecy)|}{|T^n_{Z|XY}(\vecx_{\mo},\vecy)|}\nonumber\\
&\qquad \leq \sum_{\substack{\tilde{m}_{\one}, \tilde{m}_{\two}:\\(\vecx_{\mo}, \vecx_{\tilde{m}_{\one}}, \vecy_{\tilde{m}_{\two}},\vecy)\in T^{n}_{X\tilde{X}\tilde{Y}Y}}}\frac{\exp\inp{nH(Z|X\tilde{X}\tilde{Y}Y)}}{(n+1)^{-|\cX||\cY||\cZ|}\exp\inp{nH(Z|XY)}}\nonumber\\
&\qquad \leq \sum_{\substack{\tilde{m}_{\one}, \tilde{m}_{\two}:\\(\vecx_{\mo}, \vecx_{\tilde{m}_{\one}}, \vecy_{\tilde{m}_{\two}},\vecy)\in T^{n}_{X\tilde{X}\tilde{Y}Y}}}\exp\inp{-n\inp{I(Z;\tilde{X}\tilde{Y}|XY)-\epsilon}} \text{ for large }n.\nonumber\\
&\qquad\stackrel{\text{(a)}}{\leq}\exp\inp{n\inp{|R_{\one}- I(\tilde{X};\tilde{Y}XY)|^{+}+|R_{\two}-I(\tilde{Y};XY)|^{+}-I(Z;\tilde{X}\tilde{Y}|XY)+2\epsilon}}\label{eq:upperbound2}
\end{align}
where (a) follows using~\eqref{codebook:3b}. We will separately consider the following cases which together cover all possibilities.
\begin{enumerate}
	\item $R_{\one}\leq I(\tilde{X};\tilde{Y}Y)$ and $R_{\two}\leq I(\tilde{Y};Y)$ \label{case1}
	\item $  I(\tilde{X};\tilde{Y}Y)<R_{\one}$ and $R_{\two}\leq I(\tilde{Y};XY)$\label{case2}
	\item $R_{\one}\leq  I(\tilde{X};\tilde{Y}XY) $ and $I(\tilde{Y};Y)<R_{\two}$\label{case3}
	\item $I(\tilde{X};\tilde{Y}XY)<R_{\one} $ and $I(\tilde{Y};XY)<R_{\two}$\label{case4}
\end{enumerate}

\noindent\underline{Case \ref{case1}: $R_{\one}\leq I(\tilde{X};\tilde{Y}Y)$ and $R_{\two}\leq I(\tilde{Y};Y)$}\\
In this case, \eqref{eq:1} implies that $I(X;\tilde{X}\tilde{Y}Y) \leq \epsilon$ Thus, using the condition $I(XZ;\tilde{X}\tilde{Y}|Y)\geq \eta$ from definition of $\cP_1^{\eta}$, we see that
\begin{align*}
I(Z;\tilde{X}\tilde{Y}|XY) &= I(XZ;\tilde{X}\tilde{Y}|Y)-I(X;\tilde{X}\tilde{Y}|Y)\\
&{\geq} \eta-\epsilon.
\end{align*} 
This implies that
\begin{align*}
\sum_{\vecz\in \cE_{\mo,1}(P_{X\tilde{X}Y\tilde{Y}Z})}W^n(\vecz|\vecx_{\mo}, \vecy)&\leq \exp\inp{-n\inp{\eta-3\epsilon}}\\
&\rightarrow 0\text{ because }\eta>3\epsilon.
\end{align*}

\noindent \underline{Case ~\ref{case2}: $  I(\tilde{X};\tilde{Y}Y)<R_{\one}$ and $R_{\two}\leq I(\tilde{Y};XY)$}\\
 Using~\eqref{eq:1}, we have
 \begin{align*}
 R_{\one} - I(\tilde{X};\tilde{Y}Y)-I(X;\tilde{X}\tilde{Y}Y)+\epsilon\geq -|R_{\two}-I(\tilde{Y};Y)|^{+},\\
 R_{\one} - I(\tilde{X};\tilde{Y}XY)+\epsilon\geq I(X;\tilde{Y}Y) -|R_{\two}-I(\tilde{Y};Y)|^{+}.
 \end{align*}
 We will argue that the RHS is non-negative. When $R_{\two}\leq I(\tilde{Y};Y)$, RHS is $I(X;\tilde{Y}Y)$ which is non-negative. When $I(\tilde{Y};Y)<R_{\two}\leq I(\tilde{Y};XY)$
 \begin{align*}
  I(X;\tilde{Y}Y) -|R_{\two}-I(\tilde{Y};Y)|^{+} &=I(X;\tilde{Y}Y) -R_{\two}+I(\tilde{Y};Y)\\
  & = I(X;Y)+I(X;\tilde{Y}|Y)-R_{\two} +I(\tilde{Y};Y)\\
  &=I(\tilde{Y};XY)-R_{\two}+I(X;Y)\geq 0.
  \end{align*} 
 So, again the RHS is non-negative and $ R_{\one} \geq  I(\tilde{X};\tilde{Y}XY)-\epsilon$. Hence $ |R_{\one} -  I(\tilde{X};\tilde{Y}XY)|^{+}$ $\leq R_{\one} -  I(\tilde{X};\tilde{Y}XY) +\epsilon$. Thus, 
 
 \begin{align}
 &\sum_{\vecz\in \cE_{\mo,1}}W^n(\vecz|\vecx_{\mo}, \vecy)\leq \exp\inp{n\inp{R_{\one}-I(\tilde{X};\tilde{Y}XY) -I(Z;\tilde{X}\tilde{Y}|XY)+3\epsilon}}\nonumber \\
 &\qquad \qquad= \exp\inp{n\inp{R_{\one}-I(\tilde{X};Z\tilde{Y}XY) -I(Z;\tilde{Y}|XY)+3\epsilon}}\nonumber \\
 &\qquad \qquad\leq \exp\inp{n\inp{R_{\one}-I(\tilde{X};Z\tilde{Y}) +3\epsilon}}
 \end{align}
Taking limit $\cP^{\eta}_1\rightarrow \cP_1^{0}$, we get the following rate bound
 \begin{align}
 R_{\one}\leq \min_{P_{X\bar{X}\bar{Y}YZ}\in \cP_1^0, \bar{X}\indep\bar{Y}}{I(\bar{X};Z|\bar{Y})}\label{eq:rbound1}
\end{align}

Thus, 

\noindent \underline{Case~\ref{case3} $R_{\one}\leq  I(\tilde{X};\tilde{Y}XY) $ and $I(\tilde{Y};Y)<R_{\two}$}\\
Using~\eqref{eq:1}, we obtain that 
 \begin{align*}
 R_{\two} - I(\tilde{Y};Y)-I(X;\tilde{X}\tilde{Y}Y)+\epsilon\geq -|R_{\one}-I(\tilde{X};\tilde{Y}Y)|^{+},\\
 R_{\two} - I(\tilde{Y};XY)+\epsilon\geq I(X;Y)+I(X;\tilde{X}|\tilde{Y}Y) -|R_{\one}-I(\tilde{X};\tilde{Y}Y)|^{+}.
 \end{align*}

We will argue that RHS is non-negative. When $R_{\one}\leq I(\tilde{X};\tilde{Y}Y)$, it is clearly true. When $I(\tilde{X};\tilde{Y}Y)<R_{\one} \leq I(\tilde{X};\tilde{Y}XY)$, then 
\begin{align*}
I(X;\tilde{X}|\tilde{Y}Y) -|R_{\one}-I(\tilde{X};\tilde{Y}Y)|^{+} &=I(X;\tilde{X}|\tilde{Y}Y) -R_{\one}+I(\tilde{X};\tilde{Y}Y)\\
&=I(\tilde{X};\tilde{Y}XY)-R_{\one} \geq 0.
\end{align*}
Thus, for $R_{\one}\leq  I(\tilde{X};\tilde{Y}XY) $ and $I(\tilde{Y};Y)<R_{\two}$, $R_{\two} - I(\tilde{Y};XY)+\epsilon\geq0$. This imples that $|R_{\two} - I(\tilde{Y};XY)|^+\leq R_{\two} - I(\tilde{Y};XY) +\epsilon$. So,
 \begin{align}
 &\sum_{\vecz\in \cE_{\mo,1}(P_{X\tilde{X}Y\tilde{Y}Z})}W^n(\vecz|\vecx_{\mo}, \vecy)\leq \exp\inp{n\inp{R_{\two}-I(\tilde{Y};XYZ)-I(Z;\tilde{X}|XY\tilde{Y})+3\epsilon}}\nonumber \\
 &\qquad\qquad\qquad\qquad \rightarrow 0\nonumber \\
 &\text{if }R_{\two}<I(\tilde{Y};XYZ)+I(Z;\tilde{X}|XY\tilde{Y})-3\epsilon.\nonumber
 \end{align}

Thus, 
 \begin{align}
 R_{\two}\leq \min_{P_{XY\bar{X}\bar{Y}Z}\in P_1^0, X\indep\bar{Y}}{I(\bar{Y};Z|X)}\label{eq:rbound2}
\end{align}

\noindent\underline{Case 4: $I(\tilde{X};\tilde{Y}XY)<R_{\one} $ and $I(\tilde{Y};XY)<R_{\two}$}\\
 \begin{align*}
 &\sum_{\vecz\in \cE_{\mo,1}(P_{X\tilde{X}Y\tilde{Y}Z})}W^n(\vecz|\vecx_{\mo}, \vecy)\leq \exp\inp{n\inp{R_{\one}- I(\tilde{X};\tilde{Y}XY)+R_{\two}-I(\tilde{Y};XY)-I(Z;\tilde{X}\tilde{Y}|XY)+2\epsilon}}\\
 &\qquad \qquad\leq \exp\inp{n\inp{R_{\one}+R_{\two}-I(\tilde{X}\tilde{Y};XYZ)-I(\tilde{X};\tilde{Y})+3\epsilon}} \\
 &\qquad\qquad\rightarrow 0 \\
 &\text{if }R_{\one}+R_{\two}<I(\tilde{X}\tilde{Y};XYZ)+I(\tilde{X};\tilde{Y})-3\epsilon.
 \end{align*}
Thus, 
 \begin{align}
 R_{\one}+R_{\two}\leq \min_{P_{XY\bar{X}\bar{Y}Z}\in P_1^0, \bar{X}\indep\bar{Y}}{I(\bar{X}\bar{Y};Z)}\label{eq:rbound3}
\end{align}

Collecting \eqref{eq:rbound1}, \eqref{eq:rbound2} and \eqref{eq:rbound3}, the first term in the RHS of ~\eqref{err:upperbound} goes to zero as $n\rightarrow \infty$ if:
\begin{align}
R_{\one}&\leq \min_{P_{XY\bar{X}\bar{Y}Z}\in P_1^0, \bar{X}\indep\bar{Y}}{I(\bar{X};Z|\bar{Y})}\label{eq:Rbound1}\\
R_{\two}&\leq \min_{P_{XY\bar{X}\bar{Y}Z}\in P_1^0, X\indep\bar{Y}}{I(\bar{Y};Z|X)}\label{eq:Rbound2}\\
R_{\one}+R_{\two}&\leq \min_{P_{XY\bar{X}\bar{Y}Z}\in P_1^0, \bar{X}\indep\bar{Y}}{I(\bar{X}\bar{Y};Z)}\label{eq:Rbound3}
\end{align}
where $P_1^{0}$ is
\begin{align*}
\cP_1^0& =  \{P_{X\tilde{X}\tilde{Y}YZ}\in \cP^n_{\cX\times\cY\times\cX\times\cY\times\cZ}: P_{XYZ}\in \cD_{0},\, P_{\tilde{X}Y'Z}\in \cD_{0} \text{ for some }Y',\\
&\qquad  \, P_{X'\tilde{Y}Z}\in \cD_{0} \text{ for some }X',\, P_{X}=P_{\tilde{X}}=P_{\one}, P_{\tilde{Y}} = P_{\two}\text{ and }I(\tilde{X};\tilde{Y})=0,\,I(X;\tilde{Y})=0\}
\end{align*}

Now, we move on to the second term in the RHS of~\eqref{err:upperbound}. We see that by using~\eqref{codebook:4}, it is sufficient to consider distribution $P_{XY\tilde{Y}_1\tilde{Y}_2Z}\in \cP_2^\eta$ for which 
\begin{align}
I\inp{X;\tilde{Y}_1\tilde{Y}_2Y}\leq|R_{\two}-I(\tilde{Y}_1;Y)|^{+}+|R_{\two}-I(\tilde{Y}_2;\tilde{Y}_1 Y)|^{+} +\epsilon.\label{eq:2}
\end{align}
For $P_{XY\tilde{Y}_1\tilde{Y}_2Z}\in \cP_2^\eta$ satisfying~\eqref{eq:2},

\begin{align}
&\sum_{\vecz\in \cE_{\mo,\vecy, 2}(P_{XY\tilde{Y}_1\tilde{Y}_2Z})}W^n(\vecz|\vecx_{\mo}, \vecy)\nonumber\\
&\qquad\leq\sum_{\substack{\tilde{m}_{\two 1},  \tilde{m}_{\two 2}:\\(\vecx_{\mo}, \vecy_{\tilde{m}_{\two 1}}, \vecy_{\tilde{m}_{\two 2}},\vecy)\in T^{n}_{X\tilde{Y}_1\tilde{Y}_2Y}}}\sum_{\vecz:(\vecx_{\mo}, \vecy_{\tilde{m}_{\two 1}}, \vecy_{\tilde{m}_{\two 2}},\vecy, \vecz)\in T^{n}_{X\tilde{Y}_1\tilde{Y}_2YZ}}W^n(\vecz|\vecx_{\mo}, \vecy)\nonumber\\
&\qquad\leq \sum_{\substack{\tilde{m}_{\two 1}, \tilde{m}_{\two 2}:\\(\vecx_{\mo}, \vecy_{\tilde{m}_{\two 1}}, \vecy_{\tilde{m}_{\two 2}},\vecy)\in T^{n}_{X\tilde{Y}_1\tilde{Y}_2Y}}}\frac{|T^{n}_{Z|X\tilde{Y}_1\tilde{Y}_2Y}(\vecx_{\mo},\vecy_{\tilde{m}_{\two 1}}, \vecy_{\tilde{m}_{\two 2}}, \vecy)|}{|T^n_{Z|XY}(\vecx_{\mo},\vecy)|}\nonumber\\
&\qquad \leq \sum_{\substack{\tilde{m}_{\two 1}, \tilde{m}_{\two 2}:\\(\vecx_{\mo}, \vecy_{\tilde{m}_{\two 1}}, \vecy_{\tilde{m}_{\two 2}},\vecy)\in T^{n}_{X\tilde{Y}_1\tilde{Y}_2Y}}}\frac{\exp\inp{nH(Z|X\tilde{Y}_1\tilde{Y}_2Y)}}{(n+1)^{-|\cX||\cY||\cZ|}\exp\inp{nH(Z|XY)}}\nonumber\\
&\qquad \leq \sum_{\substack{\tilde{m}_{\two 1}, \tilde{m}_{\two 2}:\\(\vecx_{\mo}, \vecy_{\tilde{m}_{\two 1}}, \vecy_{\tilde{m}_{\two 2}},\vecy)\in T^{n}_{X\tilde{Y}_1\tilde{Y}_2Y}}}\exp\inp{-n\inp{I(Z;\tilde{Y}_1\tilde{Y}_2|XY)-\epsilon}} \text{ for large }n.\nonumber\\
&\qquad\stackrel{\text{(a)}}{\leq}\exp\inp{n\inp{|R_{\two}-I(\tilde{Y}_1;XY)|^{+}+|R_{\two}-I(\tilde{Y}_2;\tilde{Y}_1 XY)|^{+}-I(Z;\tilde{Y}_1\tilde{Y}_2|XY)+2\epsilon}},\label{eq:upperbound3}
\end{align}
where (a) follows using~\eqref{codebook:5}.\\

Note that, in the analysis of first term in the RHS of~\eqref{err:upperbound}, if we replace $R_{\one}$ with $R_{\two}$, $\tilde{Y}$ with $\tilde{Y}_1$ and $\tilde{X}$ with $\tilde{Y}_2$, \eqref{eq:upperbound2} changes to \eqref{eq:upperbound3} and the conditions on the distribution \eqref{eq:1} to \eqref{eq:2}. We see that~\eqref{eq:upperbound3} goes to zero when the following hold (cf. \eqref{eq:Rbound1},\eqref{eq:Rbound2},\eqref{eq:Rbound3}):
\begin{align*}
	R_{\two}&<I(\tilde{Y}_{2};Z\tilde{Y}_1XY) +I(Z;\tilde{Y}_1|XY)-3\epsilon\\
	R_{\two}&<I(\tilde{Y}_1;XYZ)+I(Z;\tilde{Y}_2|XY\tilde{Y}_1)-3\epsilon\\
	2R_{\two}&<I(\tilde{Y}_2\tilde{Y}_1;XYZ)+I(\tilde{Y}_2;\tilde{Y}_1)-3\epsilon
\end{align*}
For 
\begin{align*}
\cP_2^0& = \{P_{XY\tilde{Y}_1\tilde{Y}_2Z}\in \cP^n_{\cX\times\cY\times\cY\times\cY\times\cZ}: P_{XYZ}\in \cD_{0},\, P_{X'_1\tilde{Y}_1Z}\in \cD_{0}\text{ for some }X'_1, \\
&\qquad \, P_{X'_2\tilde{Y}_2Z}\in \cD_{0} \text{ for some }X'_2, \,   P_{X}=P_{\one}, P_{\tilde{Y}_1}=P_{\tilde{Y}_2} = P_{\two}\}
\end{align*}

This gives us the following rate bounds
\begin{align}
R_{\two}&\leq \min_{P_{XY\bar{Y}_1\bar{Y}_2Z}\in P_2^0}{I(\bar{Y}_2;Z)}\label{eq:Rbound4}\\
R_{\two}&\leq \min_{P_{XY\bar{Y}_1\bar{Y}_2Z}\in P_2^0}{I(\bar{Y}_1;Z)}\label{eq:Rbound5}\\
2R_{\two}&\leq \min_{P_{XY\bar{Y}_1\bar{Y}_2Z}\in P_2^0}{I(\bar{Y}_2;Z)+I(\bar{Y}_2;Z)}\label{eq:Rbound6}
\end{align}
When user \one is malicious, error will occur either in {\em Step 1} or {\em Step 3} or {\em Step 5}. Error will not happen in {\em Step 1} w.h.p. because of typicality. For {\em Step 3} and {\em Step 5}, we wil get bounds of the form \eqref{eq:Rbound1}, \eqref{eq:Rbound2} and \eqref{eq:Rbound3}. This is because we only consider the candidates which have passes {\em Step 2}. Hence, we get independence conditions from Lemma~\ref{lemma:indep}.

Thus, combining \eqref{eq:Rbound1}, \eqref{eq:Rbound2}, \eqref{eq:Rbound3}, \eqref{eq:Rbound4}, \eqref{eq:Rbound5},\eqref{eq:Rbound6} and bounds from the case when user \one is malicious, we get the following rate region
\\
Let $\cP$ be the set of distribution 
\begin{align*}
\cP = \{P_{XY'X'YZ}: P_{XY'Z}=P_{\one}P_{Y'}W, \,P_{X'YZ}=P_{X'}P_{\two}W, \,X\indep Y\}
\end{align*}
\begin{align*}
R_{\one}&\leq \min_{P_{XY'X'YZ}\in \cP}I(X;Z|Y)\\
R_{\two}&\leq \min_{P_{XY'X'YZ}\in \cP}I(Y;Z)
\end{align*}
This gives us one corner point (given by \eqref{eq:inner_bd_2}) of the rate region, we get the other corner point (given by \eqref{eq:inner_bd_1}) by  changing the order of decoding by performing {\em Step 3} before {\em Step 2}.

\end{proof}

\section{Proof of Theorem~\ref{thm:outer_bd}}\label{sec:outer_bd_proof}
Consider an $(N_{\one}, N_{\two}, n)$ adversary identifying code $(F^{(n)}_{\one},F^{(n)}_{\two}, \Phi^{(n)})$ (with potential shared randomness between the encoder and the decoder) such that $P_{e}(F^{(n)}_{\one},F^{(n)}_{\two}, \Phi^{(n)}) \leq \epsilon(n)$ where $\epsilon(n)\rightarrow 0$ as $n\rightarrow 0$.
For for all $i\in [1:n]$, let $(Q^i_{X'|X}, Q^i_{Y'|Y})$ be an arbitrary sequence of pairs of channel distributions satisfying \eqref{eq:outer_bound}. Define $\tilde{W}_{i}$ as 
\begin{align*}
\tilde{W}_{i}(z|x,y) \defineqq \sum_{x'}Q^{i}_{X'|X}(x'|x)W(z|x',y) = \sum_{y'}Q^{i}_{Y'|Y}(y'|y)W(z|x,y') 
\end{align*}
for all $x,y,z$. Let $Q^{(n)}_{X'|X}\defineqq \prod_{i=1}^{n}Q^{i}_{X'|X}$, $Q^{(n)}_{Y'|Y}\defineqq \prod_{i=1}^{n}Q^{i}_{Y'|Y}$ and $\tilde{W}^{(n)} = \prod_{i=1}^{n}\tilde{W}_{i}$.

Then, 
\begin{align*}
P_{e, \malone} \geq \frac{1}{N_{\one}\cdot N_{\two}}\sum_{\mo, \mt}\sum_{\vecx}Q^{(n)}_{X'|X}(\vecx|F^{(n)}_{\one}(\mo))W^n\inp{\inb{\vecz:\Phi^{(n)}(\vecz) = \twob}\Big|\vecx,F_{\two}^{(n)}(\mt)}
\end{align*} and

\begin{align*}
P_{e, \maltwo} \geq \frac{1}{N_{\one}\cdot N_{\two}}\sum_{\mo, \mt}\sum_{\vecy}Q^{(n)}_{Y'|Y}(\vecx|F^{(n)}_{\two}(\mt))W^n\inp{\inb{\vecz:\Phi^{(n)}(\vecz) = \oneb}\Big|F_{\one}^{(n)}(\mo), \vecy}.
\end{align*}
Using these two equations, we get
\begin{align*}
2\epsilon(n)&\geq P_{e, \malone} + P_{e, \maltwo} \geq \frac{1}{N_{\one}\cdot N_{\two}}\sum_{\mo, \mt}\Big(\sum_{\vecx}Q^{(n)}_{X'|X}(\vecx|F^{(n)}_{\one}(\mo))W^n\inp{\inb{\vecz:\Phi^{(n)}(\vecz) = \twob}\Big|\vecx,F_{\two}^{(n)}(\mt)}\\
&\qquad+ \sum_{\vecy}Q^{(n)}_{Y'|Y}(\vecy|F^{(n)}_{\two}(\mt))W^n\inp{\inb{\vecz:\Phi^{(n)}(\vecz) = \oneb}\Big|F_{\one}^{(n)}(\mo), \vecy}\Big)\\
& = \frac{1}{N_{\one}\cdot N_{\two}}\sum_{\mo, \mt}\tilde{W}^{(n)}\inp{\inb{\vecz:\Phi^{(n)}(\vecz) \in \{\oneb, \twob\}}\Big|F^{(n)}_{\one}(\mo),F_{\two}^{(n)}(\mt)}.
\end{align*}
Thus, 
\begin{align*}
&\frac{1}{N_{\one}\cdot N_{\two}}\sum_{\mo, \mt}\tilde{W}^{(n)}\inp{\inb{\vecz:\Phi^{(n)}(\vecz) \neq (\mo, \mt)}|F^{(n)}_{\one}(\mo),F_{\two}^{(n)}(\mt)} \\
&= \frac{1}{N_{\one}\cdot N_{\two}}\sum_{\mo, \mt}\tilde{W}^{(n)}\inp{\inb{\vecz:\Phi^{(n)}(\vecz) \in \cM_{\one}\times\cM_{\two}\setminus\{(\mo, \mt)\}}|F^{(n)}_{\one}(\mo),F_{\two}^{(n)}(\mt)}\\
&\qquad + \frac{1}{N_{\one}\cdot N_{\two}}\sum_{\mo, \mt}\tilde{W}^{(n)}\inp{\inb{\vecz:\Phi^{(n)}(\vecz) \in \{\oneb, \twob\}}\Big|F^{(n)}_{\one}(\mo),F_{\two}^{(n)}(\mt)}\\
&\leq \epsilon(n) + 2\epsilon(n)\\
&= 3\epsilon(n).
\end{align*}
Recall that every pair $(Q_{X'|X}, Q_{Y'|Y})$ satisfying \eqref{eq:outer_bound} corresponds to an element in $\tilde{\cW}_W$ which is a convex set (see the discussion in Section~\ref{sec:outer_bd}). Thus, any adversary identifying code for the MAC $W$ with probability of error $\epsilon(n)$ is also a communication code for the AV-MAC $\tilde{\cW}_{W}$ with probability of error at most $3\epsilon(n)$. So, capacity region of $W$ is outer bounded by the capacity region of the AV-MAC $\tilde{\cW}_{W}$. 

The capacity of an AV-MAC only depends on its convex hull \cite{Jahn81}. So, capacity of $\tilde{\cW}_{W}$ is same as capacity of another AV-MAC $\cW_{W}$ which consists of vertices of the convex polytope $\tilde{\cW}_{W}\subseteq \bbR^{|\cX|\times|\cY|\times|\cZ|}$.
The elements in the set $\tilde{\cW}_{W}$ are parameterized by $(Q_{X'|X}, Q_{Y'|Y})$ pairs. It consists of the vertices of the polytope formed using constraints in \eqref{eq:outer_bound} and constraints of the form: (1) $\sum_{x'}P_{X'|X}(x'|x) = 1$ for all $x$, and (2) $P_{X'|X}(x'|x)\geq 0$. There are similar constraints for $P_{Y'|Y}$. Note that there are $|X|^2 + |Y|^2$ inequality constraints. Every point in the resulting polytope satisfies all the equality constraints. We will get faces, edges, vertices etc. depending on the number of additional inequality constraints satisfied at that point. Thus, number of vertices $\leq 2^{|X|^2 + |Y|^2}$.

\section{Examples}
\subsection{Tightness of inner bound for the Binary Erasure MAC}\label{sec:examples}
Recall that for distributions $P_{\one}$ and $P_{\two}$  over $\cX$ and $\cY$, $\cP(P_{\one}, P_{\two}) = \{P_{XY\tilde{X}\tilde{Y}Z}: P_{X\tilde{Y}Z}=P_{\one}\times P_{\tilde{Y}}\times W \text{ for some }P_{\tilde{Y}} \text{ and }P_{\tilde{X}YZ} = P_{\tilde{X}}\times P_{\two}\times W \text{ for some }P_{\tilde{X}}\}$.
Consider $P_{XY\tilde{X}\tilde{Y}Z}\in \cP(P_{\one}, P_{\two})$. 
\begin{align}\label{eq:ex_1}
\bbP(Z = 0) = P_{\one}(0)P_{\tilde{Y}}(0) = P_{\tilde{X}}(0)P_{\two}(0).
\end{align}
\begin{align*}
\bbP(Z=2) &= (1-P_{\one}(0))(1-P_{\tilde{Y}}(0)) =  (1-P_{\tilde{X}}(0))(1-P_{\two}(0))\\
& = 1+ P_{\one}(0)P_{\tilde{Y}}(0) -P_{\one}(0)-P_{\tilde{Y}}(0) = 1+ P_{\tilde{X}}(0)P_{\two}(0) -P_{\tilde{X}}(0)-P_{\two}(0).
\end{align*}
Using \eqref{eq:ex_1}, we get $P_{\one}(0)+P_{\tilde{Y}}(0) = P_{\tilde{X}}(0)+P_{\two}(0)$. Thus, 
\begin{align}\label{eq:ex_2}
P_{\tilde{X}}(0) = P_{\one}(0)+P_{\tilde{Y}}(0) - P_{\two}(0).
\end{align}
Substituting this in \eqref{eq:ex_1}, we get $P_{\one}(0)P_{\tilde{Y}}(0) = P_{\one}(0)P_{\two}(0)+ P_{\tilde{Y}}(0)P_{\two}(0)- P_{\two}(0)P_{\two}(0).$ This implies that 
\begin{align*}
\inp{P_{\one}(0)-P_{\two}(0)}\inp{P_{\tilde{Y}}(0) -P_{\two}(0)} = 0.
\end{align*}
Thus, either $P_{\one}(0)=P_{\two}(0)$ or $P_{\tilde{Y}}(0) = P_{\two}(0)$.
Substituting this in \eqref{eq:ex_2}, we get either $P_{\one}(0)=P_{\two}(0)$ and $P_{\tilde{X}}(0)=P_{\tilde{Y}}(0)$, or $P_{\tilde{Y}}(0) = P_{\two}(0)$ and $P_{\tilde{X}}(0) = P_{\one}(0)$. If we choose $P_{\one}$ and $P_{\two}$ such that $P_{\one}\neq P_{\two}$, then for every $P_{XY\tilde{X}\tilde{Y}Z}\in \cP(P_{\one}, P_{\two})$, $P_{\tilde{Y}} = P_{Y} = P_{\two}$ and $P_{\tilde{X}} = P_{X} = P_{\one}$. 

We know from the definition of $\cP(P_{\one}, P_{\two})$, that $X\indep \tilde{Y}$ and $\tilde{X}\indep Y$. We now analyse the case when there is further restriction of $X\indep Y$ on the distributions. From the definition of $\cP(P_{\one}, P_{\two})$, we note that $P_{\tilde{X}Y|X\tilde{Y}}(0,0|0,0) = 1$ and $P_{\tilde{X}Y|X\tilde{Y}}(1,1|1,1) = 1$.  Let $P_{\tilde{X}Y|X\tilde{Y}}(0,1|0,1) = \alpha$ and $P_{\tilde{X}Y|X\tilde{Y}}(1,0|0,1) = 1-\alpha$ (Note that $P_{\tilde{X}Y|X\tilde{Y}}((0,0) |0,1)) = P_{\tilde{X}Y|X\tilde{Y}}((1,1) |0,1)) = 0$ by definition of $\cP(P_{\one}, P_{\two})$). Similarly, let $P_{\tilde{X}Y|X\tilde{Y}}(1, 0|1,0) = \beta$ and $P_{\tilde{X}Y|X\tilde{Y}}(0, 1|1,0) = 1-\beta$. 
Thus, $P_{XY}(0,0) = P_{X\tilde{Y}}(0, 0)P_{\tilde{X}Y|X\tilde{Y}}(0,0|0,0) + P_{X\tilde{Y}}(0, 1)P_{\tilde{X}Y|X\tilde{Y}}(1,0|0,1)= P_{X}(0)P_{\tilde{Y}}(0)\cdot 1 + P_{X}(0)P_{\tilde{Y}}(1)\cdot (1-\alpha)$. Also, $P_{XY}(0,0) = P_{X}(0)P_Y(0) = P_{X}(0)P_{\tilde{Y}}(0)$ (The last equality follows from the choice of $P_X$ and $P_Y \neq P_X$). This implies that $\alpha = 1$. By evaluating  $P_{XY}(1,1)$, we can show that $\beta = 1$. This implies that $\tilde{X} = X$ and $\tilde{Y} = Y$.

We choose $P_{\one}$ and $P_{\two}$ arbitrarily close to uniform distributions such that $P_{\one}\neq P_{\two}$. Following the arguments above, it is easy to see that the rate pairs given by \eqref{eq:inner_bd_1} and \eqref{eq:inner_bd_2} are arbitrarily close to $(0.5, 1)$ and $(1, 0.5)$ respectively.

\subsection{Binary erasure MAC is not spoofable}\label{sec:BEC_not_spoof}
Suppose the channel is  \one-{\em \spoofable}, that is, there exist distributions ${Q_{Y|\tilde{X},\tilde{Y}}}$ and ${Q_{X|\tilde{X},X'}}$ such that $\forall\,x', \,\tilde{x},\, \tilde{y},\, z,$
\begin{align*}
&\sum_{y}Q_{Y|\tilde{X},\tilde{Y}}(y|\tilde{x},\tilde{y})\mch(z|x',y) \nonumber\\
&= \sum_{y}Q_{Y|\tilde{X},\tilde{Y}}(y|x',\tilde{y})\mch(z|\tilde{x},y) \nonumber\\
& = \sum_{x}Q_{X|\tilde{X},X'}(x|\tilde{x},x')\mch(z|x,\tilde{y}).
\end{align*}
For $(x', \tilde{x}, \tilde{y}, z) = (1, 0, 1, 2)$, this gives $Q_{Y|\tilde{X}\tilde{Y}}(1|0,1) = 0 = Q_{X|\tilde{X}X'}(1|0,1)$ and for $(x', \tilde{x}, \tilde{y}, z) = (1, 0, 0, 0)$, we get $0 = Q_{Y|\tilde{X}\tilde{Y}}(0|1,0) = Q_{X|\tilde{X}X'}(0|0,1)$. However, $Q_{X|\tilde{X}X}(1|0,1) = Q_{X|\tilde{X}X}(0|0,1) = 0$ is not possible. Thus, the channel is not \one-\spoofable. Similarly, we can show that the channel is not \two-\spoofable.

\subsection{Binary additive MAC is not overwritable}\label{sec:BAC_not_over}
Suppose binary additive MAC $Z = X\oplus Y$ is  \two-overwritable. Let $P_{X'|X,Y}$ be the overwriting attack by user \one which satisfies \eqref{eq:overwritable}. Then for all $y'$
\begin{align*}
P_{X'|X,Y}(0|1,1)W(0|0,y')+P_{X'|X,Y}(1|1,1)W(0|1,y') = W(0|1,1) = 1.
\end{align*}
For $y' = 0$ and $1$, this implies that $P_{X'|X,Y}(0|1,1)=1$ and $P_{X'|X,Y}(1|1,1)=1$ respectively, which is not possible simultaneously. Thus, the channel cannot be \two-overwritable. Similarly, we can argue that the channel is not \one-overwritable.

\subsection{Capacity of $(Z_1,Z_2)=(X_1+Y_1, X_2\oplus Y_2)$ under different decoding guarantees}\label{sec:ex3}
We will first show that this channel is \two-symmetrizable, that is, there exists distribution $P_{X|Y}$ such that
\begin{align*}
\sum_{x'\in \cX}P_{X|Y}(x|y')W(z|x,y) = \sum_{x'\in \cX}P_{X|Y}(x|y)W(z|x,y')
\end{align*}
for all $x,y,z$.
Consider $P_{X|Y}((x_1,x_2)|(y_1, y_2)) = 1$ when $(x_1,x_2)= (y_1, y_2)$. Then for $y' = (y_1', y_2')$, $y = (y_1, y_2)$ and $z = (y_1' + y_1, y_2'\oplus y_2)$, both LHS and RHS of the above equation evaluate to $1$, and for every other $z$, they evaluate to $0$. So, the channel is \two-symmetrizable. Similarly, we can show that the channels in \one-symmetrizable. 

Next, we show that this channel is not overwritable. Suppose the channel is \two-overwritable. Let $P_{X'|X,Y}$ be the overwriting attack by user \one which satisfies \eqref{eq:overwritable}. Then for all $(y'_1, y_2')$,
\begin{align*}
\sum_{(x_1',x_2')}P_{X'|X,Y}((x_1',x_2')|(1,1),(1,1))W((2,0)|(x_1',x_2'),(y'_1, y_2'))= W((2,0)|(1,1),(1,1)).
\end{align*}
However, for $(y_1',y_2') = (0,0)$, LHS evaluates to $0$ whereas RHS evaluates to $1$. Hence, the channel is not \two-overwritable. Similarly, we can show that the channel is not \one-overwritable. 

For the capacity region $\cC$, continuing the discussion in Section~\ref{sec:examples} (following Example~\ref{ex:3}), the given attack distributions satisfying \eqref{eq:outer_bound}, gives an outer bound which is the capacity of the binary erasure MAC. This outer bound is also achievable using an adversary identifying code for the binary erasure channel in the first component $Z_1 = X_1+Y_1$. The inputs $X_2$ and $Y_2$ can be chosen arbitrarily.

}
\end{document}